\documentclass[11pt]{article}
\usepackage[letterpaper,margin=1in]{geometry}
\usepackage{cmap}
\usepackage[T1]{fontenc}
\usepackage[utf8]{inputenc}
\usepackage[english]{babel}
\usepackage{lmodern}
\usepackage{amsmath,amssymb,amsthm,mathtools}
\usepackage{microtype}
\usepackage{booktabs}
\usepackage{enumitem}
\usepackage{xcolor}
\usepackage{xspace}
\usepackage{mathtools}

\usepackage{interval}
\intervalconfig{soft open fences}

\newcommand{\braket}[2]{\left< #1 \vphantom{#2} \middle| #2 \vphantom{#1} \right>} 
\newcommand{\ketbra}[2]{\ensuremath{\ket{#1}\bra{#2}}}

\DeclarePairedDelimiter\rbra{\lparen}{\rparen}
\DeclarePairedDelimiter\sbra{\lbrack}{\rbrack}
\DeclarePairedDelimiter\cbra{\{}{\}}
\DeclarePairedDelimiter\abs{\lvert}{\rvert}
\DeclarePairedDelimiter\Abs{\lVert}{\rVert}
\DeclarePairedDelimiter\ceil{\lceil}{\rceil}

\let\ket\relax
\DeclarePairedDelimiter\ket{\lvert}{\rangle}
\let\bra\relax
\DeclarePairedDelimiter\bra{\langle}{\rvert}

\newcommand{\Tr} {\operatorname{Tr}}

\newcommand{\supp} {\operatorname{supp}}

\usepackage{amsmath,amssymb,amsthm,amsfonts,latexsym,mathtools,mathdots,bbm,graphicx,float}
\usepackage{caption,subcaption,ellipsis,xcolor,textcomp,thmtools,thm-restate,mleftright}
\usepackage{algorithm}
\usepackage{algpseudocode}
\usepackage{needspace,etoolbox}
\usepackage[pagebackref,colorlinks=true,citecolor=blue]{hyperref}
\usepackage[capitalize,nameinlink]{cleveref}
\newtheorem{theorem}{Theorem}[section]

\newtheorem{lemma}[theorem]{Lemma}

\newtheorem{fact}[theorem]{Fact}

\newtheorem{definition}[theorem]{Definition}

\newcommand{\reserveResultSpace}[1]{%
  \BeforeBeginEnvironment{#1}{\Needspace{4\baselineskip}}%
}
\forcsvlist{\reserveResultSpace}{theorem,axiom,lemma,corollary,proposition,fact,claim,definition,example,remark,problem}

\crefname{ineq}{inequality}{inequalities}
\creflabelformat{ineq}{#2{\upshape(#1)}#3}
\crefname{algorithm}{Algorithm}{Algorithms}
\crefname{axiom}{Axiom}{Axioms}
\Crefname{axiom}{Axiom}{Axioms}
\crefname{claim}{Claim}{Claims}
\Crefname{claim}{Claim}{Claims}
\crefname{fact}{Fact}{Facts}
\Crefname{fact}{Fact}{Facts}
\crefname{problem}{Problem}{Problems}
\Crefname{problem}{Problem}{Problems}
\usepackage{enumitem}

\newcommand{\ignore}[1]{}

\newcommand{\C}{\mathbb C}

\newcommand{\Id}{\operatorname{Id}}

\newcommand{\calC}{\mathcal{C}}
\newcommand{\calD}{\mathcal{D}}
\newcommand{\calE}{\mathcal{E}}

\newcommand{\calH}{\mathcal{H}}

\newcommand{\calL}{\mathcal{L}}

\newcommand{\calP}{\mathcal{P}}

\DeclarePairedDelimiterX\diverg[2]{(}{)}{#1 \,\|\, #2}
\newcommand{\dtv}[2]{\mathrm{d}_{\mathrm{TV}}(#1,#2)}

\newcommand{\dAa}[4]{\mathrm{A}_{#1}^{#4}\rbra*{#2,#3}}
\newcommand{\affh}[2]{\dAa{\mathrm H}{#1}{#2}{}}

\usepackage{algorithm}
\usepackage{algpseudocode}

\usepackage{tabularx}
\usepackage{booktabs}
\usepackage{threeparttable}

\usepackage{adjustbox}

\usepackage{footnotehyper} 
\makesavenoteenv{table}

\usepackage{tikz}

\newcommand{\opket}[1]{\lvert #1\rangle\!\rangle}
\newcommand{\opbra}[1]{\langle\!\langle #1\rvert}

\title{Optimal Tolerant Testing of Lindbladian Dissipation}
\author{
  Jinge Bao\thanks{University of Edinburgh. \href{mailto:jinge.bao@ed.ac.uk}{jinge.bao@ed.ac.uk}}
  \and
  Yiyi Cai\thanks{Stanford University. \href{mailto:yiyicai@stanford.edu}{yiyicai@stanford.edu}}
  \and
  Zhong-Xia Shang\thanks{University of Copenhagen. \href{mailto:zhongxia.shang@math.ku.dk}{zhongxia.shang@math.ku.dk}}
}

\date{}

\begin{document}

\maketitle

\begin{abstract}
Quantifying dissipation is essential for controlling quantum noise and characterizing open-system dynamics.
In practical experimental settings, residual noise may still persist despite efforts to suppress it, making it important to determine whether the dissipative strength remains within a prescribed tolerance threshold.
We study this question for an unknown, time-independent Lindblad generator with bounded strength, where the jump operators are local but with unrestricted overlap, using only memoryless forward-evolution access. 
The task is to distinguish dissipative strength of at most $\varepsilon_1$ from strength of at least $\varepsilon_2$, for $0 \leq \varepsilon_1 < \varepsilon_2$, measured in the canonical dissipator's normalized Frobenius norm.
We give an algorithm that solves this task in total evolution time $O\rbra{\varepsilon_2/\rbra{\varepsilon_2-\varepsilon_1}^2}$ for all nontrivial thresholds, with constant success probability, and provide a matching lower bound $\Omega\rbra{\varepsilon_2/\rbra{\varepsilon_2-\varepsilon_1}^2}$ that establishes optimality even for adaptive protocols. This extends dissipation testing to the tolerant setting and eliminates the bounded-degree requirement, making the framework applicable to a broader range of experimentally relevant settings.

\end{abstract}

\newpage

\setcounter{tocdepth}{2}
\tableofcontents
\clearpage

\section{Introduction}
\label{sec:introduction}
Quantum systems interact with their surroundings, exchanging energy and becoming correlated with their environments.
After tracing out the environment, the resulting reduced dynamics can exhibit dissipative effects such as energy relaxation and dephasing.
These effects can degrade coherence and entanglement \cite{PalmaSuominenEkert1996}, making the control of dissipation a central challenge in quantum information processing. 
Such unwanted coupling gives rise to the development of quantum error correction \cite{Shor1995} and fault-tolerant computation under suitable assumptions on physical noise~\cite{AliferisGottesmanPreskill2006}.
On the other hand, environmental coupling can also be used as a resource. 
Engineered dissipation can prepare and stabilize desired states \cite{PoyatosCiracZoller1996,DiehlEtAl2008}, including experimentally
demonstrated entangled states~\cite{BarreiroEtAl2011,ShankarEtAl2013},
and support universal quantum computation~\cite{VerstraeteWolfCirac2009}.
More recently, these open system dynamics have been used as tools for Gibbs sampling~\cite{RallWangWocjan2023,ChenKastoryanoGilyen2023,DingLiLin2024,ChenKastoryanoBrandaoGilyen2025,RouzeFrancaAlhambra2026,Ding2026Simple},
ground-state preparation~\cite{DingChenLin2024,Zhan2026Rapid}, and scientific computing tasks~\cite{ShangGuoAnZhao2025,Shang2026LinearSystemSolver}.
 Whether dissipation arises as unwanted noise or is deliberately engineered as a resource, it is important to determine how strongly it contributes to the observed dynamics.
 This motivates the problem of efficiently detecting and quantifying dissipation from experimental access to the system's time evolution. 


For finite-dimensional, time-homogeneous Markovian systems, coherent Hamiltonian evolution and dissipation are described within the Gorini--\allowbreak Kossakowski--\allowbreak Sudarshan--\allowbreak Lindblad framework~\cite{GoriniKossakowskiSudarshan1976,Lindblad1976}.
Characterizing these dynamics is challenging when both components are unknown, as both influence the observed evolution, so a change in the system's state does not by itself reveal the presence or strength of dissipation. 
Assessing dissipative effects therefore requires distinguishing them from coherent evolution, even when the underlying Hamiltonian and environmental couplings are unknown.
Lindbladian learning approaches this problem by reconstructing Hamiltonian and dissipative information from observed states or time evolution~\cite{RomanovEtAl2026,IvashkovEtAl2026,AradEtAl2026,MobusBergamaschiFrancaRouze2026,LewisTangWright2026}.
For many characterization tasks, however, the quantity of interest is much simpler than a complete model. 
When evaluating noise suppression, for example, one may wish to determine whether the remaining dissipative effects are sufficiently weak, without identifying every process that contributes to them. 
This motivates testing a property of the generator directly. 
We quantify dissipative strength by the normalized Frobenius norm of its canonical dissipator, which is uniquely determined by the generator. 
Equivalently, this quantity measures the generator’s distance from purely Hamiltonian dynamics, giving a precise meaning to the strength of its non-Hamiltonian component.

A natural starting point is to distinguish purely Hamiltonian dynamics from dynamics with some substantial dissipation \cite{Cai2026}.
This formulation addresses whether dissipation is present, but requires its exact absence in the acceptance case. 
This condition can be too restrictive when the aim is to assess residual noise.
Efforts to suppress environmental coupling may leave weak dissipative effects, and detecting their presence alone does not determine whether their strength lies within a prescribed tolerance. 
This motivates a tolerant formulation that accepts sufficiently weak dissipation and rejects sufficiently strong dissipation. 
More precisely, the test distinguishes strength at most a lower threshold from strength at least a higher threshold, with either answer allowed in between. 
This follows the framework of tolerant property testing \cite{ParnasRonRubinfeld2006}, which allows controlled deviations from an ideal property. 
Here, the central question is then how much evolution time is needed to make this distinction, and how that cost depends on the two thresholds.




In this work, we resolve this question using only memoryless forward-evolution access. 
We give the first algorithm for tolerant dissipation testing and establish its optimal interaction time through a matching lower bound. 
Furthermore, we only require each jump operator to act on at most a fixed number of qubits and remove the bounded-degree assumption of prior work \cite{Cai2026}.
Their supports and coefficients are unknown, and the Hamiltonian may be arbitrary and nonlocal. 
The resulting bounds have optimal dependence on both thresholds of the dissipation strength and are independent of system size. 
By allowing both weak background dissipation and unrestricted overlap among the local jump operators, our results extend dissipation testing to a broader range of experimentally relevant noise models.

\subsection{Main results}
\label{sec:main-results}
Consider an unknown, time-independent Lindblad generator $\calL$ on
$n\geq1$ qubits.
To define dissipative strength independently of the choice of jump operators, we use a representation with $\Tr\rbra{L_a} = 0$ and write
\begin{equation*}
    \calL\rbra{\rho} = -i\sbra{H,\rho}
    + \underset{\calD\rbra{\rho}}{\underbrace{\sum_a\rbra*{
        L_a\rho L_a^\dagger
        - \frac12\cbra*{L_a^\dagger L_a,\rho}
    }}}.
\end{equation*}
Here, $H$ governs coherent Hamiltonian evolution, while $\calD$ describes the dissipative contribution. 
The tracelessness condition ensures that $\calD$ is uniquely determined by $\calL$, even though the individual jump operators need not be unique. 
We therefore call $\calD$ the \emph{canonical dissipator}~\cite{HallCresserLiAndersson2014}, and we measure dissipative strength by its normalized Frobenius norm $\Abs{\calD}_F$.
We assume that $\calD$ admits a representation in which each jump operator acts on at most $k \geq 1$ qubits, without any requirement for the degree of the underlying interaction graph. 


The tester probes these unknown dynamics through the forward-evolution channels $\calE_t = e^{t\calL}$. 
Additionally, we enforce the bound $\Abs{\calL}_{\diamond} \leq \Lambda$, which controls errors when short evolutions are used to extract information about the generator.
Each call applies $\calE_t$ for a classically chosen duration $t \geq 0$, which may depend on earlier measurement outcomes. 
The tester may also use ancillae, retain relevant outputs, and perform joint measurements. 
We quantify the cost of probing the dynamics by the total interaction time, the sum of all query durations, including unsuccessful preparations and parallel calls.
The model permits arbitrarily short calls and does not charge for known operations or measurements.
Optimality here concerns total interaction time, not query count, gate count, storage, or wall-clock time.
Memorylessness describes the oracle, not a restriction on the tester's quantum memory.

\begin{theorem}[Time-optimal dissipation testing, informal version of \cref{thm:lo-time,thm:lo-time-lower}]
\label{thm:informal-time}
Consider the class of time-independent, $n$-qubit Lindblad generators $\calL = -i\sbra{H,\cdot} + \calD$ such that $\Abs{\calL}_{\diamond} \leq \Lambda$ and the canonical dissipator $\calD$ admits a representation by traceless jump operators, each supported on at most a fixed number $k$ of qubits.
Then, for $0 \leq \varepsilon_1 < \varepsilon_2 \leq \Lambda/\sqrt{2}$, there exists a randomized procedure which, given black-box access to the evolution channels $\cbra{\calE_t:= e^{t\calL}}_{t\geq 0}$, distinguishes between $\Abs{\calD}_{F} \leq \varepsilon_1$ and $\Abs{\calD}_{F} \geq \varepsilon_2$ with success probability at least $2/3$, using
\[
  O\rbra*{\frac{\varepsilon_2}{\rbra*{\varepsilon_2-\varepsilon_1}^2}}
\]
total evolution time.
Furthermore, a matching lower bound holds even for adaptive protocols with ancillary systems and joint measurements, establishing optimality of the protocol. 
\end{theorem}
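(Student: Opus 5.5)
The plan has two halves: an upper bound from an unbiased estimator of $\Abs{\calD}_F^2$ with small variance, and a lower bound from a two-point hypothesis-testing argument. Set $\Delta=\varepsilon_2-\varepsilon_1$.

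\textbf{Upper bound.} We estimate $\Abs{\calD}_F^2$ and threshold at $\tau=(\varepsilon_1^2+\varepsilon_2^2)/2$. The two cases differ by $\varepsilon_2^2-\varepsilon_1^2\geq \varepsilon_2\Delta$, so additive accuracy $\Theta(\varepsilon_2\Delta)$ suffices.

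The primitive is as follows. Prepare a maximally entangled state $\ket{\Phi}$ between a random local patch of qubits and an ancilla, or more simply between the whole register and an ancilla. Apply $\calE_t$ for a short time $t$, then measure.

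\begin{itemize}[leftmargin=*]
\item The Hamiltonian part acts unitarily. It keeps the state pure, so purity-type statistics detect only $\calD$.
\item Concretely, prepare two copies of $(\calE_t\otimes\Id)(\Phi)$ and perform a swap test. The acceptance probability is $(1+\Tr\rho_t^2)/2$.
\item Expanding gives $1-\Tr\rho_t^2 = 2t\,\kappa(\calD) + O(\Lambda^2t^2)$.
\item Here $\kappa(\calD)=\sum_a \Tr(L_a^\dagger L_a)/2^n$, which is linear in the Kossakowski matrix. Its relation to $\Abs{\calD}_F$ comes from the paper's norm comparison for traceless jump operators.
\end{itemize}

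A first-order purity signal, however, is linear in the jump operators squared. This yields $\Theta(\Abs{\calD}_F)$ at best after locality arguments, not $\Abs{\calD}_F^2$. So I would instead use a second-order-in-$t$ estimator that directly targets $\Abs{\calD}_F^2$:
\begin{itemize}[leftmargin=*]
\item Use the Choi state of $\calE_t$ together with a Pauli-twirled echo $\calE_t\circ\mathcal{U}\circ\calE_t$, where $\mathcal{U}$ is a random local Pauli. This cancels the coherent part to the needed order.
\item Locality (each $L_a$ on $\leq k$ qubits) lets us write $\Abs{\calD}_F^2$ as a sum over local Pauli-pair coefficients $\sum_{P,Q}\abs{\gamma_{PQ}}^2$ up to constants depending on $k$.
\item Each coefficient is accessed by sampling a random Pauli string and measuring, so no degree bound is needed. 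Overlap only affects which coefficients are nonzero, not the sampling.
\end{itemize}

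The key quantitative claim is the Bernoulli-variance structure. A single trial of duration $t\approx c/\varepsilon_2$ (with $\varepsilon_2\leq\Lambda/\sqrt2$ keeping $\Lambda t$ controlled where needed) succeeds with probability $p=\Theta(t\Abs{\calD}_F)$, bounded away from $1$. The two hypotheses give probabilities that differ by $\Theta(t\Delta)$, and both have variance $O(t\varepsilon_2)$. Chernoff bounds then require $N=O(t\varepsilon_2/(t\Delta)^2)$ trials, for total time $Nt=O(\varepsilon_2/\Delta^2)$.

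\textbf{Lower bound.} Take two single-qubit amplitude-damping or depolarizing generators with $H=0$ and strengths $\varepsilon_1$ and $\varepsilon_2$. For example, $\calD_\gamma$ is dephasing with rate $\gamma$, with $\Abs{\calD_\gamma}_F\propto\gamma$. An adaptive protocol with ancillas makes queries of durations $t_j$. Via the Choi or purification representation, the fidelity between the two final states is bounded below by the product of channel fidelities.

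For dephasing, the per-query infidelity is $\Theta(t_j(\sqrt{\varepsilon_2}-\sqrt{\varepsilon_1})^2)$, which matches the Bhattacharyya distance between Poisson-type jump processes. The standard chain rule for adaptive discrimination using the Bures or Hellinger affinity then gives
\[
\textstyle\sum_j t_j\geq\Omega\big((\sqrt{\varepsilon_2}-\sqrt{\varepsilon_1})^{-2}\big)=\Omega(\varepsilon_2/\Delta^2),
\]
since $(\sqrt{\varepsilon_2}-\sqrt{\varepsilon_1})^2=\Delta^2/(\sqrt{\varepsilon_2}+\sqrt{\varepsilon_1})^2$.

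\textbf{Main obstacle.} The hardest step is the upper-bound estimator. It must isolate a quantity whose per-unit-time signal scales as $\Delta$ with variance $\varepsilon_2$, despite an arbitrary nonlocal Hamiltonian and overlapping jump supports. The cleanest plan I would try first is to reduce to the Poissonian picture: find a measurement on the Choi state of $\calE_t$ whose click rate equals a rate $r\propto\Abs{\calD}_F$ exactly to first order, with the Hamiltonian contributing only at $O(t^2)$. Counting clicks over total time $T$ then distinguishes rates $\varepsilon_1$ and $\varepsilon_2$ when $T=O(\varepsilon_2/\Delta^2)$, matching the lower bound.
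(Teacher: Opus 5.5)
\textbf{The gap: the upper bound.} You never construct an estimator that works. Your ``cleanest plan'' is a measurement whose first-order click rate is exactly proportional to $\Abs{\calD}_F$, and no such measurement exists. If an effect $E$ satisfies $\Tr(E\rho)=0$ on the probe $\rho$, its click rate $\Tr\rbra{E\,(\calL\otimes\Id)(\rho)}$ is linear in $\calL$. This holds even for joint measurements on many probes, and for the Choi probe the rate equals $\Tr(EC)$. By contrast, $\Abs{\calD}_F$ is not linear: dephasing with jump operator $Z_1$ and with $X_1$ each have $\Abs{\calD}_F=\sqrt2$, but their sum has $\Abs{\calD}_F=\sqrt6$.

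The natural rate is $\Gamma=\Tr(C)$, which is your $\kappa$. It pins $\Abs{\calD}_F$ down only up to a $k$-dependent factor, $\Gamma\leq\Abs{\calD}_F\leq q_k\Gamma$. For example, single-qubit dephasing and amplitude damping with equal $\Gamma$ have $\Abs{\calD}_F$ in ratio $2:\sqrt5$. So when $\varepsilon_2/\varepsilon_1$ is close to $1$, the YES and NO classes contain generators with identical click rates. Your claim that the two hypotheses give success probabilities differing by $\Theta(t\Delta)$ therefore fails.

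Your fallbacks do not close this.
\begin{itemize}
\item Inserting a Pauli $P$ between two forward evolutions leaves the coherent term $-i[H+PHP,\cdot]$ at first order. This does not vanish for an unknown $H$, even after averaging over $P$; forward access gives no time reversal.
\item A formula for $\Abs{\calD}_F^2$ valid only up to $k$-dependent constants is useless at relative accuracy $\Delta/\varepsilon_2$. The exact identity is $\Abs{\calD}_F^2=\Tr(C^2)+\tfrac12\Gamma^2+\tfrac12\overline{\Tr}(A^2)$. Its hard term, $\overline{\Tr}(A^2)=d\,\Gamma^2\Tr(\sigma_{\mathsf R}^2)$, is a dimension-rescaled marginal purity. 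Without a degree bound, ``sampling a random Pauli string'' gives no $n$-independent estimate of it.
\item Choosing $t\approx c/\varepsilon_2$ together with $\varepsilon_2\leq\Lambda/\sqrt2$ gives $\Lambda t\geq\sqrt2\,c$. So the short-time expansion is not controlled; your inequality points the wrong way.
\end{itemize}

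\textbf{The missing idea.} The nonlinear information sits in the post-click state. Evolve $\ket{\Phi}$ for a short time and measure $\{\ketbra{\Phi}{\Phi},\Pi\}$. The click occurs with probability $\Gamma t+O(\Lambda^2t^2)$ and leaves, to first order, the jump state $\sigma=C/\Gamma$. The exact identity $\Abs{\calD}_F=\Gamma\sqrt{1+h(\sigma)}$ then reduces testing to estimating $\Gamma$ and $h(\sigma)$.

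The click rate alone decides unless $\Gamma=\Theta(\varepsilon_2)$. In that regime $h$ is needed only to accuracy $\eta\sim\Delta/\varepsilon_2$. The crux is a dimension-independent $O(\eta^{-2})$-copy estimator of $h$, built from three pieces:
\begin{itemize}
\item Pauli-vector measurements that build candidate strings from pairwise unions of observed supports, with expected omitted mass $O(1/m)$;
\item a product-Pauli $U$-statistic on the reference register, whose variance is controlled by $\sigma_{\mathsf R}\leq 2^kI/d$;
\item SWAP tests for $\Tr(\sigma^2)$.
\end{itemize}
Copies arrive at rate $\Theta(\varepsilon_2)$, so the total time is $O(1/(\varepsilon_2\eta^2))=O(\varepsilon_2/\Delta^2)$. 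Using $M\gtrsim\Lambda^2T^2$ attempts of duration $T/M$ keeps the Taylor error small.

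\textbf{The lower bound.} Yours is essentially the paper's: the same single-qubit dephasing family and the same Poisson affinity $\exp\rbra{-\tfrac T2(\sqrt{\Gamma_1}-\sqrt{\Gamma_0})^2}$. However, ``final fidelity is at least the product of channel fidelities'' is false for general adaptive protocols; for $I$ versus $e^{i\phi Z}$ the Bures angles add linearly. It does hold here, because the Poisson dilation $V_\Gamma=\sum_r\sqrt{e^{-\Gamma t}(\Gamma t)^r/r!}\,Z_1^r\otimes\ket{r}$ satisfies $V_{\Gamma_0}^\dagger V_{\Gamma_1}\propto I$. Outcome-dependent durations still need an argument. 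The paper sidesteps both issues by simulating any adaptive tester from a single Poisson record, treating call boundaries as stopping times, and then applying data processing.
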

For a fixed upper threshold, resolving a smaller gap requires quadratically more evolution time. 
When $\varepsilon_1=0$, our protocol also recovers the optimal
inverse-threshold Heisenberg-limited time scaling established in~\cite{Cai2026}, while removing the bounded-degree restriction.
Furthermore, the stated threshold range covers all nontrivial cases: by \cref{lem:generator-strength-bounds}, $\Abs{\calD}_F \leq \Abs{\calL}_{\diamond}/\sqrt{2} \leq \Lambda/\sqrt{2}$.

To achieve the upper bound, we express the dissipative strength in terms of two quantities: an overall jump rate $\Gamma$, which represents the total strength of the jumps on the maximally mixed input, and a normalized jump state $\sigma$ associated to $\Gamma$, which captures its structure independently of the overall scale. 
For nonzero dissipation, $\sigma$ encodes the action of the jump operators on one half of a maximally entangled system-reference register. 
Writing $d=2^n$ and $\sigma_{\mathsf R}$ for its marginal on the reference register, we define the \emph{jump-state functional} as
\begin{equation}
    h\rbra{\sigma}
  =\Tr\rbra*{\sigma^2}
   +\frac12\rbra*{d\Tr\rbra*{\sigma_{\mathsf R}^2}-1}. 
\end{equation}
This quantity combines the purity of the jump state with a measure of how far its reference marginal deviates from the maximally mixed state. 
As shown later, estimating this functional, together with the overall jump rate, suffices to determine the dissipative strength. 
This motivates studying the number of independent copies needed to estimate $h(\sigma)$ up to some accuracy and obtain the following optimal bound. 



\begin{theorem}[Copy-optimal jump-state functional estimation, informal version of \cref{thm:jump-state-estimation,thm:jump-state-copy-lower}]
\label{thm:informal-copy}
Let $0<\eta\leq1/16$.
Given independent copies of the jump state $\sigma$ of a nonzero
canonical dissipator with constant locality,
the optimal number of copies needed to estimate $h\rbra{\sigma}$
to additive accuracy $\eta$ with success probability at least $2/3$ is
$\Theta\rbra*{\eta^{-2}}$.
\end{theorem}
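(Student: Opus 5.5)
Both bounds come from treating $h\rbra{\sigma}$ as a sum of two degree-two polynomial functionals of $\sigma$. The upper bound estimates each term with bounded two-copy observables. The lower bound embeds a coin-bias (Le Cam) two-point problem into the family of jump states.

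\textbf{Upper bound.} Write $h\rbra{\sigma}=\Tr\rbra{\sigma^2}+\tfrac d2\Tr\rbra{\sigma_{\mathsf R}^2}-\tfrac12$. I will pair up the copies. On each pair, apply a swap test on the full system-reference registers, which gives a $\pm1$ outcome with mean $\Tr\rbra{\sigma^2}$. On an independent pair, apply a swap test only on the two reference registers, which gives a $\pm1$ outcome with mean $\Tr\rbra{\sigma_{\mathsf R}^2}$.

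The obstacle is the prefactor $d=2^n$. The raw swap-test estimator for $\tfrac d2\Tr\rbra{\sigma_{\mathsf R}^2}$ has variance of order $d^2$, so constant locality must be used. Since each jump operator $L_a$ acts on at most $k$ qubits, $\sigma$ should be a mixture over jumps of $\rbra{L_a\otimes I}\ket{\Phi}$. The reference marginal is then $\sigma_{\mathsf R}\propto\sum_a\rbra{L_a^\dagger L_a}^{T}$. This operator is a sum of $k$-local terms, so $d\Tr\rbra{\sigma_{\mathsf R}^2}-1=d\Abs{\sigma_{\mathsf R}-I/d}_2^2$ decomposes into Pauli weights of weight at most $k$.

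This suggests a better estimator. Measure each copy's reference register in a random Pauli basis, or use classical shadows of locality $k$, and compute the sum of squared nonidentity Pauli expectations $\sum_{P\neq I}\Tr\rbra{P\sigma_{\mathsf R}}^2$. By Pauli orthogonality this sum equals $d\Tr\rbra{\sigma_{\mathsf R}^2}-1$. An unbiased U-statistic over pairs of copies estimates it with variance $O\rbra{1}$, with constants depending only on $k$.

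The step to check carefully is that this sum is not hurt by the number of contributing Pauli terms, which can be unbounded because the degree is unrestricted. I would handle it by noting that $\sum_P \Tr\rbra{P\sigma_{\mathsf R}}^2\le d\Tr\rbra{\sigma_{\mathsf R}^2}\le O\rbra{1}$ for bounded $h$. This controls the second-order U-statistic variance terms, and a bounded-$h$ normalization suffices here because $h\le 3/2$ trivially fails only if $\sigma_{\mathsf R}$ is far from mixed. With $O\rbra{1}$ variance per pair, Chebyshev gives $\eta$ accuracy with probability $2/3$ from $O\rbra{\eta^{-2}}$ copies. The condition $\eta\le1/16$ only fixes constants.

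\textbf{Lower bound.} I will construct two valid jump states $\sigma_\pm$ of $k$-local, even $1$-local, traceless dissipators with $\abs{h\rbra{\sigma_+}-h\rbra{\sigma_-}}\ge 2\eta$ and fidelity $F\rbra{\sigma_+,\sigma_-}\ge 1-O\rbra{\eta^2}$. The natural choice is a single-qubit dissipator mixing two jump operators, such as $X$ and $Z$ (Pauli dephasing), with weights $\tfrac12\pm c\eta$. The purity $\Tr\rbra{\sigma^2}$ then changes linearly in $\eta$, since the states $\rbra{P\otimes I}\ket{\Phi}$ are orthogonal Bell states. Meanwhile $\sigma_{\mathsf R}$ stays maximally mixed, so the second term is $0$ for both.

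The two states are diagonal in the Bell basis with probabilities $\tfrac12\pm c\eta$, so their Bhattacharyya distance is $\Theta\rbra{\eta^2}$. By the multiplicativity of fidelity, distinguishing $N$ copies with advantage $1/3$ requires $N=\Omega\rbra{\eta^{-2}}$. This follows from Fuchs--van de Graaf together with $F\rbra{\sigma_+^{\otimes N},\sigma_-^{\otimes N}}=F^N$.
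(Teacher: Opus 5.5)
\textbf{Upper bound: genuine gap.} The gap is at exactly the step you flagged. As proposed, your U-statistic runs over a fixed index set, at best all Pauli strings of weight $1$ to $k$ on $n$ qubits, because nothing in your scheme selects which $P$ to include. Its variance is $\frac4N\vec a^{\mathsf T}\Sigma\vec a+\frac{2}{N(N-1)}\Tr(\Sigma^2)$, where $\Sigma$ is the covariance of the product-Pauli records. Your bound $\sum_P a_P^2\le 2^k-1$, together with $\Sigma\le 6^kI$ (which follows from $\sigma_{\mathsf R}\le 2^kI/d$), controls only the first term. The second term is shot noise from every coordinate you include, including those with $a_P=0$: $\Sigma_{PP}=3^{\mathrm{wt}(P)}-a_P^2\ge 2$. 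Weight-one strings alone give $\Tr(\Sigma^2)\ge 12n$, and overall it is of order $n^k$. So the per-pair kernel variance is not $O(1)$. Chebyshev yields only $N=O(\eta^{-2}+n^{k/2}\eta^{-1})$, which is not the claimed $n$-independent bound, and summing over all $P\neq I$ is exponentially worse.

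The missing idea is data-driven localization through the system register. The paper first measures $m=O(1/\eta)$ copies of the full $\mathsf{SR}$ state in the Pauli-vector basis. It keeps only weight-$\le k$ strings supported in some union $\supp(Q_i)\cup\supp(Q_j)$ of observed supports, so $\abs{\calC}\le 16^k m^2$ regardless of $n$. \Cref{lem:lo-discovery} bounds the omitted mass by $\mathbb E[G-G_{\calC}]\le 2\cdot 16^k(4^k-1)/(em)$. It uses the covering identity $\supp(Q)\cup\supp(\pi_P(Q))=\supp(P)$, the entry bounds $\abs{(B_P)_{Q,\pi_P(Q)}}^2\le w(P)p_Q$, and $\sum_P w(P)\le 4^k-1$. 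Restricted to $\calC$, the quadratic variance term is at most $2\cdot 6^{2k}16^k m^2/(N(N-1))$, so $N=O(\eta^{-2})$ suffices. Nothing in your proposal plays the role of this discovery step and its bias analysis.

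\textbf{Lower bound: fixable parametrization error.} Your route is sound in principle and is the alternative the paper itself mentions. You mix the jump operators $X_1$ and $Z_1$, whose reference marginal is maximally mixed so that $h=\Tr(\sigma^2)$, then combine fidelity multiplicativity with Fuchs--van de Graaf. Your parametrization, however, fails. With weights $(w,1-w)$ one has $h=w^2+(1-w)^2$, which is stationary at $w=1/2$. Weights $\tfrac12\pm c\eta$ therefore move $h$ only by $O(\eta^2)$, and the pair $(\tfrac12+c\eta,\tfrac12-c\eta)$, $(\tfrac12-c\eta,\tfrac12+c\eta)$ has identical $h$. Center at an unbalanced point instead, e.g.\ $w\in\{1/4,\,1/4-2\eta\}$. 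There $h$ differs by $2\eta+8\eta^2>2\eta$ and the Bernoulli fidelity is $1-\Theta(\eta^2)$, which gives $N=\Omega(\eta^{-2})$. The paper instead uses the pure jump states of $L_\theta=\cos\theta\,X_1+i\sin\theta\,Z_1$. Their purity is fixed at $1$, so $h$ varies only through the marginal term, and the bound holds even for a single one-qubit jump operator.
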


Ordinary purity estimation already has copy complexity independent
of dimension and proportional to the inverse square of the additive accuracy
\cite{EkertEtAl2002,ChenLiuWang2026TracePowers}.
This does not directly give the same bound for the jump-state
functional, whose marginal-purity term is multiplied by the system dimension.
Applying the standard disjoint-pair SWAP estimator separately to
the two purities gives a copy upper bound quadratic in the
system dimension.
Our estimator uses locality to obtain a dimension-independent
bound without known jump operators or supports.
The matching lower bound allows adaptive and collective measurements
and holds even for a single jump operator acting on one qubit.

\subsection{Technical overview}
\label{sec:technical-overview}


Our approach separates dissipative strength into an overall jump rate and a quantity describing the structure of the jumps. 
We first explain how to isolate this information from an unknown Hamiltonian, then show how to estimate it with a cost independent of system size. 
Finally, we show how to prepare the required states and control approximation errors within a fixed total evolution time. 

\paragraph{Isolating the jump contribution.}
We begin by isolating the jump contribution without reconstructing the unknown Hamiltonian. 
The Choi representation makes this possible by expressing the generator as an operator on the system and a reference register.
Let $\ket{\Phi}$ be a maximally entangled state of the system and a reference register, and we apply the generator $\calL$ to one half of the state, which gives us  
\begin{align}
    J(\calL) = (\calL \otimes I_{\mathsf R}) (\ket{\Phi}\bra{\Phi}).
\end{align}
In this representation, the Hamiltonian and anticommutator terms contain $\ket{\Phi}$ on at least one side, whereas each jump term is a positive operator built from the vector $(L_a \otimes I)\ket{\Phi}$.
As the jump operators are traveless, these vectors are orthogonal to $\ket{\Phi}.$

This distinction allows us to separate the contributions. 
Set $\Pi = I - \ket{\Phi}\bra{\Phi}$, which directly removes the Hamiltonian and anticommutator terms while preserving the jump terms:
\begin{align}
C = \Pi J(\calL) \Pi = \sum_a (L_a \otimes I) \ket{\Phi}\bra{\Phi} (L_a^{\dagger} \otimes I).
\end{align}
The resulting positive operator $C$ contains the information about the jumps needed for our test. 
We separate its overall scale from its normalized form.
Its trace $\Gamma = \Tr(C)$ is the total jump rate on the maximally mixed input. 
When $\Gamma > 0$, the normalized operator $\sigma = C/\Gamma$ is the jump state.
These quantities determine the dissipative strength through the exact identity 
\begin{align}
    \|\calD\|_F = \Gamma \sqrt{1+h(\sigma)}.
\end{align}
Our testing problem therefore reduces to estimating $\Gamma$ and the scalar function $h(\sigma)$.
If $\Gamma = 0$, the dissipator is zero and no jump-state estimation is needed. 

\paragraph{Estimating the jump-state functional.}
Suppose first that independent copies of $\sigma$ are available. 
Recall that 
\[h(\sigma) = \Tr(\sigma^2) + \frac{1}{2}(d \Tr(\sigma_{\mathsf R}^2 ) - 1).\]
While the global purity $\Tr(\sigma^2)$ can be estimated using ordinary two-copy SWAP measurements, the marginal purity is more obscure, as the contribution is multiplied by the system dimension $d$.
Estimating this purity directly would incur a dimension-dependent copy cost. 

We instead expand the reference marginal in the Pauli basis. 
Write $\alpha_P = \Tr(P\sigma_{\mathsf R})$ for some Pauli $P$, its contribution then becomes 
\begin{equation}
    G:= d \Tr(\sigma_{\mathsf R}^2) - 1 = \sum_{P \neq I} \alpha_P^2.
\end{equation}
Locality ensures that only Pauli strings of weight at most $k$ can contribute. 
However, there may nevertheless be many such strings, and their supports are unknown. 
Our estimator therefore first finds a small set of Pauli coefficients that captures most of $G$, then estimates their combined contribution. 
Measure $m$ copies in the Pauli-generated Bell basis, we get each outcome is a Pauli string acting on at most $k$ qubits. 
Since the reference marginal involves products of Pauli components of the jumps, we construct candidate strings from pairwise unions of the observed supports.
Each union contains at most $2k$ qubits; for constant $k$, this gives $O(m^2)$ candidates that is independent of system size. 

The candidate set need not contain every nonzero coefficient.
Locality guarantees that the expected sum of squared coefficients omitted is $O(1/m)$, so $m = O(\eta^{-1})$ samples suffice to make this contribution $O(\eta)$ with constant probability. 
Once the candidate set is identified, we need to estimate its total contribution to $G$.
Recovering each coefficient accurately would provide more information than we need. 
We instead estimate the sum of their squares directly, which allows the same measurement data to contribute to the entire sum.

Locality is essential here as well as in candidate discovery, as it controls the statistical fluctuations of this aggregate estimate, even when the jump supports overlap arbitrarily. 
Consequently, the estimation cost depends on the locality bound and the desired accuracy, without growing with system size.
Together with the approximation guarantee for the candidate set, this yields an $O_k(\eta^{-2})$ copy bound for the marginal contribution. 
Global purity can then be estimated within the same copy count, which completes the estimator for $h(\sigma)$.

\paragraph{Preparation of jump states.}
We now need to obtain copies of jump state $\sigma$ using only forward evolution.
Fortunately, the projection that isolates the jump contribution also provides a way to prepare these copies.
After evolving the system half of a maximally entangled probe for a short time $t$, projecting onto the subspace orthogonal to the initial probe leaves, to first order, 
\[t C  = \Gamma t \sigma.\]
This expression describes both the probability of obtaining the projection outcome and the state retained afterward.
In the ideal first-order experiment, the outcome occurs with probability $\Gamma t$, and its normalized output is exactly $\sigma$.
Thus, this experiment supplies both quantities needed by the tester. 
The frequency of successful preparations reveals the jump rate, while the retained states reveal the jump-state functional.
By choosing sufficiently short evolutions, we control the accumulated approximation error within a fixed total evolution time.

\paragraph{From copy complexity to evolution time.}
The jump rate already gives a coarse estimate of dissipative strength. 
Locality implies 
\[
\Gamma \leq \|\calD\|_F \leq q_k \Gamma,
\]
where $q_k$ only depends on $k$.
The tester uses these bounds to decide whether the rate estimate alone suffices or whether the jump-state functional must also be estimated.

If the estimated rate is below $\epsilon_2/(2q_k)$, the upper bound on dissipative strength rules out the high-dissipation case, so the tester accepts. If the estimate exceeds $2\epsilon_2$, the lower bound rules out the low-dissipation case, so the tester rejects.
These cutoffs include margins for estimation error and give the correct answer on promised inputs whenever the rate estimate is sufficiently accurate.

Functional estimation is needed only when the estimated rate lies between these cutoffs. 
In this case, an accurate estimate guarantees that $\Gamma$ is comparable to $\epsilon_2$ up to constants depending on $k$. 
This is essential for controlling the preparation cost. 
Whenever jump-state copies are needed, their production rate is sufficiently large. 
If successful preparations are too rare, the rate alone already settles the decision.
Resolving the threshold gap requires functional accuracy proportional to $(\epsilon_2 - \epsilon_1)/\epsilon_2$.
Combining the inverse-square copy requirement with the rate at which they are obtained gives a total evolution time of 
\[O\left(\frac{\epsilon_2}{(\epsilon_2 - \epsilon_1)^2}\right).\]

\paragraph{Matching lower bounds.}
The copy and evolution time lower bounds use different constructions. 
For copy estimation, we choose nearby pure jump states, each generated by a single jump operator on one qubit.
Their global purities are identical, but their reference-marginal contributions produce functional values separated by more than twice the required accuracy.
Distinguishing these nearby states requires $\Omega(\eta^{-2})$ copies, even with adjoint measurements.
The accuracy dependence is therefore unavoidable even in this simple setting.

For evolution time, single-qubit dephasing generators realize the two promise thresholds.
heir forward channels can be simulated by a Poisson process of phase flips, with a rate proportional to dissipative strength. 
Every adaptive tester can be simulated using such a record, so its distinguishing power is bounded by that of the full record. 
Comparing the two Poisson laws gives
\[T = \Omega\left(\frac{\epsilon_2}{(\epsilon_2 - \epsilon_1)^2}\right).\]
This lower bound applies directly to the evolution-access problem and establishes optimality beyond the particular preparation and estimation strategy used by our algorithm.
Both lower bounds already hold on a single qubit, which shows that the accuracy dependence reflects an intrinsic statistical difficulty of these tasks.

\subsection{Related work}
\label{sec:related-work}
Our work connects dissipation testing with quantum property testing, state-functional estimation, and Lindbladian learning. 
We review these connections below, highlighting how the testing objectives, structural assumptions, and access models compare with those considered here.

\paragraph{Testing quantum dynamics.}
Our work directly builds on generator-level dissipation detection \cite{Cai2026}, which distinguishes purely Hamiltonian dynamics from dissipation above a prescribed threshold.
Under locality and bounded-degree assumptions on the jump operators, that work establishes optimal evolution time despite an arbitrary unknown Hamiltonian. 
We extend the task to allow nonzero dissipation in the acceptance case and remove the bounded-degree restriction.

A related tolerant testing problem considers a known unitary followed by unknown depolarizing noise. 
In that setting, ~\cite[Theorem~3]{GhoshJacksonDatta2026} gives a channel-query upper bound with the same threshold dependence as our evolution time bound, together with a matching lower bound for individual measurements.
The similarity in scaling arises because both problems distinguish two regimes separated by a threshold, even though they use different testing models and distance measures: their thresholds are defined by the diamond distance from a known target channel, whereas ours are defined by the normalized Frobenius norm of the canonical dissipator.
We also allow an arbitrary unknown Hamiltonian and charge the duration of evolution calls. 
Other approaches to characterizing quantum dynamics study channel unitarity \cite{WallmanEtAl2015, ChenEtAl2023} and symmetry testing for channels and Lindbladians~\cite{BandyopadhyayEtAl2023}

\paragraph{Quantum property testing.}
Our two-threshold formulation places dissipation testing within the broader study of quantum property testing. 
For Hamiltonians, this literature considers general properties \cite{BluhmCaroOufkir2026}, locality \cite{KallaugherLiang2025}, Pauli sparsity \cite{ArunachalamDuttEscuderoGutierrez2026}, and certification against a target \cite{GaoJiWangYuZhao2026}. 
For quantum operations, testing problems include membership in unitary subgroups \cite{BrakerskiSharmaWeissenberg2021}, unitary juntas \cite{ChenLiLuo2024, BaoLiuYaoYeZhang2025}, Clifford operations \cite{HinscheEtAl2025}, and shallow channels \cite{DongOuYao2025}. 
All these problems share the same target of deciding the presence of some structural or quantitative property without reconstructing the full object.

The state-testing literature addresses questions for juntas ~\cite{BEG26} and stabilizer states \cite{GrewalIyerKretschmerLiang2023,ArunachalamDutt2024,BaoVanDordrechtHelsen2024,MehrabanTahmasbi2024,IyerLiang2024}. 
Other work studies product-state approximation \cite{BakshiEtAl2024}, fermionic
and bosonic Gaussianity
\cite{BittelMeleEisertLeone2024,LeoneBittel2026,GirardiEtAl2025},
certification to a reference state~\cite{BaoGaoWang2025,Wang2026Fidelity},
and quantum-state closeness testing with purified query access~\cite{GilyenLi2019}.
The access models and resources vary across these settings, including state copies, black-box queries, and evolution time. 
In comparison, our problem only concerns total forward interaction time for an unknown Lindblad generator.



\paragraph{State-functional estimation.}
Our upper bound relies on estimating a nonlinear functional of a quantum state from independent copies.
Relevant results include direct estimation of linear and nonlinear functionals ~\cite{EkertEtAl2002}, trace-power estimation
\cite{LiuWang2026TracePowers,ChenLiuWang2026TracePowers},
and simultaneous estimation of nonlinear
functionals~\cite{ChenWangYuZhang2026}.
Related statistical tools include methods for predicting state properties\cite{HuangKuengPreskill2020} and quantum
U-statistics~\cite{GutaButucea2010}.

The distinguishing feature of our functional is its dimension-rescaled marginal-purity term. 
Applying standard purity estimators directly to this term introduces a dimension-dependent copy cost. 
We exploit the locality of the underlying jump operators to obtain a dimension-independent bound. 
The estimator combines product-Pauli measurement records through a classical U-statistic, with locality controlling both the contribution omitted during candidate discovery and the variance of the resulting estimate.

\paragraph{Lindbladian learning.}
Lindbladian learning aims to recover information about the coefficients of the generator from observed states or dynamics.
General approaches include channel and generator tomography \cite{ChuangNielsen1997,BenderskyPastawskiPaz2008,BoulantEtAl2003}, robust and experimental Lindbladian estimation
\cite{SamachEtAl2022,LiuSeddonKohlerOnoratiCubitt2025,BirkeEtAl2026}, and learning open-system dynamics from steady states \cite{BaireyEtAl2020}.
 These methods provide a detailed description of the dynamics; our task asks only whether the canonical dissipator’s strength lies below or above specified thresholds.

Recent work develops learning methods for local quantum dynamics \cite{FrancaMarkovichEtAl2024,AradEtAl2026,SongZhangYuanWu2026}, time-dependent dynamics and noise~\cite{FrancaEtAl2025},
and unknown generator structure
\cite{MobusBergamaschiFrancaRouze2026,LewisTangWright2026}.
Further approaches relax the need for prescribed supports\cite{IvashkovEtAl2026,RomanovEtAl2026,ZhouGong2026,ChenYu2026,Sinha2026}., while numerical and practical aspects are studied in~\cite{HeightmanEtAl2026,LamJoshiFrancaVermersch2026}.
These results differ in their structural assumptions, access models, and reconstruction guarantees. 
Our contribution focuses on the interaction time required for a scalar testing task under the locality restriction alone.

There is also a direct connection between the lower-bound constructions. An adaptive learning-time lower bound based on nearby dephasing generators appears in ~\cite[Theorem~4.2 and Corollary~4.3]{AradEtAl2026}. 
We use the same generator family to establish the optimal dependence on the two thresholds in tolerant dissipation testing.



\paragraph{Organization}
\Cref{sec:preliminaries} introduces the notation and background
on Lindblad dynamics and canonical dissipation.
\Cref{sec:testing-strategy-overview} describes the access model and explains how jump-state estimation leads to a bound on total evolution time. 
\Cref{sec:jump-state-estimation} establishes matching copy-complexity bounds for the jump-state functional,  and \cref{sec:locality-only} establishes matching copy-complexity bounds for the jump-state functional.

\section{Preliminaries} \label{sec:preliminaries}

\Cref{sec:notations} fixes notation, and
\cref{sec:quantum-information} reviews the quantum-information background.
\Cref{sec:canonical-dissipation} introduces Lindblad dynamics,
the canonical dissipator, the jump state, and the jump-state functional.
Finally, \cref{sec:gauge-invariant-dissipation} discusses
gauge invariance and the interpretation of dissipative
strength, and introduces the locality condition
on jump operators.

\subsection{Notation} \label{sec:notations}
Let $n\geq1$ be an integer, and set
$\calH=\rbra{\C^2}^{\otimes n}$ and $d=2^n$.
Let $\mathsf{L}\rbra{\calH}$ denote the space of linear operators
on $\calH$, and let $I$ denote the identity operator on the
relevant space.
Denote the identity map on $\mathsf{L}\rbra{\calH}$ by
$\Id_{\calH}$, so that $\Id_{\calH}\rbra{X}=X$ for every
$X\in\mathsf{L}\rbra{\calH}$.
Denote the computational basis of $\calH$ by
$\cbra{\ket{j}}_{j=0}^{d-1}$ and the matrix entries of
$X\in\mathsf{L}\rbra{\calH}$ in this basis by
$X_{ij}=\bra{i}X\ket{j}$ for $i,j\in\cbra{0,\ldots,d-1}$.
For an operator $X$, write $X^\dagger$ for its adjoint and
$X^{\mathsf T}$ for its transpose in the computational basis.
For a scalar $z\in\C$, denote its complex conjugate by $z^*$.
For $X,Y\in\mathsf{L}\rbra{\calH}$, denote their commutator
and anticommutator by $\sbra{X,Y}=XY-YX$ and
$\cbra{X,Y}=XY+YX$, respectively.
Let $\Tr$ denote the trace.
For $X\in\mathsf{L}\rbra{\calH}$, denote the normalized trace of $X$ by
$\overline{\Tr}\rbra{X}=d^{-1}\Tr\rbra{X}$,
so that $\overline{\Tr}\rbra{I}=1$.
For $X,Y\in\mathsf{L}\rbra{\calH}$, the Hilbert--Schmidt
inner product and its normalized version are
$\Tr\rbra{X^\dagger Y}$ and
$\overline{\Tr}\rbra{X^\dagger Y}$, respectively.
For a matrix $X$ and $1\leq p<\infty$, denote its Schatten
$p$-norm by
\begin{align*}
    \Abs{X}_p
    = \rbra*{\Tr\rbra*{\rbra*{X^\dagger X}^{p/2}}}^{1/p}.
\end{align*}
The cases $p=1$ and $p=2$ give the trace norm and the
unnormalized Hilbert--Schmidt norm, respectively.
For vectors, $\Abs{\cdot}_2$ denotes the Euclidean norm.
Let $\Abs{X}_\infty$ denote the operator norm of $X$,
which is its largest singular value.
Denote by $\mathbf 1\cbra{\cdot}$ the indicator function,
which equals $1$ when the condition in braces holds
and $0$ otherwise.
Denote by $\log$ the natural logarithm.

Throughout, the locality bound $k$ is a fixed positive integer,
and implicit constants in asymptotic bounds may depend only on $k$.

For two probability distributions on the same measurable space,
denote by $\dtv{\cdot}{\cdot}$ their total variation distance,
the supremum of the absolute difference in their probabilities
over measurable events.
For discrete distributions with probability mass functions $p$ and $q$
on a common finite or countable set, their total variation distance
and Hellinger affinity are
\[
    \dtv{p}{q}
    = \frac12\sum_x\abs*{p\rbra{x}-q\rbra{x}},
    \qquad
    \affh{p}{q} = \sum_x\sqrt{p\rbra{x}q\rbra{x}},
\]
respectively, where both sums run over the common outcome set.

Let $\calP_n=\cbra{I,X,Y,Z}^{\otimes n}$ be the set of
Pauli strings, where $X,Y,Z$ are the single-qubit Pauli
matrices, and set $\calP_n^\circ=\calP_n\setminus\cbra{I}$.
For $P=P_1\otimes\cdots\otimes P_n\in\calP_n$, denote
its support and weight by
$\supp\rbra*{P}=\cbra*{j\in\cbra*{1,\ldots,n}:P_j\neq I}$
and $\operatorname{wt}\rbra{P}=\abs*{\supp\rbra*{P}}$,
respectively.
For a set of sites $S\subseteq\cbra{1,\ldots,n}$,
denote its complement by
$S^c=\cbra{1,\ldots,n}\setminus S$.
For $P,Q\in\calP_n$, Pauli orthogonality gives
$\Tr\rbra{\rbra{P/\sqrt d}^\dagger\rbra{Q/\sqrt d}}
    = \overline{\Tr}\rbra{P^\dagger Q}
    = \delta_{P,Q}$.
Since $\abs{\calP_n}=4^n=d^2=\dim\mathsf{L}\rbra{\calH}$,
the set $\cbra{P/\sqrt d}_{P\in\calP_n}$ is an orthonormal
basis of $\mathsf{L}\rbra{\calH}$ with respect to the
Hilbert--Schmidt inner product.

Let $\mathsf S$ be the system register and $\mathsf R$ an isomorphic
reference register, with the system factor placed first in tensor products.
Denote the partial traces over these registers by
$\Tr_{\mathsf S}$ and $\Tr_{\mathsf R}$, respectively,
and the identity map on the reference register by $\Id_{\mathsf R}$.

\subsection{Basics of quantum information} \label{sec:quantum-information}

The normalized maximally entangled vector of
$\mathsf S$ and $\mathsf R$ is
\begin{align*}
    \ket{\Phi} = \frac{1}{\sqrt d}\sum_{j=0}^{d-1}\ket{j}_{\mathsf S}\ket{j}_{\mathsf R}.
\end{align*}
Its reduced state on either register is $I/d$.
For $X\in\mathsf{L}\rbra{\calH}$, denote its vectorization by
\begin{align}
    \opket{X}
    = \rbra*{X\otimes I}\ket{\Phi}
    = \frac{1}{\sqrt d}\sum_{i,j=0}^{d-1}
      X_{ij}\ket{i}_{\mathsf S}\ket{j}_{\mathsf R}.
\end{align}
For all $X,Y\in\mathsf{L}\rbra{\calH}$, orthonormality
of the computational basis therefore gives
\begin{align}
    \opbra{X}\opket{Y}
    = \frac{1}{d}\sum_{j=0}^{d-1}\rbra{X^\dagger Y}_{jj}
    = \frac{1}{d}\Tr\rbra{X^\dagger Y}
     = \overline{\Tr}\rbra{X^\dagger Y}.
\end{align}
The inner-product identity and Pauli orthogonality imply that
$\opbra{P}\opket{Q}=\delta_{P,Q}$ for $P,Q\in\calP_n$.
Since $\abs{\calP_n}=d^2=\dim\rbra{\calH\otimes\calH}$,
$\cbra{\opket{P}}_{P\in\calP_n}$ is an orthonormal basis
of $\calH\otimes\calH$.
This is the Pauli-vector basis, with $\opket{I}=\ket{\Phi}$.
The vectors indexed by $\calP_n^\circ$ span $\ket{\Phi}^\perp$.

\begin{fact}[Partial-trace identity]
\label{fact:vectorized-partial-trace}
For all $X,Y\in\mathsf{L}\rbra{\calH}$,
\begin{align}
    \Tr_{\mathsf S}\rbra*{\opket{X}\opbra{Y}}
    = \frac{1}{d}\rbra*{Y^\dagger X}^{\mathsf T}.
    \label{eq:vectorized-partial-trace}
\end{align}
\end{fact}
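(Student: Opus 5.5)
The identity follows from a direct computation in the computational basis, using the explicit expansion of the vectorization recorded just above. The plan is to write both $\opket{X}$ and $\opbra{Y}$ in coordinates, form their outer product, trace out the system register, and compare the result entrywise with the right-hand side.

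By the definition of vectorization, $\opket{X} = d^{-1/2}\sum_{i,j} X_{ij}\ket{i}_{\mathsf S}\ket{j}_{\mathsf R}$. Taking the adjoint of the analogous expansion for $Y$ gives $\opbra{Y} = d^{-1/2}\sum_{k,l} Y_{kl}^*\bra{k}_{\mathsf S}\bra{l}_{\mathsf R}$. The outer product is therefore
\[
d^{-1}\sum_{i,j,k,l} X_{ij}Y_{kl}^*\,\ket{i}\bra{k}_{\mathsf S}\otimes\ket{j}\bra{l}_{\mathsf R}.
\]

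Tracing out $\mathsf S$ replaces $\ket{i}\bra{k}$ by $\delta_{ik}$. This leaves
\[
d^{-1}\sum_{j,l}\Bigl(\sum_i Y_{il}^* X_{ij}\Bigr)\ket{j}\bra{l}.
\]
The inner sum is $\sum_i (Y^\dagger)_{li}X_{ij} = (Y^\dagger X)_{lj}$.

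Hence the $(j,l)$ entry of the reduced operator is $d^{-1}(Y^\dagger X)_{lj}$. This is exactly the $(j,l)$ entry of $d^{-1}(Y^\dagger X)^{\mathsf T}$, which establishes \eqref{eq:vectorized-partial-trace}.

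There is no real obstacle here. The only point requiring care is the bookkeeping of indices: the transpose arises because the reference index of the ket becomes the row index, and the reference index of the bra becomes the column index. The other thing to check is that complex conjugation appears only on the entries of $Y$, which is what turns $Y^*$ together with the index swap into $Y^\dagger$.
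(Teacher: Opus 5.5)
Your proposal is correct and follows essentially the same route as the paper: expand both vectorizations in the computational basis, contract the system indices under the partial trace, and identify the $(j,\ell)$ entry as $d^{-1}(Y^\dagger X)_{\ell j}$. The only difference is presentational: the paper computes that matrix element directly, whereas you first write out the full outer product before tracing.
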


\begin{proof}
For $j,\ell\in\cbra{0,\ldots,d-1}$, tracing over $\mathsf S$
in the computational-basis expansion above contracts the system
indices, giving
\begin{align}
    \bra{j}_{\mathsf R}
    \Tr_{\mathsf S}\rbra*{\opket{X}\opbra{Y}}
    \ket{\ell}_{\mathsf R}
    = \frac{1}{d}\sum_{i=0}^{d-1}
       X_{ij}\rbra{Y^\dagger}_{\ell i}
    = \frac{1}{d}\rbra*{Y^\dagger X}_{\ell j}
     = \frac{1}{d}
       \rbra*{\rbra*{Y^\dagger X}^{\mathsf T}}_{j\ell}.
\end{align}
\end{proof}
For a linear map
$\calE\colon\mathsf{L}\rbra{\calH}\to\mathsf{L}\rbra{\calH}$,
denote its normalized Choi operator~\cite{Choi1975} by
\begin{align}
    J\rbra*{\calE}
    = \rbra*{\calE\otimes\Id_{\mathsf R}}
       \rbra*{\ketbra{\Phi}{\Phi}}
    = \frac{1}{d}\sum_{i,j=0}^{d-1}
       \calE\rbra*{\ketbra{i}{j}}\otimes\ketbra{i}{j}.
\end{align}
The map can be recovered from its Choi operator.
For every $X\in\mathsf{L}\rbra{\calH}$ and
$i,j\in\cbra{0,\ldots,d-1}$,
\begin{align*}
    \Tr\rbra*{\ketbra{i}{j}X^{\mathsf T}}
    = \bra{j}X^{\mathsf T}\ket{i}
    = X_{ij}.
\end{align*}
Substituting this identity into the Choi expansion gives
\begin{align}
    d\Tr_{\mathsf R}\rbra*{
        J\rbra*{\calE}\rbra*{I\otimes X^{\mathsf T}}}
    = \sum_{i,j=0}^{d-1}
       X_{ij}\calE\rbra*{\ketbra{i}{j}}
    = \calE\rbra*{
         \sum_{i,j=0}^{d-1}X_{ij}\ketbra{i}{j}}
    = \calE\rbra*{X}.
\end{align}
For a linear map
$\calE\colon\mathsf{L}\rbra{\calH}\to\mathsf{L}\rbra{\calH}$,
let $\Abs*{\calE}_{\mathrm{HS}}$ denote the
Hilbert--Schmidt norm of its matrix representation in a
Hilbert--Schmidt orthonormal operator basis.
Equivalently, if $\cbra{E_a}_{a=1}^{d^2}$ is any operator basis satisfying $\Tr\rbra{E_a^\dagger E_b}=\delta_{ab}$, then 
\begin{align*}
    \Abs*{\calE}_{\mathrm{HS}}^2 = \sum_{a=1}^{d^2} \Abs*{\calE\rbra*{E_a}}_2^2.
\end{align*}
Denote the normalized Frobenius norm by
\begin{align*}
    \Abs*{\calE}_F^2
    = \frac{1}{d^2}\Abs*{\calE}_{\mathrm{HS}}^2  
    = \frac{1}{d^2}\sum_{a=1}^{d^2}\Abs*{\calE\rbra*{E_a}}_2^2.
\end{align*}
This normalization gives
\begin{align*}
    \Abs*{\Id_{\calH}}_F^2
    = \frac{1}{d^2}\sum_{a=1}^{d^2}\Abs*{E_a}_2^2
    = \frac{1}{d^2}\sum_{a=1}^{d^2}1
     = 1.
\end{align*}

For linear maps $\calE_1,\calE_2$ on
$\mathsf{L}\rbra{\calH}$, denote their normalized
Frobenius inner product by
\begin{align}
    \langle\calE_1,\calE_2\rangle_F
    = \frac{1}{d^2}\sum_{a=1}^{d^2}
      \Tr\rbra*{
          \calE_1\rbra{E_a}^\dagger
          \calE_2\rbra{E_a}}.
\end{align}
This inner product is independent of the chosen
Hilbert--Schmidt orthonormal operator basis and satisfies
$\langle\calE,\calE\rangle_F=\Abs*{\calE}_F^2$.

\begin{fact}[Choi--Frobenius identity]
\label{fact:choi-frobenius}
For linear maps
$\calE_1,\calE_2\colon\mathsf{L}\rbra{\calH}\to\mathsf{L}\rbra{\calH}$,
\begin{align*}
    \langle\calE_1,\calE_2\rangle_F
    = \Tr\rbra*{
        J\rbra{\calE_1}^\dagger J\rbra{\calE_2}}.
\end{align*}
In particular, for every such map $\calE$,
\begin{align}
    \Abs*{\calE}_F = \Abs*{J\rbra*{\calE}}_2.
    \label{eq:choi-frobenius-norm}
\end{align}
\end{fact}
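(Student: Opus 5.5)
The plan is to choose a convenient orthonormal basis and verify the identity by direct expansion. Take the matrix-unit basis $E_{ij}=\ketbra{i}{j}$, $i,j\in\cbra{0,\ldots,d-1}$, which satisfies $\Tr\rbra{E_{ij}^\dagger E_{k\ell}}=\delta_{ik}\delta_{j\ell}$. Since $\langle\calE_1,\calE_2\rangle_F$ does not depend on the chosen Hilbert--Schmidt orthonormal basis (as noted above), we may write
\begin{align*}
    \langle\calE_1,\calE_2\rangle_F
    = \frac{1}{d^2}\sum_{i,j=0}^{d-1}
      \Tr\rbra*{\calE_1\rbra{\ketbra{i}{j}}^\dagger
                \calE_2\rbra{\ketbra{i}{j}}}.
\end{align*}

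Next, expand both Choi operators using the definition
$J\rbra{\calE}=\frac1d\sum_{i,j}\calE\rbra{\ketbra{i}{j}}\otimes\ketbra{i}{j}$. Taking the adjoint gives
$J\rbra{\calE_1}^\dagger=\frac1d\sum_{i,j}\calE_1\rbra{\ketbra{i}{j}}^\dagger\otimes\ketbra{j}{i}$. Multiplying this by $J\rbra{\calE_2}$ and using the product rule for traces of tensor products yields
\begin{align*}
    \Tr\rbra*{J\rbra{\calE_1}^\dagger J\rbra{\calE_2}}
    = \frac{1}{d^2}\sum_{i,j,k,\ell}
      \Tr\rbra*{\calE_1\rbra{\ketbra{i}{j}}^\dagger\calE_2\rbra{\ketbra{k}{\ell}}}
      \Tr\rbra*{\ketbra{j}{i}\,\ketbra{k}{\ell}}.
\end{align*}
The reference-register factor equals $\delta_{ik}\delta_{j\ell}$. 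This collapses the sum to the diagonal terms $k=i$, $\ell=j$, which is exactly the expression for $\langle\calE_1,\calE_2\rangle_F$ obtained above.

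Setting $\calE_1=\calE_2=\calE$ gives $\Abs{\calE}_F^2=\Tr\rbra{J\rbra{\calE}^\dagger J\rbra{\calE}}=\Abs{J\rbra{\calE}}_2^2$. Taking nonnegative square roots then yields \cref{eq:choi-frobenius-norm}.

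The only step needing any care is the basis independence of $\langle\cdot,\cdot\rangle_F$, which justifies using the matrix-unit basis. It is already asserted in the text. If one wants it self-contained, note that any two orthonormal bases are related by a unitary $U$ on $\mathsf{L}\rbra{\calH}$, and the expression $\sum_a\Tr\rbra{\calE_1\rbra{E_a}^\dagger\calE_2\rbra{E_a}}$ is the trace of $\calE_1^\dagger\calE_2$ as a superoperator, which is invariant under this change of basis. Everything else is index bookkeeping, so I expect no real obstacle.
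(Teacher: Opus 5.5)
Your proof is correct and follows the paper's argument essentially verbatim: both use the matrix-unit basis, expand the two Choi operators, collapse the double sum via orthogonality of the reference factors, and then set $\calE_1=\calE_2$ and take square roots. Your added justification of basis independence, via the trace of the superoperator $\calE_1^\dagger\calE_2$, is a harmless supplement to what the paper simply asserts.
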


\begin{proof}
The matrix units $\cbra{\ketbra{i}{j}}_{i,j=0}^{d-1}$
form a Hilbert--Schmidt orthonormal operator basis, since
\begin{align}
    \Tr\rbra*{\rbra*{\ketbra{i}{j}}^\dagger
                    \ketbra{k}{\ell}}
    = \braket{i}{k}\braket{\ell}{j}
     = \delta_{ik}\delta_{j\ell}.
\end{align}
Applying this orthogonality to the reference factors in the
two Choi expansions gives
\begin{align}
    \Tr\rbra*{J\rbra{\calE_1}^\dagger J\rbra{\calE_2}}
    = \frac{1}{d^2}\sum_{i,j=0}^{d-1}
      \Tr\rbra*{
          \calE_1\rbra*{\ketbra{i}{j}}^\dagger
          \calE_2\rbra*{\ketbra{i}{j}}}
    = \langle\calE_1,\calE_2\rangle_F.
\end{align}
Setting $\calE_1=\calE_2=\calE$ and taking nonnegative
square roots gives the norm identity.
\end{proof}
Denote the diamond norm of $\calE$ by
\begin{align}
    \Abs*{\calE}_\diamond
    = \sup_{0\neq X\in\mathsf{L}\rbra*{\calH\otimes\calH}}
      \frac{
        \Abs*{\rbra*{\calE\otimes\Id_{\mathsf R}}
        \rbra*{X}}_1
      }{\Abs*{X}_1}.
\end{align}
Throughout, $\Abs*{\cdot}_F$ quantifies the magnitude of dissipation,
while $\Abs*{\cdot}_{\diamond}$ controls the strength of the generator
and finite-time approximation errors.

\subsection{Lindbladian dynamics} \label{sec:canonical-dissipation}

Let $H\in\mathsf{L}\rbra{\calH}$ be Hermitian, and let
$\cbra{L_a}_a\subseteq\mathsf{L}\rbra{\calH}$ be a finite
family of jump operators.
The associated dissipator acts on $X\in\mathsf{L}\rbra{\calH}$ as
\begin{align}
    \calD\rbra*{X} = \sum_a L_a X L_a^\dagger - \frac{1}{2}\cbra*{A,X},
    \qquad
    A = \sum_a L_a^\dagger L_a.
\end{align}
The corresponding time-independent Lindblad generator
is~\cite{GoriniKossakowskiSudarshan1976,Lindblad1976}
\begin{align}
    \calL\rbra*{X} = -i\sbra*{H,X} + \calD\rbra*{X}.
\end{align}
The maps $\calE_t=e^{t\calL}$, $t\geq0$, form a
one-parameter semigroup of quantum channels.

The Lindblad representation is not unique.
Subtracting the identity component from each jump operator
and adjusting $H$ accordingly yields a representation with
traceless jump operators without changing
$\calL$~\cite{HallCresserLiAndersson2014}.
Henceforth, let $H,\cbra{L_a}_a,A$, and $\calD$ denote the
corresponding quantities in a chosen representation with traceless
jump operators, so that $\Tr\rbra*{L_a}=0$ for every $a$.


The positive semidefinite operator $A$ determines the total
instantaneous jump rate. For an input state $\rho$, cyclicity
of the trace gives
\begin{align}
    \sum_a \Tr\rbra*{L_a\rho L_a^\dagger}
    = \sum_a \Tr\rbra*{L_a^\dagger L_a\rho}
    = \Tr\rbra*{A\rho}.
\end{align}
The vectorized jump operators give the positive semidefinite
jump block on $\calH\otimes\calH$,
\begin{align*}
    C = \sum_a \opket{L_a}\opbra{L_a}.
\end{align*}
Denote its trace by $\Gamma$.
The vectorization inner-product identity gives
\begin{align}
    \Gamma
    = \Tr\rbra*{C}
    = \sum_a \opbra{L_a}\opket{L_a}
    = \sum_a \overline{\Tr}\rbra*{L_a^\dagger L_a}
    = \overline{\Tr}\rbra*{A}.
\end{align}
Thus $\Gamma$ is the total jump rate for the maximally mixed
input $I/d$.
Let $\Pi=I-\ketbra{\Phi}{\Phi}$ be the orthogonal projector
onto $\ket{\Phi}^{\perp}$.
Tracelessness gives
$\bra{\Phi}\opket{L_a}=\overline{\Tr}\rbra{L_a}=0$,
so $\Pi\opket{L_a}=\opket{L_a}$ for every $a$.
Consequently,
\begin{align}
    \Pi C\Pi
    = \sum_a \Pi\opket{L_a}\opbra{L_a}\Pi
    = \sum_a \opket{L_a}\opbra{L_a}
    = C.
\end{align}

\begin{lemma}[Canonical jump block, adapted from {\cite[Appendix~C, Eq.~(C8)]{HallCresserLiAndersson2014}}]
\label{lem:canonical-dissipator}
Let $\calL$ be an $n$-qubit Lindblad generator with
traceless jump operators $\cbra{L_a}_a$.
The associated jump block $C$ satisfies
\begin{align*}
    C = \Pi J\rbra*{\calL}\Pi.
\end{align*}
\end{lemma}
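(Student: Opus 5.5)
By linearity of $X\mapsto J\rbra{\cdot}$ and of compression by $\Pi$, I will compute $J\rbra{\calL}$ term by term, reading each term in the vectorized picture, and then show that compression by $\Pi$ removes everything except the jump terms.

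\textbf{Step 1: rewrite each term of $\calL$ as a superoperator on $\ketbra{\Phi}{\Phi}$.} For operators $X,Y$ acting on the system, we have
\begin{align*}
    \rbra*{X\otimes I}\ketbra{\Phi}{\Phi}\rbra*{Y^\dagger\otimes I}
    = \opket{X}\opbra{Y}.
\end{align*}
Apply $\calL\otimes\Id_{\mathsf R}$ to $\ketbra{\Phi}{\Phi}$. The Hamiltonian part gives
\[
-i\rbra*{\opket{H}\bra{\Phi}-\ket{\Phi}\opbra{H}},
\]
since $H^\dagger=H$. Each jump sandwich gives $\opket{L_a}\opbra{L_a}$, and these sum to $C$. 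The anticommutator part gives
\[
-\tfrac12\rbra*{\opket{A}\bra{\Phi}+\ket{\Phi}\opbra{A}},
\]
since $A^\dagger=A$. Therefore
\begin{align*}
    J\rbra{\calL}
    = C
      -\rbra*{\opket{K}\bra{\Phi}+\ket{\Phi}\opbra{K}},
    \qquad K=iH+\tfrac12 A,
\end{align*}
where I have used $\opbra{K}=\bra{\Phi}\rbra*{K^\dagger\otimes I}$ and $K^\dagger=-iH+\tfrac12 A$ to combine the Hamiltonian and anticommutator contributions into a single $K$-term.

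\textbf{Step 2: compress by $\Pi$.} Every term other than $C$ has $\ket{\Phi}$ as its bra or its ket. Because $\Pi\ket{\Phi}=0$, each of these terms is annihilated by the compression $\Pi(\cdot)\Pi$. This leaves
\[
\Pi J\rbra{\calL}\Pi=\Pi C\Pi .
\]
We showed just before the lemma, using tracelessness through $\bra{\Phi}\opket{L_a}=\overline{\Tr}\rbra{L_a}=0$, that $\Pi C\Pi=C$. This completes the proof.

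\textbf{Anticipated difficulty.} There is no real obstacle. The only step needing care is the identity in Step 1 for one-sided actions such as $X\rho$, namely that the one-sided actions give $\opket{X}\bra{\Phi}$ and $\ket{\Phi}\opbra{X}$, with the adjoint placed correctly. Even so, the exact form of these cross terms is irrelevant, because they are killed by $\Pi$ in any case.
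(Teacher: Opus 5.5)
Your proposal is correct and follows the paper's proof: you expand $J(\calL)$ via the vectorization identities, note that the Hamiltonian and anticommutator terms carry $\ket{\Phi}$ on one side and are annihilated by compression with $\Pi$, and conclude with $\Pi C\Pi=C$ from tracelessness. Combining the one-sided terms into the single operator $K=iH+\tfrac12 A$ is only a cosmetic difference.
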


\begin{proof}
For each $a$, normalized vectorization gives
\begin{align*}
    \rbra*{L_a\otimes I}\ketbra{\Phi}{\Phi}
    \rbra*{L_a^\dagger\otimes I}
    = \opket{L_a}\opbra{L_a}.
\end{align*}
Since $H$ and $A$ are Hermitian, for $X\in\cbra{H,A}$,
\begin{align}
    \rbra*{X\otimes I}\ketbra{\Phi}{\Phi}
    = \opket{X}\bra{\Phi},
    \qquad
    \ketbra{\Phi}{\Phi}\rbra*{X\otimes I}
    = \ket{\Phi}\opbra{X}.
\end{align}
Substituting these identities into
$J\rbra{\calL}
=\rbra{\calL\otimes\Id_{\mathsf R}}\rbra*{\ketbra{\Phi}{\Phi}}$
yields
\begin{align*}
    J\rbra*{\calL}
    = C
      -i\opket{H}\bra{\Phi}
      +i\ket{\Phi}\opbra{H}
      -\frac12\opket{A}\bra{\Phi}
      -\frac12\ket{\Phi}\opbra{A}.
\end{align*}
Since $\Pi\ket{\Phi}=0$ and $\bra{\Phi}\Pi=0$, all terms
involving $H$ or $A$ are zero after projection by $\Pi$
on both sides. Using $\Pi C\Pi=C$ then gives
\begin{align*}
    \Pi J\rbra*{\calL}\Pi
    = \Pi C\Pi
    = C.
\end{align*}
\end{proof}

By \cref{lem:canonical-dissipator}, $C$ is uniquely determined
by $\calL$.
Since $C$ is the Choi operator of the jump map
$X\mapsto\sum_a L_aXL_a^\dagger$, injectivity of the Choi
representation implies that $C$ determines this map.
Moreover, \cref{fact:vectorized-partial-trace} gives
\begin{align}
    \Tr_{\mathsf S}\rbra*{C}
    = \sum_a \Tr_{\mathsf S}\rbra*{\opket{L_a}\opbra{L_a}}
    = \frac{1}{d}\sum_a\rbra*{L_a^\dagger L_a}^{\mathsf T}
    = \frac{A^{\mathsf T}}{d}.
\end{align}
Thus $A=d\rbra*{\Tr_{\mathsf S}\rbra*{C}}^{\mathsf T}$
is also determined by $C$.
The jump map and $A$ together determine $\calD$.
Consequently, $\calD$ is uniquely determined by $\calL$
and is called its \emph{canonical dissipator}.
The total jump rate $\Gamma=\Tr\rbra{C}$ and, when
$\Gamma>0$, the normalized jump block $C/\Gamma$ are
also independent of the chosen representation with traceless jump operators.
The latter is the jump state introduced below.

\begin{definition}[Jump state] \label{def:jump-state}
Let $\calD$ be an $n$-qubit canonical dissipator with
jump block $C$ and total jump rate $\Gamma$.
For $\Gamma>0$, the jump state of $\calD$ is
\begin{align*}
    \sigma = \frac{C}{\Gamma}.
\end{align*}
\end{definition}

For $\Gamma>0$, $\sigma=C/\Gamma$ is positive semidefinite
and satisfies
\begin{align}
    \Tr\rbra{\sigma}
    = \frac{\Tr\rbra{C}}{\Gamma}
    = 1,
    \qquad
    \sigma\ket{\Phi}
    = \frac{C\ket{\Phi}}{\Gamma}
    = 0.
\end{align}
Thus $\sigma$ is a quantum state supported on
$\ket{\Phi}^{\perp}$.
By \cref{fact:vectorized-partial-trace}, its reference
marginal satisfies
\begin{align} \label{eq:jump-state-marginal}
    \sigma_{\mathsf R}
    = \Tr_{\mathsf S}\rbra*{\sigma}
    = \frac{1}{\Gamma}\Tr_{\mathsf S}\rbra*{C}
    = \frac{A^{\mathsf T}}{d\Gamma}.
\end{align}
If $\Gamma=0$, then
\begin{align}
    \Gamma
    = \frac{1}{d}\sum_a\Tr\rbra*{L_a^\dagger L_a}
    = \frac{1}{d}\sum_a\Abs{L_a}_2^2
    = 0.
\end{align}
Since each summand is nonnegative, $\Abs{L_a}_2=0$
and hence $L_a=0$ for every $a$.
Consequently, $\calD=0$.
No jump state is assigned in this case.

\begin{definition}[Jump-state functional]
\label{def:jump-functional}
Let $\sigma$ be the jump state of a nonzero canonical
dissipator on $n$ qubits.
The \emph{jump-state functional} $h$ is given by
\begin{align} \label{eq:jump-functional}
    h\rbra*{\sigma}
    = \Tr\rbra*{\sigma^2}
      + \frac{1}{2}\rbra*{d\Tr\rbra*{\sigma_{\mathsf R}^2}-1}.
\end{align}
\end{definition}

The following lemma expresses the normalized Frobenius norm
of $\calD$ in terms of its total jump rate $\Gamma$ and
the jump-state functional $h\rbra{\sigma}$.

\begin{lemma}[Rate--state identity]
\label{lem:rate-state-identity}
Let $\calD$ be a nonzero $n$-qubit canonical dissipator
with total jump rate $\Gamma>0$ and jump state $\sigma$.
Then
\begin{align*}
    \Abs*{\calD}_F = \Gamma\sqrt{1+h\rbra*{\sigma}}.
\end{align*}
\end{lemma}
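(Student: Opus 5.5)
We compute $\Abs{\calD}_F^2$ through the Choi--Frobenius identity (\cref{fact:choi-frobenius}), $\Abs{\calD}_F^2=\Abs{J(\calD)}_2^2$. As in the proof of \cref{lem:canonical-dissipator}, but dropping the Hamiltonian part, we write
\begin{align*}
    J\rbra*{\calD}
    = C - \frac12\opket{A}\bra{\Phi} - \frac12\ket{\Phi}\opbra{A}.
\end{align*}
The plan is to expand $\Tr\rbra{J(\calD)^\dagger J(\calD)}$ term by term. We use $C\ket{\Phi}=0$, $C=C^\dagger$, $A=A^\dagger$, and the inner-product identity $\opbra{X}\opket{Y}=\overline{\Tr}(X^\dagger Y)$.

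\textbf{Cross terms.} Terms pairing $C$ with $\opket{A}\bra{\Phi}$ have the form $\Tr\rbra{C\opket{A}\bra{\Phi}}=\bra{\Phi}C\opket{A}=0$, and their adjoints vanish similarly. Hence only $\Tr(C^2)$ and the Gram terms of the rank-two part survive.

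\textbf{Rank-two part.} Let $M=\opket{A}\bra{\Phi}+\ket{\Phi}\opbra{A}$. Then
\begin{align*}
    \Tr(M^2)=2\opbra{A}\opket{A}+2\abs{\braket{\Phi}{\opket{A}}}^2\cdot\text{(appropriately)},
\end{align*}
which concretely evaluates to $\Tr(M^2)=2\,\overline{\Tr}(A^2)+2\,\overline{\Tr}(A)^2$. We use $\bra{\Phi}\opket{A}=\overline{\Tr}(A)=\Gamma$ and $\opbra{A}\opket{A}=\overline{\Tr}(A^2)$. With the factor $\tfrac14$, this gives
\begin{align*}
    \Abs{\calD}_F^2=\Tr(C^2)+\tfrac12\overline{\Tr}(A^2)+\tfrac12\Gamma^2.
\end{align*}

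\textbf{Rewriting in terms of $\sigma$.} First, $\Tr(C^2)=\Gamma^2\Tr(\sigma^2)$. Second, by \eqref{eq:jump-state-marginal}, $A^{\mathsf T}=d\Gamma\sigma_{\mathsf R}$. Since transpose preserves $\Tr(X^2)$, we get
\begin{align*}
    \overline{\Tr}(A^2)=d^{-1}d^2\Gamma^2\Tr(\sigma_{\mathsf R}^2)=d\Gamma^2\Tr(\sigma_{\mathsf R}^2).
\end{align*}
Substituting both gives
\begin{align*}
    \Abs{\calD}_F^2=\Gamma^2\rbra*{\Tr(\sigma^2)+\tfrac12 d\Tr(\sigma_{\mathsf R}^2)+\tfrac12}=\Gamma^2\rbra{1+h(\sigma)}.
\end{align*}
Taking nonnegative square roots, using $\Gamma>0$, yields the claim.

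The only delicate step is the bookkeeping of the rank-two part: getting the factor $2\,\overline{\Tr}(A)^2$ from the terms $\Tr\rbra{\opket{A}\bra{\Phi}\opket{A}\bra{\Phi}}$ and its adjoint, and checking that the constants combine into $\tfrac12(d\Tr\sigma_{\mathsf R}^2-1)+1$. I would verify this on a single-qubit dephasing example as a sanity check.
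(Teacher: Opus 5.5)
Your proof is correct and follows the paper's argument essentially step for step: the Choi--Frobenius identity, the decomposition $J(\calD)=C-\frac12\opket{A}\bra{\Phi}-\frac12\ket{\Phi}\opbra{A}$, the vanishing of the $C$ cross terms via $C\ket{\Phi}=0$, the rank-two contribution $\frac12\Gamma^2+\frac12\overline{\Tr}(A^2)$, and the substitution $A=d\Gamma\sigma_{\mathsf R}^{\mathsf T}$. The only blemish is the placeholder ``(appropriately)'' in your display for $\Tr(M^2)$; replace it with the explicit four-term expansion, whose traces are $\Gamma^2,\ \overline{\Tr}(A^2),\ \overline{\Tr}(A^2),\ \Gamma^2$, giving $2\Gamma^2+2\overline{\Tr}(A^2)$.
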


\begin{proof}
The Choi operator of $\calD$ is
\begin{align}
    J\rbra*{\calD}
    = C-\frac12\opket{A}\bra{\Phi}
       -\frac12\ket{\Phi}\opbra{A}.
\end{align}
Since $C=C^\dagger$, $J\rbra{\calD}$ is Hermitian.
The identities $C\ket{\Phi}=0$ and $\bra{\Phi}C=0$
imply that all cross terms involving $C$ have zero trace.
Since $A=A^\dagger$, normalized vectorization gives
\begin{align}
    \bra{\Phi}\opket{A}
    = \opbra{A}\ket{\Phi}
    = \overline{\Tr}\rbra{A}
    = \Gamma,
    \qquad
    \opbra{A}\opket{A}
    = \overline{\Tr}\rbra{A^2}.
\end{align}
By \cref{fact:choi-frobenius}, expanding the square
and taking the trace gives
\begin{align}
    \Abs*{\calD}_F^2
    &= \Tr\rbra*{J\rbra{\calD}^2}\\
    &= \Tr\rbra*{C^2}
       +\frac14\rbra*{
          \Gamma^2+\overline{\Tr}\rbra{A^2}
          +\overline{\Tr}\rbra{A^2}+\Gamma^2}\\
    &= \Tr\rbra*{C^2}
       +\frac12\Gamma^2
       +\frac12\overline{\Tr}\rbra{A^2}.
\end{align}
The relation $C=\Gamma\sigma$ gives
$\Tr\rbra{C^2}=\Gamma^2\Tr\rbra{\sigma^2}$.
Moreover, $A=d\Gamma\sigma_{\mathsf R}^{\mathsf T}$
and invariance of the trace under transposition give
\begin{align}
    \overline{\Tr}\rbra*{A^2}
    = \frac1d\Tr\rbra*{
        \rbra*{d\Gamma\sigma_{\mathsf R}^{\mathsf T}}^2}
    = d\Gamma^2\Tr\rbra*{\sigma_{\mathsf R}^2}.
\end{align}
Substituting these expressions yields
\begin{align}
    \Abs*{\calD}_F^2
    = \Gamma^2\rbra*{
        \Tr\rbra*{\sigma^2}
        +\frac12+\frac d2\Tr\rbra*{\sigma_{\mathsf R}^2}}
    = \Gamma^2\rbra*{1+h\rbra*{\sigma}}.
\end{align}
Taking nonnegative square roots and using $\Gamma>0$
completes the proof.
\end{proof}

\begin{lemma}[Strength bounds]
\label{lem:generator-strength-bounds}
Let $\calL$ be an $n$-qubit Lindblad generator, and
let $\calD$ be its canonical dissipator with total
jump rate $\Gamma$.
Then
\begin{align*}
    \Gamma \leq \Abs*{\calL}_\diamond,
    \qquad
    \Abs*{\calD}_F
    \leq \frac{\Abs*{\calL}_\diamond}{\sqrt2}.
\end{align*}
\end{lemma}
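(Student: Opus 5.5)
The plan is to pass through the Choi operator $J\rbra{\calL}=\rbra{\calL\otimes\Id_{\mathsf R}}\rbra{\ketbra{\Phi}{\Phi}}$. Since $\ketbra{\Phi}{\Phi}$ has trace norm $1$, the definition of the diamond norm gives $\Abs{J\rbra{\calL}}_1\leq\Abs{\calL}_\diamond$. Both inequalities will then follow from norm comparisons for $J\rbra{\calL}$.

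\textbf{Rate bound.} By \cref{lem:canonical-dissipator}, $C=\Pi J\rbra{\calL}\Pi$, and $C$ is positive semidefinite. Hence
\[
\Gamma=\Tr\rbra{C}=\Abs{\Pi J\rbra{\calL}\Pi}_1\leq\Abs{\Pi}_\infty^2\Abs{J\rbra{\calL}}_1\leq\Abs{\calL}_\diamond,
\]
where the middle step is Hölder's inequality for a compression by a projector.

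\textbf{Frobenius bound.} This takes three steps.
\begin{enumerate}
\item \emph{$J\rbra{\calL}$ is Hermitian and traceless.} The generator $\calL$ is Hermiticity-preserving, so $J\rbra{\calL}$ is Hermitian. Since $\calL$ annihilates traces, $\Tr_{\mathsf S}J\rbra{\calL}=0$, and in particular $\Tr J\rbra{\calL}=0$.
\item \emph{Trace norm controls Hilbert--Schmidt norm.} For a traceless Hermitian $X=X_+-X_-$, we have $\Tr X_+=\Tr X_-=\Abs{X}_1/2$. Together with $\Abs{X_\pm}_2\leq\Tr X_\pm$ and orthogonality of $X_+$ and $X_-$, this gives
\[
\Abs{X}_2^2=\Abs{X_+}_2^2+\Abs{X_-}_2^2\leq\Abs{X}_1^2/2 .
\]
Therefore $\Abs{J\rbra{\calL}}_2\leq\Abs{\calL}_\diamond/\sqrt2$.
\item \emph{Pythagoras.} By \cref{fact:choi-frobenius}, $\Abs{\calD}_F=\Abs{J\rbra{\calD}}_2$, so it remains to show $\Abs{J\rbra{\calD}}_2\leq\Abs{J\rbra{\calL}}_2$.
\end{enumerate}

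For the last step, first replace $H$ by $H-\overline{\Tr}\rbra{H}I$. This leaves $\calL$ and $\calD$ unchanged, so we may assume $\overline{\Tr}\rbra{H}=0$, that is, $\bra{\Phi}\opket{H}=0$. From the proof of \cref{lem:canonical-dissipator},
\[
J\rbra{\calL}=J\rbra{\calD}+K,\qquad K=-i\opket{H}\bra{\Phi}+i\ket{\Phi}\opbra{H}.
\]
It suffices to show $\Tr\rbra{J\rbra{\calD}K}=0$, since then $\Abs{J\rbra{\calL}}_2^2=\Abs{J\rbra{\calD}}_2^2+\Abs{K}_2^2$. Recall that
\[
J\rbra{\calD}=C-\tfrac12\opket{A}\bra{\Phi}-\tfrac12\ket{\Phi}\opbra{A}.
\]
The terms pairing $C$ with $K$ vanish because $C\ket{\Phi}=0$ and $\bra{\Phi}C=0$. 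Terms such as $\Tr\rbra{\opket{A}\bra{\Phi}\opket{H}\bra{\Phi}}$ contain the factor $\bra{\Phi}\opket{H}=0$. The only surviving terms are
\[
-\tfrac12\rbra{i\opbra{H}\opket{A}-i\opbra{A}\opket{H}}.
\]
This vanishes because $\opbra{H}\opket{A}=\overline{\Tr}\rbra{HA}$ is real for Hermitian $H,A$.

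The main point of care is this orthogonality computation: one must remember to recenter $H$ to be traceless, otherwise the cross terms proportional to $\overline{\Tr}\rbra{H}\,\Gamma$ do not cancel. The traceless-Hermitian norm comparison in step 2 is standard.
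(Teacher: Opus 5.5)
Your proof is correct and follows the paper's argument essentially step for step: the compression bound $\Gamma=\Abs{\Pi J\rbra{\calL}\Pi}_1\leq\Abs{J\rbra{\calL}}_1\leq\Abs{\calL}_\diamond$, Frobenius orthogonality of the Hamiltonian and dissipative Choi operators, and the inequality $\Abs{X}_2^2\leq\Abs{X}_1^2/2$ applied to the traceless Hermitian $J\rbra{\calL}$. One small correction: recentering $H$ is harmless but unnecessary, because the two cross terms proportional to $\Gamma\overline{\Tr}\rbra{H}$ enter with opposite signs ($\pm\tfrac{i}{2}\Gamma\overline{\Tr}\rbra{H}$) and cancel on their own, as in the paper's computation, so your closing remark that they otherwise fail to cancel is inaccurate.
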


\begin{proof}
By \cref{lem:canonical-dissipator},
$C=\Pi J\rbra{\calL}\Pi\geq0$.
Compression by the orthogonal projector $\Pi$ does not
increase the trace norm. Thus
\begin{align}
    \Gamma
    = \Tr\rbra{C}
    = \Abs{C}_1
    = \Abs*{\Pi J\rbra{\calL}\Pi}_1
    \leq \Abs*{J\rbra{\calL}}_1
    \leq \Abs*{\calL}_\diamond.
\end{align}
The last inequality follows from the normalized Choi
representation and the diamond-norm bound applied to
$\ketbra{\Phi}{\Phi}$, whose trace norm is one.

For the second bound, the Hamiltonian and dissipative
parts have Hermitian Choi operators
\begin{align}
    J\rbra*{-i\sbra*{H,\cdot}}
    = -i\opket{H}\bra{\Phi}
       +i\ket{\Phi}\opbra{H},
    \qquad
    J\rbra*{\calD}
    = C-\frac12\opket{A}\bra{\Phi}
       -\frac12\ket{\Phi}\opbra{A}.
\end{align}
Using $C\ket{\Phi}=0$, $\bra{\Phi}C=0$, and
$\overline{\Tr}\rbra{A}=\Gamma$, the trace of their
product is
\begin{align}
    \Tr\rbra*{
        J\rbra*{-i\sbra*{H,\cdot}}J\rbra{\calD}}
    &= \frac{i}{2}\rbra*{
        \Gamma\overline{\Tr}\rbra{H}
        +\overline{\Tr}\rbra{AH}
        -\overline{\Tr}\rbra{HA}
        -\Gamma\overline{\Tr}\rbra{H}}\\
    &= \frac{i}{2}\rbra*{
        \overline{\Tr}\rbra{AH}
        -\overline{\Tr}\rbra{HA}}\\
    &= 0,
\end{align}
where the last equality follows from cyclicity of the trace.
By \cref{fact:choi-frobenius}, this gives
$\langle\calD,-i\sbra*{H,\cdot}\rangle_F=0$.
Consequently,
\begin{align}
    \Abs*{J\rbra{\calL}}_2^2
    = \Abs*{J\rbra*{-i\sbra*{H,\cdot}}}_2^2
      +\Abs*{J\rbra{\calD}}_2^2
    \geq \Abs*{J\rbra{\calD}}_2^2
    = \Abs*{\calD}_F^2.
\end{align}
Moreover, $J\rbra{\calL}$ is Hermitian.
The Lindblad form and cyclicity of the trace give
$\Tr\rbra*{\calL\rbra{X}}=0$ for every $X$, so
\begin{align*}
    \Tr\rbra*{J\rbra{\calL}}
    = \Tr\rbra*{\calL\rbra{I/d}}
    = 0.
\end{align*}
For any traceless Hermitian operator $X$, the positive
eigenvalues and the absolute values of the negative
eigenvalues each sum to $\Abs{X}_1/2$.
For each sign, the sum of the squared magnitudes is
at most the square of their sum. Hence
\begin{align}
    \Abs{X}_2^2
    \leq 2\rbra*{\frac{\Abs{X}_1}{2}}^2
    = \frac{\Abs{X}_1^2}{2}.
\end{align}
Applying this inequality to $J\rbra{\calL}$ gives
\begin{align}
    \Abs*{\calL}_\diamond
    \geq \Abs*{J\rbra{\calL}}_1
    \geq \sqrt2\Abs*{J\rbra{\calL}}_2
    \geq \sqrt2\Abs*{\calD}_F.
\end{align}
\end{proof}

\subsection{Dissipative strength} \label{sec:gauge-invariant-dissipation}

The Hamiltonian--dissipator decomposition of a Lindblad
generator generally depends on the chosen representation.
In the gauge with traceless jump operators, however, $C$, $A$, and the
canonical dissipator $\calD$ are uniquely determined by
$\calL$, as established in~\cref{sec:canonical-dissipation}.
Consequently, $\Abs*{\calD}_F$, the total jump rate
$\Gamma$, and, when $\Gamma>0$, the jump state $\sigma$
depend only on $\calL$, although the individual traceless
jump operators need not be unique.

The following lemma characterizes $\Abs*{\calD}_F$
as the normalized Frobenius distance from $\calL$ to
the space of Hamiltonian generators.

\begin{lemma}[Distance from Hamiltonian generators]
\label{lem:dissipation-distance-hamiltonian}
Let $\calL=-i\sbra{H,\cdot}+\calD$ be an $n$-qubit
Lindblad generator, where $H=H^\dagger$ and $\calD$
is its canonical dissipator.
Then
\begin{equation}
\label{eq:dissipation-distance-hamiltonian}
    \Abs*{\calD}_F
    = \min_{K=K^\dagger}
      \Abs*{\calL+i\sbra*{K,\cdot}}_F.
\end{equation}
The minimum is attained precisely when $K-H$ is
a real scalar multiple of $I$.
In particular, $\Abs*{\calD}_F=0$ if and only if
$\calL$ is a Hamiltonian generator.
\end{lemma}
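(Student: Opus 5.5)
The plan is to reduce the statement to the orthogonality established in the proof of \cref{lem:generator-strength-bounds}. For a Hermitian $K$, linearity gives
\[
\calL+i\sbra{K,\cdot} = -i\sbra{H-K,\cdot}+\calD .
\]
Write $M=H-K$, which is Hermitian. The computation in the proof of \cref{lem:generator-strength-bounds} used only that $H$ is Hermitian, together with $C\ket{\Phi}=0$ and $\overline{\Tr}\rbra{A}=\Gamma$. That argument, with $M$ in place of $H$, therefore shows that
\[
\Tr\rbra*{J\rbra*{-i\sbra{M,\cdot}}J\rbra{\calD}}=0 .
\]
Both Choi operators are Hermitian, so by \cref{fact:choi-frobenius} this gives $\langle -i\sbra{M,\cdot},\calD\rangle_F=0$. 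Expanding the square then yields the Pythagorean identity
\[
\Abs*{\calL+i\sbra{K,\cdot}}_F^2=\Abs*{\sbra{M,\cdot}}_F^2+\Abs*{\calD}_F^2 .
\]
This is at least $\Abs*{\calD}_F^2$, with equality exactly when the map $X\mapsto\sbra{M,X}$ vanishes in normalized Frobenius norm.

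Choosing $K=H$ attains the value $\Abs*{\calD}_F$, so the minimum exists and equals $\Abs*{\calD}_F$. For the characterization of minimizers, we need that $\Abs*{\sbra{M,\cdot}}_F=0$ forces $\sbra{M,X}=0$ for all $X$. This follows because the norm is the Hilbert--Schmidt norm of the matrix representation. Hence $M$ commutes with every matrix unit $\ketbra{i}{j}$, so $M$ is a scalar multiple of $I$, and Hermiticity makes the scalar real. Conversely, if $M=cI$ the commutator term vanishes. This shows that minimizers are exactly the $K$ with $K-H\in\mathbb{R}I$.

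For the final claim, if $\Abs*{\calD}_F=0$ then $\calD=0$ and $\calL=-i\sbra{H,\cdot}$ is Hamiltonian. Conversely, suppose $\calL=-i\sbra{K,\cdot}$ for some Hermitian $K$. Then the minimum in \eqref{eq:dissipation-distance-hamiltonian} is $0$, so $\Abs*{\calD}_F=0$. An alternative route is that $\calL$ with no jump operators is a valid traceless representation, and uniqueness of the canonical dissipator forces $\calD=0$.

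The only point needing care is the orthogonality step: the earlier trace computation must carry over verbatim when $H$ is replaced by an arbitrary Hermitian $M$. I expect it to, since it used only that $\overline{\Tr}\rbra{AM}=\overline{\Tr}\rbra{MA}$. The rest is routine.
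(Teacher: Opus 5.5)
Your proof is correct and follows essentially the same route as the paper. You establish $\langle -i\sbra{H-K,\cdot},\calD\rangle_F=0$ through the Choi-operator trace computation, which you reuse from the strength-bounds lemma whereas the paper redoes it for a general Hermitian $K$. You then derive the Pythagorean identity and characterize the minimizers by commutation with all matrix units, exactly as the paper does.
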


\begin{proof}
For every Hermitian $K$, the Choi operators
\begin{align}
    J\rbra{\calD}
    = C-\frac12\opket{A}\bra{\Phi}
       -\frac12\ket{\Phi}\opbra{A},
    \qquad
    J\rbra*{-i\sbra*{K,\cdot}}
    = -i\opket{K}\bra{\Phi}
       +i\ket{\Phi}\opbra{K}
\end{align}
are Hermitian.
Using $C\ket{\Phi}=0$ and $\bra{\Phi}C=0$,
the terms involving $C$ have zero trace.
The two terms proportional to
$\Gamma\overline{\Tr}\rbra{K}$ cancel, leaving
\begin{align*}
    \Tr\rbra*{
        J\rbra{\calD}J\rbra*{-i\sbra*{K,\cdot}}}
    = \frac{i}{2}\rbra*{
        \overline{\Tr}\rbra{AK}
        -\overline{\Tr}\rbra{KA}}
    = 0,
\end{align*}
where the last equality follows from cyclicity of the trace.
By \cref{fact:choi-frobenius}, this gives
$\langle\calD,-i\sbra*{K,\cdot}\rangle_F=0$.
Since $H-K$ is Hermitian, applying this orthogonality
with $K$ replaced by $H-K$ gives
\begin{align}
    \Abs*{\calL+i\sbra*{K,\cdot}}_F^2
    = \Abs*{\calD-i\sbra*{H-K,\cdot}}_F^2
    = \Abs*{\calD}_F^2
       +\Abs*{-i\sbra*{H-K,\cdot}}_F^2.
\end{align}
The last term is nonnegative and is zero precisely
when $H-K$ commutes with every matrix unit, or
equivalently, when $H-K$ is a real scalar multiple of $I$.
Taking $K=H$ therefore proves the distance formula
and the characterization of its minimizers.
Finally, the minimum is zero exactly when
$\calL=-i\sbra{K,\cdot}$ for some Hermitian $K$,
which proves the last assertion.
\end{proof}

The testing problem therefore concerns the normalized
Frobenius distance of $\calL$ from the space of Hamiltonian
generators.
Here \emph{dissipation} refers to non-Hamiltonian generator
strength in this norm.
No identification with thermodynamic heat flow or a
state-dependent entropy-production rate is assumed.

For an arbitrary Lindblad representation of $\calL$
with jump operators $\cbra{V_a}_a$, centering these operators as
$L_a=V_a-\overline{\Tr}\rbra{V_a}I$ and adjusting the
Hamiltonian accordingly leaves $\calL$ unchanged.
Expanding the centered products and using
$\overline{\Tr}\rbra{I}=1$ gives
\begin{align}
    \Gamma
    = \sum_a\overline{\Tr}\rbra{L_a^\dagger L_a}
    = \sum_a\rbra*{
        \overline{\Tr}\rbra{V_a^\dagger V_a}
        -\abs*{\overline{\Tr}\rbra{V_a}}^2}.
\end{align}
By contrast, the total jump rate for the maximally mixed
input $I/d$ in the uncentered representation is
$\sum_a\overline{\Tr}\rbra{V_a^\dagger V_a}$,
which generally depends on the chosen representation.

The following locality condition is adapted from~\cite{Cai2026}.
\begin{definition}[Local jump operators]
\label{def:local-jump-operators}
Let $k\geq1$ be an integer, and let $\cbra{L_a}_a$ be
a finite family of $n$-qubit jump operators.
For each $a$ and $P\in\calP_n$, denote the normalized
Pauli coefficient of $L_a$ corresponding to $P$ by
$\gamma_{a,P}=\overline{\Tr}\rbra{PL_a}$.
A jump operator $L_a$ is called $k$-local if there exists
a set $S_a\subseteq\cbra{1,\ldots,n}$ with
$\abs{S_a}\leq k$ such that $\gamma_{a,P}=0$ for every
$P\in\calP_n$ satisfying $\supp\rbra{P}\nsubseteq S_a$.
The family has locality at most $k$ if every $L_a$
is $k$-local.

A canonical dissipator has locality at most $k$ if it
can be represented by traceless $k$-local jump operators.
\end{definition}

Centering does not enlarge the support of a jump
operator. If $V_a=v_a\otimes I_{S_a^c}$, where $v_a$ acts
on the qubits in $S_a$, then
\begin{align*}
    \overline{\Tr}\rbra{V_a}
    = \frac{2^{n-\abs{S_a}}\Tr\rbra{v_a}}{2^n}
    = \frac{\Tr\rbra{v_a}}{2^{\abs{S_a}}}.
\end{align*}
The centered jump operator therefore satisfies
\begin{align}
    L_a
    = V_a-\overline{\Tr}\rbra{V_a}I
    = \rbra*{
        v_a-\frac{\Tr\rbra{v_a}}{2^{\abs{S_a}}}I_{S_a}}
      \otimes I_{S_a^c}.
\end{align}
Thus a representation with $k$-local jump operators yields a
canonical dissipator with locality at most $k$.
This locality condition requires the existence of a
representation with traceless $k$-local jump operators.
It does not require every Lindblad representation of
the same generator to have $k$-local jump operators.

Let $k\geq1$ be an integer, and let $\calL$ be a Lindblad
generator whose nonzero canonical dissipator has locality
at most $k$ and jump state $\sigma$.
For any $c>0$, replacing $H$ by $cH$ and each $L_a$ by
$\sqrt{c}\,L_a$ gives a representation of $c\calL$.
Its jump block and total jump rate are therefore $cC$ and
$c\Gamma$, respectively, so its jump state is
$cC/\rbra{c\Gamma}=\sigma$.
Since multiplication by a nonzero scalar preserves the
support of each jump operator, the locality is unchanged.
Moreover,
$\Abs{c\calL}_\diamond=c\Abs{\calL}_\diamond$.
Thus, for any prescribed $\Lambda>0$, choosing $c>0$
sufficiently small ensures
$\Abs{c\calL}_\diamond\leq\Lambda$ without changing $\sigma$
or the locality of the canonical dissipator.
Consequently, such an upper bound does not restrict the
class of jump states of canonical dissipators with
locality at most $k$.

The following bounds will be used in both functional estimation
and tolerant testing.
\begin{lemma}[Locality bounds for the jump state]
\label{lem:jump-state-locality-bounds}
Fix a locality bound $k\geq1$.
Let $\sigma$ be the jump state of a nonzero $n$-qubit canonical
dissipator with locality at most $k$, and let $d=2^n$.
Then
\[
  \sigma_{\mathsf R}\leq\frac{2^k}{d}I,
  \qquad
  1\leq d\Tr\rbra*{\sigma_{\mathsf R}^2}\leq2^k,
  \qquad
  0\leq h\rbra{\sigma}\leq\frac{2^k+1}{2}.
\]
\end{lemma}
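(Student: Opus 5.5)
The plan is to derive all three bounds from the identity $\sigma_{\mathsf R}=A^{\mathsf T}/(d\Gamma)$ in \eqref{eq:jump-state-marginal}, so the first bound reduces to the operator inequality $A\leq 2^k\Gamma I$. Fix a representation by traceless $k$-local jump operators $L_a=\ell_a\otimes I_{S_a^c}$ with $\abs{S_a}\leq k$. For each $a$, $L_a^\dagger L_a=\rbra{\ell_a^\dagger\ell_a}\otimes I_{S_a^c}$. We bound the local positive operator by its trace: $\ell_a^\dagger\ell_a\leq\Tr\rbra{\ell_a^\dagger\ell_a}I_{S_a}$, and $\Tr\rbra{\ell_a^\dagger\ell_a}=2^{\abs{S_a}}\overline{\Tr}\rbra{L_a^\dagger L_a}$ because the normalized trace factorizes over the tensor product. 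This gives $L_a^\dagger L_a\leq 2^k\,\overline{\Tr}\rbra{L_a^\dagger L_a}\,I$. Summing over $a$ yields $A\leq 2^k\Gamma I$. Transposition preserves the Loewner order, so $\sigma_{\mathsf R}\leq (2^k/d)I$.

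For the second chain, the lower bound $d\Tr\rbra{\sigma_{\mathsf R}^2}\geq1$ is Cauchy--Schwarz: $\sigma_{\mathsf R}$ is a state on a $d$-dimensional space, so $\Tr\rbra{\sigma_{\mathsf R}^2}\geq1/d$. For the upper bound, since $\sigma_{\mathsf R}\geq0$ and $\sigma_{\mathsf R}\leq (2^k/d)I$, we get $\Tr\rbra{\sigma_{\mathsf R}^2}\leq\Abs{\sigma_{\mathsf R}}_\infty\Tr\rbra{\sigma_{\mathsf R}}\leq 2^k/d$.

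For $h$, nonnegativity follows because $\Tr\rbra{\sigma^2}\geq0$ and the bracket $d\Tr\rbra{\sigma_{\mathsf R}^2}-1$ is nonnegative by the previous step. For the upper bound, we use $\Tr\rbra{\sigma^2}\leq1$ for a state together with $\tfrac12\rbra{2^k-1}$ for the second term, giving $h\rbra{\sigma}\leq 1+\rbra{2^k-1}/2=\rbra{2^k+1}/2$.

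The only nontrivial step is the operator inequality $A\leq2^k\Gamma I$. The key point there is to compare each $L_a^\dagger L_a$ with its own normalized trace, using that the factor $2^{\abs{S_a}}\leq2^k$ is independent of how the supports overlap. Once this is in place, summing over $a$ requires no bounded-degree assumption, and the remaining bounds are routine.
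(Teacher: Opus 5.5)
Your proof is correct and is essentially the paper's argument. The paper writes $\sigma$ as a convex combination of normalized jump-vector states and bounds each reference marginal, a state on $S_a$ tensored with $I_{S_a^c}/2^{n-\abs{S_a}}$, by $(2^{\abs{S_a}}/d)I$; via $\Tr_{\mathsf S}\rbra{\opket{L_a}\opbra{L_a}}=(L_a^\dagger L_a)^{\mathsf T}/d$ this is exactly your per-term bound $L_a^\dagger L_a\leq 2^{\abs{S_a}}\,\overline{\Tr}(L_a^\dagger L_a)\,I$, and your remaining steps (Cauchy--Schwarz, $\Tr(\sigma_{\mathsf R}^2)\leq\Abs{\sigma_{\mathsf R}}_\infty$, and $0\leq\Tr(\sigma^2)\leq1$) match the paper's.
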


\begin{proof}
Choose a representation with nonzero traceless jump operators
$L_a$ supported on sets $S_a$ with $\abs{S_a}\leq k$.
The state $\sigma$ is a convex combination of the normalized
jump-vector states
$\opket{L_a}\opbra{L_a}/\overline{\Tr}\rbra{L_a^\dagger L_a}$,
with weights $\overline{\Tr}\rbra{L_a^\dagger L_a}/\Gamma$.
The reference marginal of each such state is a state on $S_a$
tensored with the maximally mixed state on $S_a^c$.
Since a density operator is bounded above by the identity,
this marginal is bounded above by
\[
  I_{S_a}\otimes\frac{I_{S_a^c}}{2^{n-\abs{S_a}}}
  =\frac{2^{\abs{S_a}}}{d}I
  \leq\frac{2^k}{d}I.
\]
Convexity gives the first bound.
Cauchy--Schwarz and $\Tr\rbra*{\sigma_{\mathsf R}}=1$ give
$d\Tr\rbra*{\sigma_{\mathsf R}^2}\geq1$.
The operator bound gives
$d\Tr\rbra*{\sigma_{\mathsf R}^2}\leq
2^k\Tr\rbra*{\sigma_{\mathsf R}}=2^k$.
Finally, $0\leq\Tr\rbra*{\sigma^2}\leq1$ and
\cref{def:jump-functional} imply the stated bounds on $h\rbra{\sigma}$.
\end{proof}

\section{Testing framework}
\label{sec:testing-strategy-overview}

This section specifies the testing model and explains how
the copy estimator in \cref{sec:jump-state-estimation}
yields the time upper bound in \cref{sec:locality-only}.

Fix a locality bound $k\geq1$.
Let $n\geq1$ be an integer, let $\Lambda>0$ be a strength bound,
and let $0\leq\varepsilon_1<\varepsilon_2$ be thresholds.
Let $\calL$ be an unknown time-independent $n$-qubit
Lindblad generator with $\Abs{\calL}_\diamond\leq\Lambda$
and canonical dissipator $\calD$ of locality at most $k$.
The task is to distinguish
\[
  \mathrm{YES}:\ \Abs{\calD}_F\leq\varepsilon_1
  \qquad\text{from}\qquad
  \mathrm{NO}:\ \Abs{\calD}_F\geq\varepsilon_2
\]
with success probability at least $2/3$.
Either output is permitted when
$\varepsilon_1<\Abs{\calD}_F<\varepsilon_2$.

The tester is given $n,k,\Lambda,\varepsilon_1,\varepsilon_2$
and access to a memoryless forward-evolution oracle
for $\calL$.
For any classically chosen duration $t\geq0$, a call
applies $\calE_t=e^{t\calL}$ to an $n$-qubit register,
possibly entangled with ancillary systems and
outputs of earlier calls.

Consider adaptive testers whose number of oracle calls
has a finite upper bound depending only on the known
parameters.
Time is measured by the sum of the durations of all oracle calls.
For $T\geq0$, a tester uses total time at most $T$ if,
for all measurement outcomes and internal random choices,
its call durations $t_j$ satisfy $\sum_j t_j\leq T$.
The sum includes calls used in unsuccessful
state-preparation attempts and adds the durations
of parallel calls.
Known operations and measurements do not contribute to this time.

By \cref{lem:generator-strength-bounds},
$\Abs{\calD}_F\leq\Lambda/\sqrt2$.
If $\varepsilon_2>\Lambda/\sqrt2$, always returning
$\mathrm{YES}$ solves the problem with zero
total time.
The time bounds therefore assume
$\varepsilon_2\leq\Lambda/\sqrt2$.

For $\Gamma>0$, \cref{lem:rate-state-identity} expresses
dissipative strength in terms of $\Gamma$ and $h\rbra{\sigma}$.
The locality bound $0\leq h\rbra{\sigma}\leq\rbra{2^k+1}/2$
from \cref{lem:jump-state-locality-bounds} gives
\begin{align}
    \Gamma
  \leq\Abs{\calD}_F
  =\Gamma\sqrt{1+h\rbra{\sigma}}
  \leq\sqrt{1+\frac{2^k+1}{2}}\,\Gamma.
\end{align}
Thus a sufficiently small rate excludes the $\mathrm{NO}$
case, while a sufficiently large rate excludes the
$\mathrm{YES}$ case.
The tester estimates $h\rbra{\sigma}$ only when neither
rate-based decision applies.
When the rate estimate is accurate, this remaining case
has $\Gamma=\Theta\rbra{\varepsilon_2}$, as verified in
the proof of \cref{thm:lo-time}.
If $\Gamma=0$, then $\calD=0$, so no jump state needs
to be estimated.

Both quantities can be estimated using the following
short-time experiment.
Prepare $\ket{\Phi}$, evolve the system register for time
$t>0$, and perform the projective measurement
$\cbra{\ketbra{\Phi}{\Phi},\Pi}$.
On outcome $\Pi$, record a success and store the
postmeasurement state.
For $\Lambda t\leq1/2$, the ideal first-order experiment
in \cref{subsec:lo-time-upper} succeeds with probability
$\Gamma t$ and, when $\Gamma>0$, outputs exactly $\sigma$
conditional on success.
For an integer $M\geq1$, consider $M$ independent
repetitions of the ideal experiment.
Let $T=Mt$ and let $K$ be the number of successes.
Then $K\sim\operatorname{Bin}\rbra{M,\Gamma t}$.
The success count gives an unbiased rate estimate
$\widehat\Gamma=K/T$, with
\[
  \mathbb E\sbra{K}=\Gamma T,
  \qquad
  \operatorname{Var}\sbra*{\widehat\Gamma}
  =\frac{\Gamma\rbra{1-\Gamma t}}{T}
  \leq\frac{\Gamma}{T}.
\]
Conditional on the success flags, the successful outputs
are independent copies of $\sigma$ whenever $\Gamma>0$.
The same oracle calls therefore supply both the rate
estimate and the copies used for functional estimation.

To determine the required accuracy, suppose $\Gamma>0$
and let $\widehat h$ be the estimate from
\cref{thm:jump-state-estimation}.
This estimate satisfies $0\leq\widehat h\leq\rbra{2^k+1}/2$.
Using $\abs{\sqrt{1+x}-\sqrt{1+y}}\leq\abs{x-y}/2$
for $x,y\geq0$ gives
\begin{align}
  \abs*{\widehat\Gamma\sqrt{1+\widehat h}-\Abs{\calD}_F}
  &\leq\sqrt{1+\frac{2^k+1}{2}}\,
       \abs*{\widehat\Gamma-\Gamma}
    +\frac{\Gamma}{2}\abs*{\widehat h-h\rbra{\sigma}}.
\end{align}
In the intermediate-rate case, it therefore suffices
to estimate $\Gamma$ to additive accuracy proportional
to $\varepsilon_2-\varepsilon_1$ and to choose the
functional accuracy $\eta$ proportional to
$\rbra{\varepsilon_2-\varepsilon_1}/\varepsilon_2$.
With sufficiently small constants depending only on $k$,
the strength error is less than $\rbra{\varepsilon_2-\varepsilon_1}/2$,
so comparison with $\rbra{\varepsilon_1+\varepsilon_2}/2$
settles the promise problem.
By \cref{thm:jump-state-estimation}, the functional
estimate uses
\[
  O\rbra*{\eta^{-2}}
  =O\rbra*{\frac{\varepsilon_2^2}
    {\rbra*{\varepsilon_2-\varepsilon_1}^2}}
\]
copies.
The ideal expected yield $\Gamma T$ and
$\Gamma=\Theta\rbra{\varepsilon_2}$ motivate the total-time
scale
\[
  T=O\rbra*{\frac{1}{\Gamma\eta^2}}
   =O\rbra*{\frac{\varepsilon_2}
     {\rbra*{\varepsilon_2-\varepsilon_1}^2}}.
\]
Chebyshev's inequality and the variance bound above show
that choosing $T$ at this scale, with a sufficiently large
constant depending only on $k$, also gives the required
rate accuracy in the intermediate-rate case.

The construction in \cref{subsec:lo-time-upper} chooses $T$
and $M$ from the known parameters and runs exactly $M$
attempts of duration $t=T/M$, using total time $T$.
Its condition for invoking functional estimation ensures
that enough successful outputs have been stored.

\Cref{alg:tolerant-dissipation-testing} summarizes the tester in \cref{thm:lo-time}.

\begin{algorithm}[h]
\caption{Tolerant dissipation testing}
\label{alg:tolerant-dissipation-testing}
\begin{algorithmic}[1]
\Require Number of qubits $n$, locality bound $k$,
strength bound $\Lambda$, dissipation thresholds
$0\leq\varepsilon_1<\varepsilon_2\leq\Lambda/\sqrt2$,
and forward-evolution access to $\calL$.
\Ensure $\mathrm{YES}$ or $\mathrm{NO}$.
\State Set
$q_k=\sqrt{1+\rbra{2^k+1}/2}$ and
$\eta=\rbra{\varepsilon_2-\varepsilon_1}/\rbra{32\varepsilon_2}$.
\State Choose $T>0$ and an integer $M\geq1$, and set $t=T/M$.
\For{$M$ independent attempts}
  \State Prepare $\ket{\Phi}$ and apply $\calE_t$
  to the system register.
  \State Perform the projective measurement
  $\cbra{\ketbra{\Phi}{\Phi},\Pi}$,
  preserving coherence within each outcome subspace.
  \State On outcome $\Pi$, record a success
  and store the postmeasurement state.
\EndFor
\State Let $K$ be the number of successes
and set $\widehat\Gamma=K/T$.
\If{$\widehat\Gamma\leq\varepsilon_2/\rbra{2q_k}$}
  \State \Return $\mathrm{YES}$.
\ElsIf{$\widehat\Gamma\geq2\varepsilon_2$}
  \State \Return $\mathrm{NO}$.
\EndIf
\State Apply the estimator (\cref{thm:jump-state-estimation}) with parameter $\eta$ to stored outputs,
obtaining $\widehat h$.
\State \Return $\mathrm{YES}$ if
$\widehat\Gamma\sqrt{1+\widehat h}
\leq\rbra{\varepsilon_1+\varepsilon_2}/2$,
and $\mathrm{NO}$ otherwise.
\end{algorithmic}
\end{algorithm}

To control the finite-time approximation error, compare
the complete physical output of each preparation, including
its success or failure flag, with the corresponding ideal
output.
For $\Lambda t\leq1/2$, their trace distance is
$O\rbra{\Lambda^2t^2}$.
For $M$ independent attempts, the resulting change in any
final decision probability is at most
\[
  O\rbra{M\Lambda^2t^2}
  =O\rbra{\Lambda^2Tt}
  =O\rbra*{\frac{\Lambda^2T^2}{M}}.
\]
Increasing $M$ while keeping $T$ fixed makes this error
sufficiently small.
The copy guarantee is applied to exact ideal jump states,
and the comparison transfers the final success guarantee
to the physical experiment.
\Cref{subsec:lo-time-upper} gives the full construction
and error analysis.
The separate evolution-access lower bound in
\cref{subsec:lo-time-lower} establishes optimality of
the resulting time dependence for fixed $k$.

\section{Jump-state functional estimation} \label{sec:jump-state-estimation}

Motivated by \cref{sec:testing-strategy-overview}, this
section studies functional estimation from independent
copies, with an arbitrary accuracy parameter and without
counting the cost of preparing the copies.

Fix a locality bound $k\geq1$.
Let $n\geq1$ be an integer and let $0<\eta\leq1/16$
be an accuracy parameter.
Let $\sigma$ be the jump state of a nonzero $n$-qubit
canonical dissipator with locality at most $k$.
Given $n,k,\eta$ and independent copies of $\sigma$,
the task is to output an estimate $\widehat h$ of
$h\rbra{\sigma}$ such that, for every such $\sigma$,
\begin{align}
    \Pr\sbra*{\abs*{\widehat h-h\rbra{\sigma}}\leq\eta}
    \geq\frac{2}{3}.
\end{align}

\Cref{sec:pauli-support-discovery} constructs
a candidate set of Pauli strings.
\Cref{sec:copy-upper-bound} constructs an estimator
for the jump-state functional and establishes
its copy upper bound.
Finally, \cref{sec:copy-lower-bound} proves
a matching copy lower bound.

\subsection{Candidate Pauli strings}
\label{sec:pauli-support-discovery}

For each $P\in\calP_n^\circ$, denote the Pauli coefficient
of the reference marginal $\sigma_{\mathsf R}$ by $a_P=\Tr\rbra*{P\sigma_{\mathsf R}}$.
The coefficients $a_P$ are real because both $P$ and
$\sigma_{\mathsf R}$ are Hermitian.
Denote the sum of their squares by
\begin{align*}
    G=\sum_{P\in\calP_n^\circ}a_P^2.
\end{align*}
Expanding $\sigma_{\mathsf R}$ in the Pauli basis and
using $\Tr\rbra*{\sigma_{\mathsf R}}=1$ gives
\begin{align*}
    \sigma_{\mathsf R}
    =\frac{1}{d}\rbra*{
        I+\sum_{P\in\calP_n^\circ}a_PP
    }.
\end{align*}
Using $\Tr\rbra{P}=0$ for $P\in\calP_n^\circ$ and
$\Tr\rbra{PQ}=d\delta_{P,Q}$ yields
\begin{align*}
    d\Tr\rbra*{\sigma_{\mathsf R}^2}
    =1+\sum_{P\in\calP_n^\circ}a_P^2
    =1+G.
\end{align*}
Consequently,
\begin{align*}
    h\rbra{\sigma}
    =\Tr\rbra*{\sigma^2}
     +\frac12\rbra*{d\Tr\rbra*{\sigma_{\mathsf R}^2}-1}
    =\Tr\rbra*{\sigma^2}+\frac{G}{2}.
\end{align*}

Let $m\geq1$ be an integer.
Measure $m$ independent copies of $\sigma$ in the
Pauli-vector basis $\cbra{\opket{Q}}_{Q\in\calP_n}$,
obtaining labels $Q_1,\ldots,Q_m$.
Each vectorized $k$-local jump operator is a linear
combination of Pauli vectors indexed by strings of
weight at most $k$.
The jump state $\sigma$ is proportional to a sum of
outer products of such vectors, so
$\operatorname{wt}\rbra{Q_i}\leq k$ for all $i$ with
probability one.
Construct the candidate set
\begin{equation}
\label{eq:lo-candidates}
    \calC=
    \bigcup_{1\leq i,j\leq m}
    \cbra*{
        P\in\calP_n^\circ:
        \operatorname{wt}\rbra{P}\leq k,\,
        \supp\rbra{P}\subseteq
        \supp\rbra{Q_i}\cup\supp\rbra{Q_j}
    }.
\end{equation}
There are at most $m^2$ support unions.
Each contains at most $2k$ qubits and therefore
supports at most $4^{2k}=16^k$ Pauli strings.
Consequently,
\begin{equation}
\label{eq:lo-candidate-size}
    \abs{\calC}\leq16^k m^2.
\end{equation}
Denote the contribution to $G$ from the candidate set
$\calC$ by
\begin{align*}
    G_{\calC}=\sum_{P\in\calC}a_P^2.
\end{align*}

Thus $G-G_{\calC}$ is the sum of $a_P^2$ over
$P\in\calP_n^\circ\setminus\calC$.
The next lemma bounds its expectation.

\begin{lemma}[Expected approximation error]
\label{lem:lo-discovery}
For every integer $m\geq1$, the candidate set $\calC$ in
\eqref{eq:lo-candidates} satisfies
\begin{equation}
\label{eq:lo-discovery-loss}
    \mathbb E\sbra*{G-G_{\calC}}
    \leq\frac{2\cdot16^k\rbra{4^k-1}}{em}.
\end{equation}
The expectation is over the measurement outcomes
$Q_1,\ldots,Q_m$.
\end{lemma}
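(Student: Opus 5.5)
The plan is to bound each omitted coefficient $a_P^2$ by products of outcome probabilities of the Pauli-vector measurement. Then $\mathbb E\sbra{G-G_{\calC}}$ becomes a sum of terms of the form $p\,e^{-mp}$, each at most $1/(em)$.

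\textbf{Step 1: the outcome law and the coefficients $a_P$.} Fix a representation with traceless $k$-local jump operators and expand $L_a=\sum_Q\gamma_{a,Q}Q$, with only strings of weight at most $k$ supported in $S_a$. Since $\opket{L_a}=\sum_Q\gamma_{a,Q}\opket{Q}$ and $\sigma=\Gamma^{-1}\sum_a\opket{L_a}\opbra{L_a}$, the outcome probabilities are
\[
p\rbra{Q}=\Gamma^{-1}\sum_a\abs{\gamma_{a,Q}}^2 .
\]
By \eqref{eq:jump-state-marginal}, $a_P=\Tr\rbra{P\sigma_{\mathsf R}}=\Tr\rbra{P A^{\mathsf T}}/\rbra{d\Gamma}$. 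Substituting $A=\sum_a\sum_{Q,Q'}\gamma_{a,Q'}^*\gamma_{a,Q}Q'Q$ and using that $Q'Q$ is a phase times a single Pauli string gives
\[
a_P=\Gamma^{-1}\sum_a\sum_{Q}c_{P,Q}\,\gamma_{a,Q'}^*\gamma_{a,Q}.
\]
Here $c_{P,Q}$ is a unimodular phase, and $Q'=Q'\rbra{P,Q}$ is the unique string with $Q'Q$ proportional to $P^{\mathsf T}$ (or $P$, after accounting for transposition signs). Two consequences follow.
\begin{itemize}[label={}, leftmargin=1em]
\item Any nonzero contribution has $\supp\rbra{P}\subseteq\supp\rbra{Q}\cup\supp\rbra{Q'}\subseteq S_a$, so $\operatorname{wt}\rbra{P}\leq k$.
\item Cauchy--Schwarz over $a$ gives $\abs{a_P}\leq\sum_Q\sqrt{p\rbra{Q}p\rbra{Q'}}$.
\end{itemize}
The sum runs over strings $Q$ of weight at most $k$ for which $Q'$ is also feasible. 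Applying Cauchy--Schwarz again over the at most $N_k$ relevant $Q$ yields
\[
a_P^2\leq N_k\sum_Q p\rbra{Q}p\rbra{Q'}.
\]
I will need to count $N_k$ carefully. The target is to reach the constant $16^k\rbra{4^k-1}$: for instance, $Q$ ranges over Paulis on $\supp\rbra{P}\cup\supp\rbra{Q'}$, and the pairs are organized so that each $Q$ is paired with a bounded number of $P$.

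\textbf{Step 2: the covering criterion.} If both $Q$ and $Q'$ appear among $Q_1,\ldots,Q_m$, then $P\in\calC$, since $\supp\rbra{P}\subseteq\supp\rbra{Q}\cup\supp\rbra{Q'}$ and $\operatorname{wt}\rbra{P}\leq k$. Hence $P\notin\calC$ forces, for every contributing pair, that $Q$ or $Q'$ is unobserved. The union bound and $1-x\leq e^{-x}$ then give
\[
\mathbf 1\cbra{P\notin\calC}\,p\rbra{Q}p\rbra{Q'}\ \text{has expectation at most}\ p\rbra{Q}p\rbra{Q'}\rbra{e^{-mp\rbra{Q}}+e^{-mp\rbra{Q'}}}\leq\frac{p\rbra{Q}+p\rbra{Q'}}{em},
\]
using $x e^{-mx}\leq1/(em)$. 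Summing,
\[
\mathbb E\sbra{G-G_{\calC}}\leq\sum_P\Pr\sbra{P\notin\calC}\,a_P^2\leq\frac{N_k}{em}\sum_P\sum_Q\rbra{p\rbra{Q}+p\rbra{Q'}}.
\]

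\textbf{Step 3: the double sum, which is the main obstacle.} For fixed $Q$, the map $P\mapsto Q'$ is injective, so the double sum is at most $2\cdot\rbra{\text{max number of } P \text{ per } Q}\cdot\sum_Q p\rbra{Q}$, and $\sum_Q p\rbra{Q}=1$. The difficulty is that a naive count of $P$ per $Q$ (all $P$ of weight at most $k$ overlapping $\supp\rbra{Q}$) depends on $n$. I expect the fix to be sharpening Step 1 so that only pairs inside a common support $S_a$ matter, and then charging each pair to the smaller-probability member. That should leave at most $16^k$ admissible partners per $Q$ (Paulis on a $2k$-qubit union) and a factor $4^k-1$ from the nonidentity choices. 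This should produce exactly $2\cdot16^k\rbra{4^k-1}/(em)$. I would first try Cauchy--Schwarz with weights restricted to strings supported in the same $S_a$, so that all counts are local to a set of at most $2k$ qubits.
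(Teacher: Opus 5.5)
Your argument has a genuine gap at the point you flag, and the repair you sketch does not work. With global outcome labels, neither $N_k$ (the number of strings $Q$ paired with a given $P$) nor the number of $P$ paired with a given $Q$ is bounded in terms of $k$ once supports overlap without a degree bound, and that is exactly the regime of this lemma.

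For instance, take $L_a=X_a+X_aZ_1$ on $\cbra{1,a}$ for $a=2,\ldots,n$ (so $k=2$). Then $A=2\rbra{n-1}\rbra{I+Z_1}$, $\Gamma=2\rbra{n-1}$, and $\sigma_{\mathsf R}=\rbra{I+Z_1}/d$, so $a_{Z_1}=1$. The outcomes $X_a$ and $X_aZ_1$ each have probability $1/\rbra{2\rbra{n-1}}$. For $P=Z_1$ the contributing pairs are $\rbra{X_a,X_aZ_1}$ and $\rbra{X_aZ_1,X_a}$ for every $a$, so $N_k\geq2\rbra{n-1}$. Your final inequality therefore charges this single $P$ the amount $\frac{N_k}{em}\sum_Q\rbra{p\rbra{Q}+p\rbra{Q'}}=\frac{4\rbra{n-1}}{em}$. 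Yet $\Pr\sbra{Z_1\notin\calC}=2^{-m}$, because any outcome touching qubit $1$ already puts $Z_1$ in $\calC$.

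The family $L_a=Z_1+Z_1X_a$ similarly pairs $Q=Z_1$ with $P=X_a$ for every $a$, with $a_{X_a}=1/\rbra{n-1}\neq0$. So the claim of at most $16^k$ partners per $Q$ in Step~3 is false. The underlying problem is that your covering criterion (both global strings $Q$ and $Q'$ observed) is far too weak, and no recounting of global pairs fixes it.

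What is missing is a coarse-graining to $R=\supp\rbra{P}$, together with a different choice of which factor is summed over $P$.

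\textbf{Coarse-graining.} Group outcomes by their restriction $Q_i\rvert_R$. There are only $4^{\abs{R}}$ local labels $q$, with marginal probabilities $p_q$. Let $\pi_P\rbra{q}$ be the label of $q\rbra{P\rvert_R}^{\mathsf T}$. If $q$ and $\pi_P\rbra{q}$ both occur among the restrictions, the two global supports already cover $R$, so $P\in\calC$. Hence $\Pr\sbra{P\notin\calC}\leq e^{-mp_q}+e^{-mp_{\pi_P\rbra{q}}}$ for every $q$.

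In your coordinates, this means applying Cauchy--Schwarz jointly over the jump indices $a$ with $R\subseteq S_a$ and over the part of $Q$ outside $R$, on which $Q'$ agrees with $Q$. Equivalently, work with the reduced operator $B_P$ obtained from $\sum_{a:R\subseteq S_a}w_a\ketbra{\psi_a}{\psi_a}$. It satisfies $0\leq B_P\leq\sigma^{\sbra{R}}$, $\Tr\rbra{B_P}=w\rbra{P}:=\sum_{a:R\subseteq S_a}w_a$, and $a_P=\Tr\rbra{\rbra{I\otimes P\rvert_R}B_P}$.

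\textbf{Which factor to sum.} Positivity gives $\abs{\rbra{B_P}_{q,\pi_P\rbra{q}}}^2\leq w\rbra{P}\,p_q$ and $\abs{\rbra{B_P}_{q,\pi_P\rbra{q}}}^2\leq w\rbra{P}\,p_{\pi_P\rbra{q}}$. So you spend $p e^{-mp}\leq1/\rbra{em}$ on the probability factor and keep $w\rbra{P}$, which is summable: $\sum_Pw\rbra{P}=\sum_aw_a\rbra{4^{\abs{S_a}}-1}\leq4^k-1$. This yields $a_P^2\Pr\sbra{P\notin\calC}\leq2\cdot16^{\abs{R}}w\rbra{P}/\rbra{em}$.

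The factor $16^k$ in the statement is thus $4^{\abs{R}}$ from Cauchy--Schwarz over local labels times the $4^{\abs{R}}$ local labels, not a count of partner strings.
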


\begin{proof}
Fix a representation of the canonical dissipator with
nonzero traceless jump operators $L_a$ supported on sets
$S_a$ with $\abs{S_a}\leq k$.
Set
\begin{align*}
    w_a=\frac{\opbra{L_a}\opket{L_a}}{\Gamma},
    \qquad
    \ket{\psi_a}
    =\frac{\opket{L_a}}
           {\sqrt{\opbra{L_a}\opket{L_a}}}.
\end{align*}
Then $\sigma=\sum_a w_a\ketbra{\psi_a}{\psi_a}$,
$w_a>0$, and $\sum_a w_a=1$.
For $P\in\calP_n^\circ$, let
\begin{align*}
    w\rbra{P}=\sum_{a:\supp\rbra{P}\subseteq S_a}w_a.
\end{align*}
Each $S_a$ supports exactly $4^{\abs{S_a}}-1$
nonidentity Pauli strings.
Interchanging the sums and using $\abs{S_a}\leq k$
and $\sum_a w_a=1$ gives
\begin{equation}
\label{eq:lo-support-weight-sum}
    \sum_{P\in\calP_n^\circ}w\rbra{P}
    =\sum_a w_a\rbra*{4^{\abs{S_a}}-1}
    \leq4^k-1.
\end{equation}

Fix $P\in\calP_n^\circ$ with $R=\supp\rbra{P}$
and $\abs{R}\leq k$.
Write $P\rvert_R$ for its restriction to the sites in $R$,
ordered increasingly and indexed by their original labels.
Use the same restriction notation for other Pauli strings.
Since $L_a$ is supported on $S_a$, the reference marginal
of $\ket{\psi_a}$ has a maximally mixed tensor factor
at every site outside $S_a$.
If $R\nsubseteq S_a$, then $P$ has a traceless Pauli
factor at one of these sites, so
$\bra{\psi_a}\rbra*{I\otimes P}\ket{\psi_a}=0$.
Denote by $\sigma^{\sbra{R}}$ the state obtained from $\sigma$
by tracing out the system and reference qubits outside $R$.
Apply the same partial trace to
$\sum_{a:R\subseteq S_a}w_a\ketbra{\psi_a}{\psi_a}$
and denote the resulting operator by $B_P$.
Then $0\leq B_P\leq\sigma^{\sbra{R}}$ and
$\Tr\rbra{B_P}=w\rbra{P}$.
The terms with $R\nsubseteq S_a$ contribute zero, giving
\begin{equation}
\label{eq:lo-local-coefficient}
    a_P
    =\Tr\rbra*{\rbra*{I\otimes P\rvert_R}\sigma^{\sbra{R}}}
    =\Tr\rbra*{\rbra*{I\otimes P\rvert_R}B_P}.
\end{equation}

The joint space of the system and reference qubits
at the sites in $R$ has dimension $4^{\abs{R}}$.
Use its Pauli-vector basis $\cbra{\opket{Q}}_Q$,
where $Q$ ranges over all Pauli strings on $R$.
Here $\opket{Q}$ uses normalized vectorization with
factor $2^{-\abs{R}/2}$.
Normalized vectorization gives
\begin{align*}
    \rbra*{I\otimes P\rvert_R}\opket{Q}
    =\opket{Q\rbra*{P\rvert_R}^{\mathsf T}}.
\end{align*}
Denote by $\pi_P\rbra{Q}$ the Pauli label of
$Q\rbra*{P\rvert_R}^{\mathsf T}$, with the overall
phase disregarded.
The map $\pi_P$ permutes the Pauli labels on $R$.
Since $P$ has a nonidentity factor at every site in $R$,
$Q$ and $\pi_P\rbra{Q}$ cannot both have an identity factor
at any such site.
Therefore
\begin{equation}
\label{eq:lo-pair-cover}
  \supp\rbra*{Q}\cup\supp\rbra*{\pi_P\rbra*{Q}}=R.
\end{equation}
Denote by $p_Q$ the probability of outcome $Q$
when measuring $\sigma^{\sbra{R}}$ in this basis.
The global Pauli-vector measurement acts separately
on the system--reference qubit pair at each site.
Hence, for each $i=1,\ldots,m$,
\begin{align*}
    p_Q
    =\opbra{Q}\sigma^{\sbra{R}}\opket{Q}
    =\Pr\sbra*{Q_i\rvert_R=Q}.
\end{align*}
For Pauli labels $Q,Q'$ on $R$, write
$\rbra{B_P}_{Q,Q'}=\opbra{Q}B_P\opket{Q'}$
for the matrix entries of $B_P$ in this basis.
Since $0\leq B_P\leq\sigma^{\sbra{R}}$ and
$\Tr\rbra{B_P}=w\rbra{P}$, each diagonal entry satisfies
$0\leq\rbra{B_P}_{Q,Q}\leq\min\cbra*{p_Q,w\rbra{P}}$.
Positivity also gives
\begin{align*}
    \abs*{\rbra{B_P}_{Q,\pi_P\rbra{Q}}}^2
    \leq\rbra{B_P}_{Q,Q}\rbra{B_P}_{\pi_P\rbra{Q},\pi_P\rbra{Q}}.
\end{align*}
Applying the two diagonal bounds to opposite factors yields
\begin{equation}
\label{eq:lo-entry-bounds}
    \abs*{\rbra{B_P}_{Q,\pi_P\rbra{Q}}}^2\leq w\rbra{P}p_Q,
    \qquad
    \abs*{\rbra{B_P}_{Q,\pi_P\rbra{Q}}}^2\leq w\rbra{P}p_{\pi_P\rbra{Q}}.
\end{equation}

If both $Q$ and $\pi_P\rbra{Q}$ appear among the restricted
outcomes $Q_1\rvert_R,\ldots,Q_m\rvert_R$, then the
supports of the corresponding global outcomes cover
$R$ by~\eqref{eq:lo-pair-cover}, so $P\in\calC$.
Thus $P\notin\calC$ implies that at least one of these
two local labels is absent.
Independence of the measurements and the union bound
give, for every $Q$,
\begin{equation}
\label{eq:lo-absence}
  \Pr\sbra*{P\notin\calC}
    \le \rbra{1-p_Q}^m+\rbra{1-p_{\pi_P\rbra{Q}}}^m
    \le e^{-mp_Q}+e^{-mp_{\pi_P\rbra{Q}}}.
\end{equation}
Expanding the trace in
\eqref{eq:lo-local-coefficient} in the local
Pauli-vector basis expresses $a_P$ as a sum of
$4^{\abs{R}}$ entries $\rbra{B_P}_{Q,\pi_P\rbra{Q}}$, each multiplied
by a phase of modulus one.
The triangle and Cauchy--Schwarz inequalities give
\begin{align}
    a_P^2
    \leq\rbra*{\sum_Q\abs*{\rbra{B_P}_{Q,\pi_P\rbra{Q}}}}^2
    \leq 4^{\abs{R}}\sum_Q\abs*{\rbra{B_P}_{Q,\pi_P\rbra{Q}}}^2.
\end{align}
Combining \eqref{eq:lo-absence} with
\eqref{eq:lo-entry-bounds} and using
$x e^{-mx}\leq1/\rbra{em}$ for $x\geq0$ yields
\begin{align}
    a_P^2\Pr\sbra*{P\notin\calC}
    &\leq 4^{\abs{R}}\sum_Q\abs*{\rbra{B_P}_{Q,\pi_P\rbra{Q}}}^2
        \rbra*{e^{-mp_Q}+e^{-mp_{\pi_P\rbra{Q}}}}\\
    &\leq 4^{\abs{R}} w\rbra{P}\sum_Q
        \rbra*{p_Qe^{-mp_Q}
        +p_{\pi_P\rbra{Q}}e^{-mp_{\pi_P\rbra{Q}}}}\\
    &\leq\frac{2\cdot16^{\abs{R}}}{em}\,w\rbra{P}.
\end{align}
If $\operatorname{wt}\rbra{P}>k$, then
$\supp\rbra{P}\nsubseteq S_a$ for every $a$,
so $a_P=0$.
For the remaining terms,
$16^{\abs{R}}\leq16^k$.
Summing over $P\in\calP_n^\circ$ and using
\eqref{eq:lo-support-weight-sum} gives
\begin{align}
    \mathbb E\sbra*{G-G_{\calC}}
    &=
    \sum_{P\in\calP_n^\circ}
    a_P^2\Pr\sbra*{P\notin\calC}\\
    &\leq
    \frac{2\cdot16^k}{em}
    \sum_{P\in\calP_n^\circ}w\rbra{P}\\
    &\leq
    \frac{2\cdot16^k\rbra{4^k-1}}{em}.
\end{align}
\end{proof}

\subsection{Copy upper bound}
\label{sec:copy-upper-bound}

Fix the first-batch outcomes and the resulting set
$\calC$.
Let $S=\bigcup_{j=1}^m\supp\rbra*{Q_j}$, so every
candidate is supported on $S$ and $\abs{S}\leq km$.
On each fresh copy of $\sigma$, discard the system
register.
For each $i\in S$, choose $B_i$ uniformly from
$\cbra{X,Y,Z}$, independently across sites and copies,
and measure reference qubit $i$ in the eigenbasis
of $B_i$.
Denote the resulting eigenvalue by
$b_i\in\cbra{-1,1}$.
For $P\in\calC$, use the standard product-Pauli record~\cite{HuangKuengPreskill2020}:
\[
    X_P=3^{\operatorname{wt}\rbra{P}}
        \prod_{i\in\supp\rbra*{P}}
        b_i\,\mathbf 1\cbra*{B_i=P_i}.
\]
Let $\vec{X}=\rbra{X_P}_{P\in\calC}$ be the resulting
classical record vector, and let
$\vec{a}=\rbra{a_P}_{P\in\calC}$.

Let $N\geq2$ be an integer.
Apply the measurements described above to $N$
independent fresh copies of $\sigma$ to obtain
records $\vec{X}_1,\ldots,\vec{X}_N$, and set
\[
    \widehat G
    =\frac{1}{N\rbra{N-1}}
      \sum_{\substack{1\leq r,s\leq N\\r\ne s}}
      \vec{X}_r^{\mathsf T}\vec{X}_s
    =\frac{
        \Abs*{\sum_{r=1}^N\vec{X}_r}_2^2
        -\sum_{r=1}^N\Abs{\vec{X}_r}_2^2
      }{N\rbra{N-1}}.
\]
This is a classical $U$-statistic of order two.
It is computed from the stored classical records,
without joint measurements across the $N$ copies.
If $\calC=\varnothing$, the records are
zero-dimensional vectors and $\widehat G=0$.

The next lemma gives the conditional expectation and a variance
bound for $\widehat G$.

\begin{lemma}[Conditional variance]
\label{lem:lo-variance}
Conditional on the first-batch outcomes
$Q_1,\ldots,Q_m$, the estimator $\widehat G$ satisfies
\begin{align}
    \mathbb E\sbra*{\widehat G\mid Q_1,\ldots,Q_m}
    &=G_{\calC},\\
    \operatorname{Var}
        \sbra*{\widehat G\mid Q_1,\ldots,Q_m}
    &\leq
      \frac{4\cdot6^k\rbra{2^k-1}}{N}
      +\frac{4\cdot16^k\cdot6^{2k}m^2}{N^2}.
\end{align}
\end{lemma}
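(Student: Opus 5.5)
Conditional on $Q_1,\ldots,Q_m$, the set $\calC$ is fixed and the records $\vec X_1,\ldots,\vec X_N$ are i.i.d. I will therefore treat $\widehat G$ as a standard order-two $U$-statistic with kernel $g(\vec x,\vec y)=\vec x^{\mathsf T}\vec y$. The proof then has three steps: compute the first moment of a single record, apply the Hoeffding variance formula, and bound the two resulting moment quantities using locality.

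\textbf{Unbiasedness.} For a single record and $P\in\calC$, I average over the random bases. The probability that $B_i=P_i$ for all $i\in\supp(P)$ is $3^{-\operatorname{wt}(P)}$. Conditional on this event, $\prod_i b_i$ has mean $\Tr(P\sigma_{\mathsf R})=a_P$. Hence $\mathbb E[X_P]=a_P$, so $\mathbb E[\vec X]=\vec a$. Since the $N$ records are independent, $\mathbb E[\vec X_r^{\mathsf T}\vec X_s]=\Abs{\vec a}_2^2=G_{\calC}$ for $r\neq s$.

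\textbf{Hoeffding decomposition.} For a symmetric kernel, the standard formula gives
\[
\operatorname{Var}[\widehat G]=\frac{4(N-2)}{N(N-1)}\zeta_1+\frac{2}{N(N-1)}\zeta_2\leq\frac{4}{N}\zeta_1+\frac{4}{N^2}\zeta_2 .
\]
Here $\zeta_1=\operatorname{Var}[\vec a^{\mathsf T}\vec X]$ and $\zeta_2\leq\mathbb E[(\vec X_1^{\mathsf T}\vec X_2)^2]$.

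\textbf{Bounding $\zeta_2$.} Expanding gives $\zeta_2\leq\sum_{P,P'\in\calC}\mathbb E[X_PX_{P'}]^2$. Each $\abs{\mathbb E[X_PX_{P'}]}\leq 3^{\operatorname{wt}(P)+\operatorname{wt}(P')}\Pr[\text{bases agree}]$, which is at most $3^{\abs{\supp P\cup\supp P'}}$ and at most $9^k$. Squaring and summing crudely would give $81^k\abs{\calC}^2$, which is far too large. The target $16^k6^{2k}m^2$ is linear in $\abs{\calC}\leq16^km^2$. So I will instead bound $\sum_{P'}\mathbb E[X_PX_{P'}]^2\leq 6^{2k}$ for each fixed $P$.

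To do this, use $\mathbb E[X_PX_{P'}]=\mathbb E[X_PX_{P'}\mid\cdot]$ to reduce to the product-Pauli structure. The quantity $\mathbb E[X_PX_{P'}]$ is nonzero only when $P,P'$ agree on the overlap of their supports. In that case it equals $3^{\abs{\supp P\cap\supp P'}}\,a_{PP'}$-type coefficients, where $PP'$ is the product Pauli. Then, for fixed $P$, I sum over $P'$ using Parseval on $\sigma_{\mathsf R}$. This gives $\sum_{Q}\Tr(Q\sigma_{\mathsf R})^2\leq d\Tr\sigma_{\mathsf R}^2\leq 2^k$ by \cref{lem:jump-state-locality-bounds}. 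The overlap patterns contribute factors of $3$ and $2$ per site, for a total of at most $6^{2k}$.

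\textbf{Bounding $\zeta_1$.} This is the step I expect to be the main obstacle. I need $\zeta_1\leq\mathbb E[(\vec a^{\mathsf T}\vec X)^2]\leq 6^k(2^k-1)$, and it must hold uniformly in $\abs{\calC}$ and for arbitrarily overlapping supports. Writing $\vec a^{\mathsf T}\vec X$ as the classical-shadow estimator of the observable $O=\sum_{P\in\calC}a_PP$, I would first try the shadow-norm bound $\mathbb E[(\vec a^{\mathsf T}\vec X)^2]=\sum_{P,P'}a_Pa_{P'}f(P,P')\,\Tr(PP'\sigma_{\mathsf R})$. The coefficient $f(P,P')$ is $3^{\abs{\supp P\cap\supp P'}}$ when $P$ and $P'$ are compatible and zero otherwise.

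My plan is to exploit $a_P=\Tr(P\sigma_{\mathsf R})$ together with the decomposition $\sigma_{\mathsf R}=\sum_a w_a\rho_a$, where each $\rho_a$ is supported on $S_a$ and maximally mixed elsewhere. Then $a_P=\sum_{a:\supp P\subseteq S_a}w_a\Tr(P\rho_a)$. Expanding one factor of $a$ this way, each term localizes to a single $S_a$ of size at most $k$. On such a set, the compatible pairs $(P,P')$ number at most $6^k$ per site pattern: each site is one of identity in both, in one only, or in both with the same letter. Using $\abs{\Tr(\cdot)}\leq1$ and Cauchy--Schwarz with $\sum_{P\neq I,\supp P\subseteq S_a}\Tr(P\rho_a)^2\leq 2^k-1$ should yield $6^k(2^k-1)$ after summing the weights $w_a$.

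Combining the two bounds with the Hoeffding formula gives the stated inequality. If the $\zeta_1$ localization fails, I would fall back to bounding $\zeta_1$ via $\abs{\calC}$, accepting an $m$-dependence.
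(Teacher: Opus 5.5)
Your unbiasedness step, the Hoeffding reduction $\operatorname{Var}[\widehat G]\le \frac4N\zeta_1+\frac4{N^2}\zeta_2$ (valid for $N\ge2$), and your $\zeta_2$ bound are fine. For $\zeta_2$, your entry formula is correct: for compatible $P,P'$ one has $\mathbb E[X_PX_{P'}]=3^{\abs{\supp P\cap\supp P'}}a_{PP'}$ with $a_I=1$. Since $P'\mapsto PP'$ is injective, each row satisfies $\sum_{P'}\mathbb E[X_PX_{P'}]^2\le 9^k(1+G)\le 18^k\le 6^{2k}$. This gives $\zeta_2\le 6^{2k}\abs{\calC}\le 16^k6^{2k}m^2$.

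The gap is $\zeta_1$, which is exactly where the $m$-independent term $4\cdot 6^k(2^k-1)/N$ comes from. Expanding one factor $a_P=\sum_{a:\supp P\subseteq S_a}w_a\Tr(P\rho_a)$ confines $P$ to $S_a$. However, in $\sum_{P'\in\calC}a_{P'}f(P,P')\Tr(PP'\sigma_{\mathsf R})$ the partner $P'$ still ranges over all compatible candidates, which may extend outside $S_a$. Both $a_{P'}$ and $\Tr(PP'\sigma_{\mathsf R})$ are governed by other supports $S_b$. So the sum does not localize to a single $k$-site set, and a count of compatible pairs inside $S_a$ does not control it. (Even inside $S_a$ there are $10$ compatible letter pairs per site, not $6$.) Once signs are discarded via $\abs{\Tr(\cdot)}\le1$ and Cauchy--Schwarz, there is no evident way to reach $6^k(2^k-1)$. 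Your fallback via $\abs{\calC}$ would put an $m^2/N$ term into the leading part. That is not the stated lemma, and since $m=\Theta(\eta^{-1})$ it would break $N=O(\eta^{-2})$ in \cref{thm:jump-state-estimation}.

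The missing idea is positive-semidefinite domination by the maximally mixed state.
\begin{itemize}
\item \Cref{lem:jump-state-locality-bounds} gives $\sigma_{\mathsf R}\le 2^kI/d$, hence $\Tr_{S^c}(\sigma_{\mathsf R})\le 2^kI_S/2^{\abs{S}}$.
\item For each fixed axis choice, outcome probabilities are traces against positive effects. So every joint probability of axes and outcomes is at most $2^k$ times its value under $I_S/2^{\abs S}$.
\item Since $(\vec u^{\mathsf T}\vec X)^2\ge0$, this yields $\mathbb E[(\vec u^{\mathsf T}\vec X)^2]\le 2^k\,\mathbb E_{\mathrm{mix}}[(\vec u^{\mathsf T}\vec X)^2]$. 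Distinct records are uncorrelated under the maximally mixed state, so the right side equals $2^k\sum_{P}3^{\operatorname{wt}(P)}u_P^2\le 6^k\Abs{\vec u}_2^2$ for every real $\vec u$.
\end{itemize}
Taking $\vec u=\vec a$ gives $\zeta_1\le 6^kG_{\calC}\le 6^k(2^k-1)$, with no dependence on $\abs{\calC}$ or on how the supports overlap. This is the paper's argument. The same matrix inequality $\mathbb E[\vec X\vec X^{\mathsf T}]\le 6^kI$ also gives $\Tr(\Sigma^2)\le 6^{2k}\abs{\calC}$, so it covers your $\zeta_2$ computation as well.
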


\begin{proof}
All expectations and covariances in this proof are
conditional on the first-batch outcomes.
Denote by $\Sigma$ the covariance matrix of $\vec{X}$.
For $P\in\calC$, the measurement axes match $P$ on
its support with probability $3^{-\operatorname{wt}\rbra{P}}$.
Conditional on a match, the product
$\prod_{i\in\supp\rbra*{P}}b_i$ has expectation
$\Tr\rbra*{P\sigma_{\mathsf R}}=a_P$.
Since $X_P=0$ otherwise,
\[
    \mathbb E\sbra*{X_P}
    =3^{\operatorname{wt}\rbra{P}}
      \cdot3^{-\operatorname{wt}\rbra{P}}a_P
    =a_P.
\]
Independence of the fresh records gives, for $r\ne s$,
\[
    \mathbb E\sbra*{\vec{X}_r^{\mathsf T}\vec{X}_s}
    =\vec{a}^{\mathsf T}\vec{a}
    =G_{\calC}.
\]
Averaging over the $N\rbra{N-1}$ ordered pairs gives
$\mathbb E\sbra*{\widehat G}=G_{\calC}$.

Denote by $\mathbb E_{\mathrm{mix}}$ expectation over
the measurement axes and outcomes for the same
measurements applied to $I_S/2^{\abs{S}}$.
Under this state, independent uniform outcomes and
axis choices give
\[
  \mathbb E_{\mathrm{mix}}\sbra*{X_PX_Q}
   =\mathbf 1\cbra*{P=Q}\,3^{\operatorname{wt}\rbra{P}}.
\]
For distinct strings, either their nonidentity axes conflict at some
site, or an outcome sign occurs only once and averages to zero.
Consequently,
$\mathbb E_{\mathrm{mix}}\sbra*{\vec{X}\vec{X}^{\mathsf T}}\leq3^k I_{\abs{\calC}}$.

By \cref{lem:jump-state-locality-bounds},
$\sigma_{\mathsf R}\leq2^kI/d$.
Taking the partial trace over $S^c$ gives
\[
  \Tr_{S^c}\rbra*{\sigma_{\mathsf R}}
  \leq2^k\frac{I_S}{2^{\abs{S}}}.
\]
For each fixed choice of axes, taking traces against
the corresponding positive measurement effects bounds
each outcome probability by $2^k$ times its value for
$I_S/2^{\abs{S}}$.
The axes have the same distribution for both states,
so the same bound holds for the joint probabilities
of axes and outcomes.
Consequently, for each real vector $\vec{u}\in\mathbb R^{\abs{\calC}}$,
\[
  \mathbb E\sbra*{\rbra*{\vec{u}^{\mathsf T}\vec{X}}^2}
    \le2^k\mathbb E_{\mathrm{mix}}\sbra*{\rbra*{\vec{u}^{\mathsf T}\vec{X}}^2}
    \le6^k\Abs{\vec{u}}_2^2.
\]
It follows that $\mathbb E\sbra*{\vec{X}\vec{X}^{\mathsf T}}\leq6^k I_{\abs{\calC}}$.
Subtracting $\vec{a}\vec{a}^{\mathsf T}\geq0$ gives
$\Sigma\leq6^k I_{\abs{\calC}}$.

The bound $\sigma_{\mathsf R}\leq2^kI/d$ also gives
\begin{align}
  G_{\calC}
  \leq G
  =d\Tr\rbra*{\sigma_{\mathsf R}^2}-1
  \leq 2^k\Tr\rbra*{\sigma_{\mathsf R}}-1
  =2^k-1.
\end{align}

Write $\vec{Y}_r=\vec{X}_r-\vec{a}$ for $1\leq r\leq N$.
Then $\mathbb E\sbra*{\vec{Y}_r}=0$ and
$\mathbb E\sbra*{\vec{Y}_r\vec{Y}_r^{\mathsf T}}=\Sigma$.
Substituting $\vec{X}_r=\vec{a}+\vec{Y}_r$ into $\widehat G$
gives an ordered-pair sum in which each linear term
$\vec{a}^{\mathsf T}\vec{Y}_r$ occurs $2\rbra{N-1}$ times.
Hence
\[
  \widehat G-G_{\calC}
    =\frac2N\sum_{r=1}^N \vec{a}^{\mathsf T}\vec{Y}_r
      +\frac1{N\rbra{N-1}}
       \sum_{\substack{1\leq r,s\leq N\\r\ne s}}
       \vec{Y}_r^{\mathsf T}\vec{Y}_s.
\]
The two sums are uncorrelated because every cross term contains an
independent zero-mean record appearing only once. Distinct unordered
pairs in the second sum are likewise uncorrelated: if they share an
index, conditioning on the shared record leaves the other records
independent and centered. By independence,
\[
    \mathbb E\sbra*{
        \rbra*{\vec{Y}_1^{\mathsf T}\vec{Y}_2}^2}
    =\mathbb E\sbra*{\vec{Y}_1^{\mathsf T}\Sigma\vec{Y}_1}
    =\Tr\rbra*{
        \Sigma\mathbb E\sbra*{
            \vec{Y}_1\vec{Y}_1^{\mathsf T}}}
    =\Tr\rbra*{\Sigma^2}.
\]
The linear term has variance
$4\vec{a}^{\mathsf T}\Sigma\vec{a}/N$.
There are $N\rbra{N-1}/2$ unordered pairs, each occurring
twice in the quadratic sum.
The variance of the quadratic term is therefore
\[
    \frac{4}{N^2\rbra{N-1}^2}\cdot\frac{N\rbra{N-1}}2
    \Tr\rbra*{\Sigma^2}
    =\frac{2}{N\rbra{N-1}}\Tr\rbra*{\Sigma^2}.
\]
Since $\Sigma\leq6^k I_{\abs{\calC}}$,
$\vec{a}^{\mathsf T}\Sigma\vec{a}\leq6^kG_{\calC}$ and
$\Tr\rbra*{\Sigma^2}\leq6^{2k}\abs{\calC}$.
Using these bounds and $G_{\calC}\leq2^k-1$ gives
\begin{align}
  \operatorname{Var}\sbra*{\widehat G\mid Q_1,\ldots,Q_m}
    &=\frac4N\vec{a}^{\mathsf T}\Sigma\vec{a}
      +\frac{2}{N\rbra{N-1}}\Tr\rbra*{\Sigma^2}\\
    &\leq\frac{4\cdot6^kG_{\calC}}N
      +\frac{2\cdot6^{2k}\abs{\calC}}{N\rbra{N-1}}\\
    &\leq\frac{4\cdot6^k\rbra{2^k-1}}N
      +\frac{2\cdot6^{2k}\abs{\calC}}{N\rbra{N-1}}.
\end{align}
Since $\abs{\calC}\leq16^km^2$ and $N-1\geq N/2$,
\begin{align}
    \operatorname{Var}\sbra*{\widehat G\mid Q_1,\ldots,Q_m}
    &\leq
    \frac{4\cdot6^k\rbra{2^k-1}}{N}
    +\frac{4\cdot16^k\cdot6^{2k}m^2}{N^2}.
\end{align}
\end{proof}

\Cref{lem:lo-discovery} bounds
$\mathbb E\sbra*{G-G_{\calC}}$, while
\cref{lem:lo-variance} shows that $\widehat G$ is
conditionally unbiased for $G_{\calC}$ and bounds
its conditional variance.
Since $h\rbra{\sigma}=\Tr\rbra*{\sigma^2}+G/2$,
it remains to estimate the purity $\Tr\rbra*{\sigma^2}$.

\begin{lemma}[Purity estimation, adapted from {\cite{EkertEtAl2002}}]
\label{lem:purity-estimation}
Fix $\eta>0$.
Let $\rho$ be a quantum state and let $s\geq1$ be an integer.
Using $2s$ independent copies of $\rho$, one can obtain
an estimator $\widehat\mu$ of $\Tr\rbra*{\rho^2}$ satisfying
\[
    \Pr\sbra*{
        \abs*{\widehat\mu-\Tr\rbra*{\rho^2}}>\eta
    }
    \leq2e^{-s\eta^2/2}.
\]
\end{lemma}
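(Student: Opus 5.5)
The plan is to use the SWAP test on disjoint pairs of copies, followed by Hoeffding's inequality.

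\textbf{Step 1: a bounded unbiased sample.} Split the $2s$ copies into $s$ disjoint pairs $\rho\otimes\rho$. On each pair, perform the two-outcome measurement $\cbra{\rbra{I+\mathrm{SWAP}}/2,\rbra{I-\mathrm{SWAP}}/2}$, where $\mathrm{SWAP}$ exchanges the two copies. Equivalently, run the SWAP test with one control qubit as in~\cite{EkertEtAl2002}. Record $Y_r=+1$ on the symmetric outcome and $Y_r=-1$ on the antisymmetric outcome. Then
\[
\mathbb E\sbra*{Y_r}=\Tr\rbra*{\mathrm{SWAP}\,\rho\otimes\rho}=\Tr\rbra*{\rho^2}.
\]
This follows from the swap trick $\Tr\rbra{\mathrm{SWAP}\,X\otimes Y}=\Tr\rbra{XY}$, which can be checked entrywise in the computational basis.

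\textbf{Step 2: the estimator and concentration.} Set $\widehat\mu=s^{-1}\sum_{r=1}^s Y_r$. The pairs are disjoint and the copies are independent, so the $Y_r$ are i.i.d. They take values in $\sbra{-1,1}$, an interval of length $2$. Hoeffding's inequality therefore gives
\[
\Pr\sbra*{\abs*{\widehat\mu-\Tr\rbra{\rho^2}}>\eta}\leq 2\exp\rbra*{-\frac{2s^2\eta^2}{s\cdot 2^2}}=2e^{-s\eta^2/2},
\]
which is exactly the stated bound.

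\textbf{Main obstacle.} Essentially nothing is hard here. The only care needed is matching the constant: use the range length $2$ of a $\pm1$ variable in Hoeffding's inequality. Independence also requires that the pairs be disjoint. Reusing copies across pairs, as in a U-statistic, would break the i.i.d. structure that Hoeffding's inequality needs.
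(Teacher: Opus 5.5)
Your proposal is correct and matches the paper's argument. Both run $s$ SWAP tests on disjoint pairs with outcomes encoded as $\pm1$, so that each outcome has mean $\Tr(\rho^2)$, and then apply Hoeffding's inequality with range length $2$ to obtain exactly $2e^{-s\eta^2/2}$.
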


The estimator averages the outcomes of $s$ SWAP tests
on disjoint pairs of copies, with ancilla outcomes
$0,1$ encoded as $+1,-1$, respectively.
Each encoded outcome has expectation
$\Tr\rbra*{\rho^2}$.
The stated bound follows from Hoeffding's inequality.

Combining these estimates yields the following upper
bound on the copy complexity of estimating
$h\rbra{\sigma}$.

\begin{theorem}[Copy upper bound]
\label{thm:jump-state-estimation}
Fix a locality bound $k\geq1$.
Let $n\geq1$ be an integer and let $0<\eta\leq1/16$.
There exists an algorithm that, given independent copies
of the jump state $\sigma$ of any nonzero $n$-qubit
canonical dissipator with locality at most $k$,
outputs $\widehat h\in\sbra*{0,\frac{2^k+1}{2}}$
satisfying
$\abs*{\widehat h-h\rbra*{\sigma}}\leq\eta$
with probability at least $0.99$.
The algorithm uses $O\rbra{\eta^{-2}}$ copies.
\end{theorem}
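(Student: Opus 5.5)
We combine the three ingredients already established: candidate discovery (\cref{lem:lo-discovery}), the conditional $U$-statistic (\cref{lem:lo-variance}), and purity estimation (\cref{lem:purity-estimation}). The estimator will be
\[
  \widehat h=\operatorname{clip}\rbra*{\widehat\mu+\widehat G/2}
\]
onto $\sbra{0,(2^k+1)/2}$, which uses $h\rbra{\sigma}=\Tr\rbra{\sigma^2}+G/2$. Clipping to this interval can only reduce the error, because $h\rbra{\sigma}$ itself lies in it by \cref{lem:jump-state-locality-bounds}. The error is then at most
\[
  \abs{\widehat\mu-\Tr\rbra{\sigma^2}}
  +\tfrac12\abs{\widehat G-G_{\calC}}
  +\tfrac12\rbra{G-G_{\calC}},
\]
and it suffices to make each of the three terms at most $\eta/3$ with failure probability at most $0.01/3$.

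\textbf{Bias term.} Since $G-G_{\calC}\geq0$, Markov's inequality together with \cref{lem:lo-discovery} gives $\Pr\sbra{G-G_{\calC}>2\eta/3}\leq 3\cdot 2\cdot16^k(4^k-1)/(2em\eta)$. Choosing $m=\ceil{c_1\eta^{-1}}$ with $c_1$ a large constant depending only on $k$ makes this at most $0.0033$. The first batch therefore costs only $O(\eta^{-1})$ copies.

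\textbf{Fluctuation term.} Conditionally on $Q_1,\dots,Q_m$, \cref{lem:lo-variance} shows $\widehat G$ is unbiased for $G_{\calC}$, with variance at most $c_2/N+c_3m^2/N^2$ for constants depending only on $k$. We take $N=\ceil{c_4\eta^{-2}}$. Then $m^2/N^2=O(\eta^{-2}\cdot\eta^4)=O(\eta^2)$ and $1/N=O(\eta^2)$, so the conditional variance is at most $\epsilon\eta^2$ for any prescribed small $\epsilon$ once $c_4$ is large. Conditional Chebyshev then gives $\Pr\sbra{\abs{\widehat G-G_{\calC}}>2\eta/3}\leq 9\epsilon/4$. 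Averaging over the first batch preserves this bound, because the bound holds uniformly in the outcomes.

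\textbf{Purity term.} \cref{lem:purity-estimation} with accuracy $\eta/3$ and $s=\ceil{18\eta^{-2}\log(600)}$ pairs gives failure probability at most $2e^{-s\eta^2/18}\leq 0.0033$, using $O(\eta^{-2})$ copies.

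All three batches consist of fresh, disjoint copies, so a union bound gives success probability at least $0.99$. The total is $m+N+2s=O(\eta^{-2})$ copies. The condition $\eta\leq1/16$ is used only to absorb ceilings and make constants uniform.

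The main obstacle is balancing $m$ against $N$. The second variance term grows like $m^2/N^2$ because $\abs{\calC}$ scales as $m^2$, whereas the discovery bias decays only like $1/m$. The choice $m=\Theta(\eta^{-1})$ and $N=\Theta(\eta^{-2})$ makes both contributions $O(\eta^2)$ in variance and $O(\eta)$ in bias simultaneously. The remaining care is to combine the conditional Chebyshev bound with Markov's inequality on the random candidate set $\calC$; this works because both estimates are uniform in, or averaged over, the first-batch outcomes.
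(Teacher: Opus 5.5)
Your proposal is correct and matches the paper's proof. Both use three disjoint batches with $m=\Theta(\eta^{-1})$, $N=\Theta(\eta^{-2})$ and $s=\Theta(\eta^{-2})$, clip to $[0,(2^k+1)/2]$, and combine Markov on $G-G_{\calC}$ via \cref{lem:lo-discovery}, conditional Chebyshev via \cref{lem:lo-variance}, Hoeffding for the purity, and a union bound. The only differences are cosmetic: you split the error budget as $\eta/3$ per term rather than the paper's $\eta/2,\eta/4,\eta/4$, and you omit the paper's check that every first-batch outcome has weight at most $k$. That check holds with probability one under the locality promise and serves only to guarantee $\abs{\calC}\leq16^km^2$ on inputs outside the promise.
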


\begin{proof}
Set $\beta=1/300$ and $K_k=2\cdot16^k\rbra{4^k-1}/e$.
Choose $m$, $N$, and $s$ before any measurements:
\begin{align*}
    m=\ceil*{\frac{2K_k}{\beta\eta}}, \qquad
    N =\ceil*{\max\cbra*{
        \frac{32\cdot6^k\rbra{2^k-1}}{\beta\eta^2},
        \frac{8\cdot6^k\cdot4^km}{\sqrt{\beta}\eta}
    }}, \qquad
    s =\ceil*{8\eta^{-2}\log\rbra*{\frac{2}{\beta}}}.
\end{align*}

Use three disjoint batches of $m$, $N$, and $2s$
independent copies of $\sigma$.
Measure the first batch in the Pauli-vector basis,
obtaining $Q_1,\ldots,Q_m$.
If any outcome has weight greater than $k$, return $\widehat h=0$.
Otherwise, construct from $Q_1,\ldots,Q_m$ the candidate
set $\calC$ used in \cref{lem:lo-discovery}.
Use the second batch to compute the U-statistic
$\widehat G$ analyzed in \cref{lem:lo-variance}
from independent product-Pauli measurement records.
Under the locality assumption, every first-batch outcome
has weight at most $k$ with probability one.
For arbitrary input states, this check ensures
$\abs{\calC}\leq16^km^2$ whenever the algorithm continues.

Let $\mu=\Tr\rbra*{\sigma^2}$.
Apply \cref{lem:purity-estimation} to the third batch
with $\rho=\sigma$ and accuracy $\eta/2$ to obtain
$\widehat\mu$.
Each SWAP exchanges the full $\mathsf S\mathsf R$
registers between the two copies.
Return
\[
    \widehat h
    =\min\cbra*{
        \frac{2^k+1}{2},
        \max\cbra*{0,\widehat\mu+\frac{\widehat G}{2}}
    }.
\]

By \cref{lem:jump-state-locality-bounds},
$0\leq h\rbra{\sigma}\leq\rbra{2^k+1}/2$.
Projecting $\widehat\mu+\widehat G/2$ onto this interval
cannot increase the absolute estimation error.

Since $G-G_{\calC}\geq0$, Markov's inequality and
\cref{lem:lo-discovery} give
\[
  \Pr\sbra*{G-G_{\calC}>\eta/2}
  \leq\frac{2\mathbb E\sbra*{G-G_{\calC}}}{\eta}
  \leq\frac{2K_k}{m\eta}
  \leq\beta.
\]

Conditional on the first-batch outcomes,
\cref{lem:lo-variance} gives
$\mathbb E\sbra*{\widehat G\mid Q_1,\ldots,Q_m}=G_{\calC}$.
The variance bound in \cref{lem:lo-variance}
and the choice of $N$ give
\begin{align}
  \operatorname{Var}\sbra*{\widehat G\mid Q_1,\ldots,Q_m}
  &\leq\frac{4\cdot6^k\rbra{2^k-1}}{N}
       +\frac{4\cdot16^k\cdot6^{2k}m^2}{N^2}\\
  &\leq\frac{\beta\eta^2}{8}+\frac{\beta\eta^2}{16}
   =\frac{3\beta\eta^2}{16}.
\end{align}
Chebyshev's inequality gives
\begin{align}
  \Pr\sbra*{
    \abs*{\widehat G-G_{\calC}}>\eta/2
    \mid Q_1,\ldots,Q_m
  }
  \leq
  \frac{4\operatorname{Var}\sbra*{
    \widehat G\mid Q_1,\ldots,Q_m
  }}{\eta^2}
  \leq\frac{3\beta}{4}\leq\beta.
\end{align}
Averaging over the first-batch outcomes gives the same
unconditional bound.

By \cref{lem:purity-estimation} and the choice of $s$,
\[
  \Pr\sbra*{\abs*{\widehat\mu-\mu}>\eta/2}
  \leq2e^{-s\eta^2/8}
  \leq\beta.
\]

By the union bound, with probability at least $1-3\beta=0.99$,
the quantities $G-G_{\calC}$,
$\abs*{\widehat G-G_{\calC}}$, and $\abs*{\widehat\mu-\mu}$
are all at most $\eta/2$.
On this event, using $h\rbra*{\sigma}=\mu+G/2$ gives
\begin{align}
  \abs*{\widehat h-h\rbra*{\sigma}}
  &\leq
  \abs*{\widehat\mu+\widehat G/2-h\rbra*{\sigma}}\\
  &\leq
  \abs*{\widehat\mu-\mu}
  +\frac12\abs*{\widehat G-G_{\calC}}
  +\frac12\rbra*{G-G_{\calC}}\\
  &\leq
  \frac{\eta}{2}+\frac{\eta}{4}+\frac{\eta}{4}
  =\eta.
\end{align}

The algorithm uses at most $m+N+2s$ copies.
Since $k$ and $\beta$ are fixed,
the batch sizes satisfy $m=O\rbra{\eta^{-1}}$,
$N=O\rbra{\eta^{-2}}$, and $s=O\rbra{\eta^{-2}}$.
Thus the total copy complexity is $O\rbra{\eta^{-2}}$.
\end{proof}

\subsection{Copy lower bound}
\label{sec:copy-lower-bound}
This subsection shows that the dependence on $\eta$
in the copy upper bound of
\cref{thm:jump-state-estimation} is optimal for fixed $k$.
The $\Omega\rbra*{\eta^{-2}}$ lower bound for purity
estimation~\cite[Theorem~3.2]{ChenLiuWang2026TracePowers}
can be transferred to estimating $h$ by embedding
the two-dimensional states used in that proof into
mixed jump states generated by two jump operators
acting on the same qubit.
These states have maximally mixed reference marginals,
so $h\rbra*{\sigma}=\Tr\rbra*{\sigma^2}$.
The construction below establishes the same lower bound
even for a single jump operator acting on one qubit.
The corresponding jump states are pure, so
$\Tr\rbra*{\sigma^2}=1$ and the variation of $h$
comes entirely from their reference marginals.

The lower-bound argument uses the following standard bound.

\begin{lemma}[Holevo--Helstrom bound, {\cite{Helstrom1967,Holevo1973}}]
\label{lem:holevo-helstrom}
Let $\rho_0$ and $\rho_1$ be quantum states on the same
finite-dimensional Hilbert space.
The optimal success probability for distinguishing them,
given one copy and equal prior probabilities, is $\frac12+\frac14\Abs*{\rho_0-\rho_1}_1$.
\end{lemma}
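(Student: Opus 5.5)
The plan is to reduce the discrimination problem to a linear optimization over a single measurement effect and then solve it by the Jordan decomposition of $\rho_0-\rho_1$.

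First I would note that any strategy for a single copy, including processing by ancillas or internal randomness followed by a decision, induces a two-outcome POVM $\cbra{E_0,E_1}$ on the given Hilbert space. Here $E_0,E_1\geq0$ and $E_0+E_1=I$, with outcome $b$ read as the guess ``$\rho_b$''. Ancillas and classical randomness can be absorbed by taking the reduced effect $\Tr_{\mathrm{anc}}\rbra{(I\otimes\tau)F_b}$, so restricting to such POVMs loses no generality. With equal priors, the success probability is
\[
  p_{\mathrm{succ}}=\tfrac12\Tr\rbra{E_0\rho_0}+\tfrac12\Tr\rbra{E_1\rho_1}
  =\tfrac12+\tfrac12\Tr\rbra*{E_0\rbra{\rho_0-\rho_1}},
\]
where I used $E_1=I-E_0$ and $\Tr\rbra{\rho_1}=1$.

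Next, set $\Delta=\rho_0-\rho_1$. It is Hermitian and traceless, so I would write $\Delta=\Delta_+-\Delta_-$ with $\Delta_\pm\geq0$ having orthogonal supports. Tracelessness gives $\Tr\rbra{\Delta_+}=\Tr\rbra{\Delta_-}=\Abs{\Delta}_1/2$. This is the same positive/negative eigenvalue split already used in the proof of \cref{lem:generator-strength-bounds}. For any $0\leq E_0\leq I$,
\[
  \Tr\rbra{E_0\Delta}=\Tr\rbra{E_0\Delta_+}-\Tr\rbra{E_0\Delta_-}\leq\Tr\rbra{E_0\Delta_+}\leq\Tr\rbra{\Delta_+}=\tfrac12\Abs{\Delta}_1 .
\]
The first inequality holds because $\Tr\rbra{E_0\Delta_-}\geq0$ for positive operators. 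The second holds because $\Tr\rbra{(I-E_0)\Delta_+}\geq0$. Together these give the upper bound $\frac12+\frac14\Abs{\rho_0-\rho_1}_1$.

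Finally, I would show that the bound is attained by taking $E_0$ to be the orthogonal projector onto the support of $\Delta_+$. Then $\Tr\rbra{E_0\Delta_+}=\Tr\rbra{\Delta_+}$ and $\Tr\rbra{E_0\Delta_-}=0$, so this projective measurement achieves the claimed value.

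None of the steps is a real obstacle. The only point needing care is the justification that general single-copy strategies reduce to a two-outcome POVM on the system itself, which is the Naimark/partial-trace reduction described in the first step.
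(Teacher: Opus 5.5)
Your argument is correct and is the standard Helstrom proof. You reduce any single-copy strategy to a binary POVM via the partial-trace (Heisenberg-picture) effect, write the success probability as $\tfrac12+\tfrac12\Tr\rbra{E_0(\rho_0-\rho_1)}$, optimize using the Jordan decomposition of $\rho_0-\rho_1$, and attain the bound with the projector onto the positive part. The paper gives no proof of its own: it cites Helstrom and Holevo, and in the copy lower bound it uses the lemma only through the same reduction you describe, treating an entire adaptive, randomized strategy on $\sigma_\theta^{\otimes N}$ as one two-outcome measurement.
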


\begin{theorem}[Copy lower bound]
\label{thm:jump-state-copy-lower}
Fix a locality bound $k\geq1$.
Let $n\geq1$ be an integer and let $0<\eta\leq1/16$.
Let $\sigma$ range over the jump states of nonzero
$n$-qubit canonical dissipators with locality at most $k$.
Any estimator that accesses $\sigma$ only through
independent copies and, for every such $\sigma$,
returns an estimate $\widehat h$ satisfying
$\abs*{\widehat h-h\rbra*{\sigma}}\leq\eta$
with probability at least $2/3$ requires
$\Omega\rbra*{\eta^{-2}}$ copies.
\end{theorem}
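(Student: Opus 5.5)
The plan is to reduce estimation to binary hypothesis testing between two pure jump states. Both come from a single traceless jump operator acting on qubit $1$, so they are $1$-local and admissible for every $k\geq1$ and $n\geq1$. Their functional values will differ by more than $2\eta$, while their overlap will be $\cos\delta$ with $\delta=\Theta(\eta)$. An estimator with accuracy $\eta$ that uses $N$ copies then distinguishes $\sigma_0^{\otimes N}$ from $\sigma_1^{\otimes N}$: output the hypothesis whose value is closer to $\widehat h$. Its success probability is at least $2/3$, and \cref{lem:holevo-helstrom} then forces $N=\Omega(\eta^{-2})$.

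\textbf{Construction.} For $\varphi\in(0,\pi/4)$ take $L_\varphi=\rbra{\cos\varphi\,X+i\sin\varphi\,Y}\otimes I_{\cbra{2,\ldots,n}}$. This operator is traceless and nonzero, so $\Gamma>0$. Since $\cbra{\opket{P}}_P$ is orthonormal, the jump state is pure:
\[
\sigma_\varphi=\ketbra{\psi_\varphi}{\psi_\varphi},\qquad \ket{\psi_\varphi}=\cos\varphi\opket{X}+i\sin\varphi\opket{Y}.
\]
Hence $\Tr\rbra{\sigma_\varphi^2}=1$ and $\abs{\braket{\psi_\varphi}{\psi_{\varphi'}}}=\cos\rbra{\varphi-\varphi'}$.

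\textbf{Functional value.} Using $XY+YX=0$ and $XY=iZ$, a short computation gives
\[
L_\varphi^\dagger L_\varphi=\rbra{I-\sin2\varphi\,Z}\otimes I.
\]
By \eqref{eq:jump-state-marginal}, $\sigma_{\mathsf R}$ is $\tfrac12\rbra{I\mp\sin2\varphi\,Z}$ on qubit $1$, tensored with the maximally mixed state on the remaining qubits. The transpose only possibly flips the sign, which does not matter. Therefore $d\Tr\rbra{\sigma_{\mathsf R}^2}=1+\sin^2 2\varphi$, and
\[
h\rbra{\sigma_\varphi}=1+\tfrac12\sin^22\varphi.
\]
Its derivative in $\varphi$ is $\sin4\varphi$.

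\textbf{Choice of parameters.} Take $\varphi_0=\pi/8-2\eta$ and $\varphi_1=\pi/8+2\eta$, so $\delta=4\eta\leq1/4$. On this interval $\sin4\varphi\geq\cos\rbra{8\eta}\geq\cos\rbra{1/2}>0.87$. Hence $h\rbra{\sigma_{\varphi_1}}-h\rbra{\sigma_{\varphi_0}}\geq 0.87\cdot4\eta>2\eta$, so the accuracy-$\eta$ intervals around the two values are disjoint.

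\textbf{Distinguishability bound.} For $N$ copies the two states are pure with overlap $\cos^N\delta$, so
\[
\Abs{\sigma_{\varphi_0}^{\otimes N}-\sigma_{\varphi_1}^{\otimes N}}_1=2\sqrt{1-\cos^{2N}\delta}\leq2\sqrt{N}\sin\delta,
\]
using $1-\rbra{1-x}^N\leq Nx$ with $x=\sin^2\delta$. \Cref{lem:holevo-helstrom} requires this trace distance to be at least $2/3$, since the success probability is $\tfrac12+\tfrac14\Abs{\cdot}_1\geq2/3$. This gives $N\geq1/\rbra{9\sin^2\delta}\geq1/\rbra{144\eta^2}$.

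The Holevo--Helstrom bound applies to arbitrary joint, adaptive, or collective measurements on the $N$ copies, so the lower bound holds for any such estimator. The main step to get right is the marginal computation showing that $h$ varies at rate bounded below near $\varphi=\pi/8$ while purity stays fixed; the remaining steps are routine.
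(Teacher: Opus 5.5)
Your proposal is correct and follows essentially the paper's proof. Both arguments use a single traceless jump operator on qubit $1$, which gives pure jump states with $h=1+\tfrac12\sin^2 2\theta$. Both take two angles $4\eta$ apart near $\pi/8$ and apply Holevo--Helstrom with per-copy overlap $\cos 4\eta$, arriving at the same bound $N\geq 1/(144\eta^2)$. The only differences are cosmetic: you use $Y$ instead of $Z$ in the jump operator, and you center the two angles symmetrically about $\pi/8$ rather than taking $\theta_0=\pi/8$.
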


The lower bound holds even for estimators using arbitrary
adaptive and collective measurements.
It also holds when inputs are restricted to jump states
generated by a single jump operator acting on one qubit.

\begin{proof}
Suppose the estimator uses at most $N$ copies,
where $N\geq0$ is an integer.
Let $X_1,Y_1,Z_1$ denote the Pauli operators on the
first qubit, tensored with the identity on the remaining
qubits.
For $\theta\in\rbra{0,\pi/4}$, consider the canonical
dissipator with the single traceless jump operator
\[
  L_\theta=\cos\theta\,X_1+i\sin\theta\,Z_1.
\]
Since $L_\theta$ acts nontrivially only on the first qubit,
the dissipator has locality at most $k$ for every $k\geq1$.
The Pauli identities $X_1^2=Z_1^2=I$,
$X_1Z_1=-iY_1$, and $Z_1X_1=iY_1$ give
\[
  A_\theta=L_\theta^\dagger L_\theta
  =I+i\cos\theta\sin\theta\rbra*{X_1Z_1-Z_1X_1}
  =I+\sin\rbra*{2\theta}Y_1.
\]
Since $\Tr\rbra*{Y_1}=0$, the total jump rate satisfies
\[
  \Gamma_\theta=\overline{\Tr}\rbra*{A_\theta}=1.
\]

Since $\opbra{L_\theta}\opket{L_\theta}=\Gamma_\theta=1$,
the corresponding jump state is pure:
\[
  \sigma_\theta=\opket{L_\theta}\opbra{L_\theta},
  \qquad
  \Tr\rbra*{\sigma_\theta^2}=1.
\]
Taking the system partial trace gives
\[
  \rbra*{\sigma_\theta}_{\mathsf R}
  =\Tr_{\mathsf S}\rbra*{
    \opket{L_\theta}\opbra{L_\theta}}
  =\frac{\rbra*{L_\theta^\dagger L_\theta}^{\mathsf T}}{d}
  =\frac{A_\theta^{\mathsf T}}{d}.
\]
Using $Y_1^2=I$ and $\Tr\rbra*{Y_1}=0$ then gives
\[
  d\Tr\rbra*{\rbra*{\sigma_\theta}_{\mathsf R}^2}
  =\overline{\Tr}\rbra*{A_\theta^2}
  =1+\sin^2\rbra*{2\theta}.
\]
Consequently,
\begin{equation}
  h\rbra*{\sigma_\theta}
  =\Tr\rbra*{\sigma_\theta^2}
   +\frac12\rbra*{
     d\Tr\rbra*{\rbra*{\sigma_\theta}_{\mathsf R}^2}-1}
  =1+\frac12\sin^2\rbra*{2\theta}.
  \label{eq:copy-lower-functional}
\end{equation}

Set $\theta_0=\pi/8$ and $\theta_1=\pi/8+4\eta$.
The assumption $0<\eta\leq1/16$ ensures
$0<\theta_0<\theta_1<\pi/4$ and
$0<16\eta\leq1<\pi/2$.
Using \cref{eq:copy-lower-functional} and
$\sin x\geq2x/\pi$ for $x\in\sbra{0,\pi/2}$ gives
\begin{align}
  h\rbra*{\sigma_{\theta_1}}
  -h\rbra*{\sigma_{\theta_0}}
  &=\frac12\rbra*{
    \sin^2\rbra*{\frac{\pi}{4}+8\eta}-\frac12}\\
  &=\frac14\sin\rbra*{16\eta}\\
  &\geq\frac{8\eta}{\pi}>2\eta.
\end{align}

Compare the estimate $\widehat h$ with the midpoint of
the two functional values.
Identify the input as $\sigma_{\theta_0}$ if
$\widehat h$ is below the midpoint, and as
$\sigma_{\theta_1}$ otherwise.
Since the functional values differ by more than $2\eta$,
the decision is correct whenever the estimation error
is at most $\eta$.
Thus the success probability is at least $2/3$
on either input.

Supply all $N$ copies at the start and discard any
unused copies.
The entire algorithm, including its internal randomness
and the final decision, corresponds to a two-outcome
measurement on $\sigma_\theta^{\otimes N}$.
For equal prior probabilities, \cref{lem:holevo-helstrom}
gives the optimal success probability
$\frac12+\frac14\Abs*{
\sigma_{\theta_1}^{\otimes N}
-\sigma_{\theta_0}^{\otimes N}}_1$.
The success probability of at least $2/3$ therefore gives
\begin{align}
  \frac13
    &\leq\frac12\Abs*{
      \sigma_{\theta_1}^{\otimes N}
      -\sigma_{\theta_0}^{\otimes N}}_1.
  \label{eq:copy-lower-trace-separation}
\end{align}

Pauli orthonormality gives
\[
  \opbra{L_{\theta_0}}\opket{L_{\theta_1}}
  =\cos\theta_0\cos\theta_1+\sin\theta_0\sin\theta_1
  =\cos\rbra*{4\eta}.
\]
Let
\[
  B=\sigma_{\theta_1}^{\otimes N}-\sigma_{\theta_0}^{\otimes N}.
\]
By \cref{eq:copy-lower-trace-separation}, $B\ne0$.
The operator $B$ is Hermitian, has trace zero, and has rank
at most two.
Its two nonzero eigenvalues are therefore $\lambda$ and
$-\lambda$ for some $\lambda>0$, so
\[
  \frac12\Abs{B}_1=\lambda,
  \qquad
  \Tr\rbra*{B^2}=2\lambda^2.
\]
Both tensor-power states are rank-one projectors.
Expanding the square and using cyclicity of the trace gives
\begin{align}
  \Tr\rbra*{B^2}
  &=\Tr\rbra*{\rbra*{\sigma_{\theta_1}^{\otimes N}}^2}
    +\Tr\rbra*{\rbra*{\sigma_{\theta_0}^{\otimes N}}^2}
    -2\Tr\rbra*{
      \sigma_{\theta_1}^{\otimes N}\sigma_{\theta_0}^{\otimes N}}\\
  &=2-2\rbra*{\Tr\rbra*{
      \sigma_{\theta_1}\sigma_{\theta_0}}}^N\\
  &=2-2\abs*{\opbra{L_{\theta_0}}\opket{L_{\theta_1}}}^{2N}\\
  &=2\rbra*{1-\cos^{2N}\rbra*{4\eta}}.
\end{align}
Using $1-u^N\leq N\rbra{1-u}$ for $u\in\sbra{0,1}$
and $\sin x\leq x$ for $x\geq0$ therefore gives
\begin{align}
  \frac12\Abs{B}_1
  &=\sqrt{\frac{\Tr\rbra*{B^2}}{2}}\\
  &=\sqrt{1-\cos^{2N}\rbra*{4\eta}}\\
  &\leq\sqrt{N\sin^2\rbra*{4\eta}}\\
  &\leq4\eta\sqrt N.
\end{align}
Combining this with~\cref{eq:copy-lower-trace-separation}
and squaring yields $N\geq1/\rbra{144\eta^2}$, proving the claimed
lower bound.
\end{proof}

\section{Time-optimal tolerant testing}
\label{sec:locality-only}

This section establishes the optimal total time for
the testing problem and oracle model specified in
\cref{sec:testing-strategy-overview}.
\Cref{subsec:lo-time-upper,subsec:lo-time-lower}
establish matching time upper and lower
bounds, respectively, for fixed $k$.

\subsection{Time upper bound}
\label{subsec:lo-time-upper}

The upper bound combines estimates of the total jump rate
and $h\rbra{\sigma}$, using the following flagged state
preparation and the estimator in
\cref{thm:jump-state-estimation}.
For $t>0$, prepare $\ket{\Phi}$, apply $\calE_t$
to the system register, and perform the two-outcome
projective measurement
$\cbra{\ketbra{\Phi}{\Phi},\Pi}$
on the resulting Choi state $J\rbra{\calE_t}$.
The measurement records only the binary outcome
and preserves coherence within the success subspace
$\ket{\Phi}^{\perp}$.
On outcome $\Pi$, the preparation succeeds and outputs
the postmeasurement state, whose unnormalized
density operator is $B_t=\Pi J\rbra{\calE_t}\Pi$.
Otherwise, replace the quantum output with a flag
$\ket{\mathrm{fail}}$ orthogonal to the success subspace.
Each preparation uses total time $t$.

The output state, including the flag, is
\[
  \Omega_t
  =\rbra*{1-\Tr\rbra*{B_t}}
     \ketbra{\mathrm{fail}}{\mathrm{fail}}
     \oplus B_t.
\]

Recall that $C=\Pi J\rbra*{\calL}\Pi$,
$\Gamma=\Tr\rbra*{C}$, and
$\sigma=C/\Gamma$ when $\Gamma>0$.
For $\Lambda t\leq1/2$, compare $\Omega_t$
with the ideal flagged state
\begin{equation}
\label{eq:lo-ideal-flag}
  \Omega_t^{\mathrm{ideal}}
  =\rbra*{1-\Gamma t}\ketbra{\mathrm{fail}}{\mathrm{fail}}
    \oplus tC.
\end{equation}
Since $C\geq0$, $\Tr\rbra*{C}=\Gamma$, and
$0\leq\Gamma t\leq\Lambda t\leq1/2$
by \cref{lem:generator-strength-bounds}, this is a quantum state.
For $\Gamma>0$, the ideal preparation succeeds with
probability $\Gamma t$ and outputs $\sigma$
conditional on success.
When $\Gamma=0$, its output is the failure flag.

\begin{theorem}[Time upper bound]
\label{thm:lo-time}
Fix a locality bound $k\geq1$.
Let $n\geq1$ be an integer, let $\Lambda>0$ be a strength bound,
and let $0\leq\varepsilon_1<\varepsilon_2\leq\Lambda/\sqrt2$
be thresholds.
Let $\calL$ be an unknown time-independent $n$-qubit
Lindblad generator with $\Abs{\calL}_\diamond\leq\Lambda$
whose canonical dissipator $\calD$ has locality at most $k$.
There is a tester that distinguishes $\Abs{\calD}_F\leq\varepsilon_1$ from $\Abs{\calD}_F\geq\varepsilon_2$
with success probability at least $2/3$, using total time
\[
  O\rbra*{
    \frac{\varepsilon_2}
         {\rbra*{\varepsilon_2-\varepsilon_1}^2}}.
\]
\end{theorem}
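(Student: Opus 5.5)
We analyze \cref{alg:tolerant-dissipation-testing} in two stages: first for the ideal flagged preparations $\Omega_t^{\mathrm{ideal}}$ from \eqref{eq:lo-ideal-flag}, then transferring the guarantee to the physical preparations $\Omega_t$. Set $q_k=\sqrt{1+(2^k+1)/2}$, $\delta=\varepsilon_2-\varepsilon_1$, $\eta=\delta/(32\varepsilon_2)\leq1/32$. Let $N_\eta=O(\eta^{-2})$ be the copy count of \cref{thm:jump-state-estimation}. Choose $T=c_k\varepsilon_2/\delta^2$ with $c_k$ a large constant depending only on $k$. Choose $M$ large enough that $\Lambda T/M\leq1/2$ and the finite-time error below is at most $0.01$. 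In the ideal model, $K\sim\operatorname{Bin}(M,\Gamma t)$, so $\widehat\Gamma=K/T$ is unbiased with $\operatorname{Var}[\widehat\Gamma]\leq\Gamma/T$.

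\textbf{Case analysis in the ideal model.} We will show that the bad event
\[
  \abs{\widehat\Gamma-\Gamma}>\max\cbra*{\tfrac{\Gamma}{4},\,\tfrac{\delta}{8q_k}}
\]
has probability at most $0.01$. By Chebyshev's inequality, this requires $\Gamma/T\leq 0.01\max\cbra{\Gamma^2/16,\delta^2/(64q_k^2)}$. If $\Gamma\geq\delta/(2q_k)$, the first term suffices once $T\gtrsim 1/\Gamma$. If $\Gamma<\delta/(2q_k)$, the second suffices once $T\gtrsim q_k^2\Gamma/\delta^2$, which holds since $\Gamma\lesssim\varepsilon_2$ and $T\gtrsim\varepsilon_2/\delta^2$. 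Both follow from $T=c_k\varepsilon_2/\delta^2$, since $1/\Gamma\lesssim q_k/\delta\leq q_k\varepsilon_2/\delta^2$ in the first regime. The case $\Gamma=0$ is trivial, since then $K=0$ and the tester returns YES.

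Off the bad event, we check the two early exits using $\Gamma\leq\Abs{\calD}_F\leq q_k\Gamma$.
\begin{itemize}
\item Accepting when $\widehat\Gamma\leq\varepsilon_2/(2q_k)$ is safe in the NO case. There $\Gamma\geq\varepsilon_2/q_k$, so $\widehat\Gamma\geq\tfrac34\Gamma>\varepsilon_2/(2q_k)$.
\item Rejecting when $\widehat\Gamma\geq2\varepsilon_2$ is safe in the YES case. There $\Gamma\leq\varepsilon_1<\varepsilon_2$, so $\widehat\Gamma\leq\tfrac54\Gamma+\delta/(8q_k)<2\varepsilon_2$.
\end{itemize}

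In the intermediate case, $\widehat\Gamma\in(\varepsilon_2/(2q_k),2\varepsilon_2)$ forces $\Gamma\in[\varepsilon_2/(4q_k),\,3\varepsilon_2]$. Hence $\Gamma=\Theta(\varepsilon_2)$ and the error bound is $\abs{\widehat\Gamma-\Gamma}\leq\max\cbra{\Gamma/4,\delta/(8q_k)}$. The $\Gamma/4$ branch is too coarse here, so we sharpen it. Chebyshev with threshold $\delta/(8q_k)$ needs $\Gamma/T\leq 0.01\,\delta^2/(64q_k^2)$, which holds once $c_k$ is large, because $\Gamma\leq3\varepsilon_2$ now. We therefore add the event $\abs{\widehat\Gamma-\Gamma}\leq\delta/(8q_k)$, restricted to $\Gamma\leq3\varepsilon_2$, to the good event.

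\textbf{Enough copies.} In the same case, $K=\widehat\Gamma T>\varepsilon_2T/(2q_k)=c_k\varepsilon_2^2/(2q_k\delta^2)\gtrsim c_k\eta^{-2}$. This is at least $N_\eta$ for $c_k$ large, so the stored copies suffice and no additional event is needed. Conditional on the success flags, the stored outputs are i.i.d.\ copies of $\sigma$, so \cref{thm:jump-state-estimation} yields $\abs{\widehat h-h(\sigma)}\leq\eta$ with probability at least $0.99$. The displayed perturbation bound then gives
\[
  \abs*{\widehat\Gamma\sqrt{1+\widehat h}-\Abs{\calD}_F}\leq q_k\cdot\frac{\delta}{8q_k}+\frac{3\varepsilon_2}{2}\cdot\frac{\delta}{32\varepsilon_2}<\frac{\delta}{2},
\]
so the final threshold comparison is correct. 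A union bound gives ideal success probability at least $0.97$.

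\textbf{Transfer to physical evolution.} This is the main technical step. We must show $\Abs{\Omega_t-\Omega_t^{\mathrm{ideal}}}_1=O(\Lambda^2t^2)$. Write $J(\calE_t)=\ketbra{\Phi}{\Phi}+tJ(\calL)+R_t$, where
\[
  R_t=\sum_{j\geq2}\frac{t^j}{j!}J(\calL^j),\qquad \Abs{R_t}_1\leq\sum_{j\geq2}\frac{(\Lambda t)^j}{j!}\leq\Lambda^2t^2
\]
for $\Lambda t\leq1/2$, using submultiplicativity of the diamond norm. Then $B_t=tC+\Pi R_t\Pi$, so $\Abs{B_t-tC}_1\leq\Lambda^2t^2$, and the flag weights differ by $\abs{\Tr(\Pi R_t\Pi)}$, at most the same amount. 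Hence the difference is at most $2\Lambda^2t^2$. Running all $M$ preparations and the rest of the algorithm as one channel, the hybrid argument changes every output probability by at most $2M\Lambda^2t^2=2\Lambda^2T^2/M\leq0.01$ for $M$ large. The overall success probability is therefore at least $0.96\geq2/3$, and the total time is exactly $Mt=T=O(\varepsilon_2/\delta^2)$.
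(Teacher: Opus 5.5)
Your proof is correct and follows the paper's route: the same flagged short-time preparation, Chebyshev control of $\widehat\Gamma$ with early exits justified by $\Gamma\le\Abs{\calD}_F\le q_k\Gamma$, the copy estimator at accuracy $\eta=\delta/(32\varepsilon_2)$ in the intermediate regime (where $\Gamma=\Theta(\varepsilon_2)$ guarantees enough stored copies), and a Taylor-remainder plus hybrid argument transferring the ideal guarantee to the physical outputs $\Omega_t$. The differences are cosmetic. You first use a combined multiplicative/additive rate error and then sharpen it for $\Gamma\le3\varepsilon_2$, whereas the paper splits directly into $\Gamma\le4\varepsilon_2$ and $\Gamma>4\varepsilon_2$; the remaining differences are a harmless extra factor of $2$ in your finite-time error bound and different constants.
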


\begin{proof}
Write $\delta=\varepsilon_2-\varepsilon_1$.
Set $q_k=\sqrt{1+\rbra{2^k+1}/2}$.
When $\Gamma>0$, \cref{lem:jump-state-locality-bounds}
gives $0\leq h\rbra{\sigma}\leq\rbra{2^k+1}/2$.
Together with \cref{lem:rate-state-identity}, this yields
\[
  \Gamma
  \leq\Abs{\calD}_F
  =\Gamma\sqrt{1+h\rbra{\sigma}}
  \leq q_k\Gamma.
\]
When $\Gamma=0$, $\calD=0$, so the two inequalities
also hold.

Following the accuracy choice motivated in
\cref{sec:testing-strategy-overview}, set
$\eta=\delta/\rbra{32\varepsilon_2}$,
so $0<\eta\leq1/32$.
Choose $C_k\geq1$, depending only on $k$, sufficiently large
that the algorithm constructed in the proof of
\cref{thm:jump-state-estimation}, with accuracy $\eta$,
uses at most $\ceil*{C_k\eta^{-2}}$ copies.
Choose the deterministic parameters
\begin{align*}
  T=\frac{1000q_k^2}{\varepsilon_2}\ceil*{C_k\eta^{-2}}, \qquad
  M=\ceil*{\max\cbra*{1,2\Lambda T,100e\Lambda^2T^2}},
  \qquad t=\frac TM.
\end{align*}
The choice of $M$ gives $\Lambda t=\Lambda T/M\leq1/2$,
so the ideal flagged output is a state.
Run exactly $M$ independent flagged preparations,
each producing $\Omega_t$.
Let $K$ be the total number of successes, and store
the first $\min\cbra*{K,\ceil*{C_k\eta^{-2}}}$
successful outputs.
Set $\widehat\Gamma=K/T$.
Use the following decision rule:
\begin{enumerate}
\item If $\widehat\Gamma\le\varepsilon_2/\rbra{2q_k}$, return
$\mathrm{YES}$.
\item If $\widehat\Gamma\ge2\varepsilon_2$, return $\mathrm{NO}$.
\item Otherwise, apply the algorithm constructed in the
proof of \cref{thm:jump-state-estimation}, with accuracy
$\eta$, to the stored outputs, obtaining $\widehat h$.
Return $\mathrm{YES}$ if
$\widehat\Gamma\sqrt{1+\widehat h}
 \leq\rbra{\varepsilon_1+\varepsilon_2}/2$,
and $\mathrm{NO}$ otherwise.
\end{enumerate}
There are enough copies to apply the third decision rule,
since its condition implies
\[
  K>\frac{T\varepsilon_2}{2q_k}
   =500q_k\ceil*{C_k\eta^{-2}}
   \geq\ceil*{C_k\eta^{-2}}.
\]
The tester makes exactly $M$ oracle calls of duration $t$,
using total time $Mt=T$.
The call count is $M=O\rbra*{1+\Lambda T+\Lambda^2T^2}$;
this is separate from the total-evolution-time resource bound.

First analyze the ideal experiment.
If $\Gamma=0$, every preparation fails and the tester
returns $\mathrm{YES}$.
For the remaining ideal analysis, suppose $\Gamma>0$.
Then $K\sim\operatorname{Bin}\rbra{M,\Gamma t}$, so
\begin{equation}
\label{eq:lo-rate-moments}
  \mathbb E\sbra*{\widehat\Gamma}=\Gamma,
  \qquad
  \operatorname{Var}\sbra*{\widehat\Gamma}
  =\frac{\Gamma\rbra*{1-\Gamma t}}{T}
  \leq\frac{\Gamma}{T}.
\end{equation}
Suppose $\Gamma\le4\varepsilon_2$. Chebyshev's inequality gives
\begin{align*}
  \Pr\sbra*{\abs*{\widehat\Gamma-\Gamma}>
                        \frac{\delta}{16q_k}}
  \le\frac{1024q_k^2\varepsilon_2}{T\delta^2}
  \le\frac1{1000},
\end{align*}
where the second inequality uses
$\ceil*{C_k\eta^{-2}}\geq\eta^{-2}$.
On the complementary event, if the first decision rule applies, then
\[
  \Abs{\calD}_F\le q_k\Gamma
    \le\frac{\varepsilon_2}{2}+\frac\delta{16}
    <\varepsilon_2.
\]
Thus returning $\mathrm{YES}$ is correct.
If the second decision rule applies, then
\[
  \Abs{\calD}_F\ge\Gamma
     \ge2\varepsilon_2-\frac\delta{16q_k}>\varepsilon_1,
\]
so returning $\mathrm{NO}$ is also correct.

Conditional on any flag sequence for which the third decision rule applies,
the stored outputs are independent copies of $\sigma$.
Thus \cref{thm:jump-state-estimation} gives
$\abs*{\widehat h-h\rbra*{\sigma}}\leq\eta$
with conditional failure probability at most $0.01$.
The estimator always returns
$\widehat h\in\sbra*{0,\rbra{2^k+1}/2}$.
On the event that both estimation errors satisfy their stated bounds,
using $\abs{\sqrt{1+x}-\sqrt{1+y}}\leq\abs{x-y}/2$
for $x,y\geq0$ gives
\begin{align}
  \abs*{\widehat\Gamma\sqrt{1+\widehat h}
      -\Abs{\calD}_F}
  &\leq
    \abs*{\widehat\Gamma-\Gamma}\sqrt{1+\widehat h}
    +\Gamma\abs*{\sqrt{1+\widehat h}
                  -\sqrt{1+h\rbra{\sigma}}}\\
  &\leq
    q_k\abs*{\widehat\Gamma-\Gamma}
    +\frac{\Gamma}{2}
       \abs*{\widehat h-h\rbra{\sigma}}\\
  &\leq\frac{\delta}{16}+2\varepsilon_2\eta
   =\frac{\delta}{8}.
\end{align}
Since $\delta/8<\delta/2$, the midpoint decision
is correct on this event.
Averaging the conditional estimation failure probability
over flag sequences for which the third decision rule applies and using the
union bound gives total failure probability at most
$1/1000+0.01=0.011$ when $\Gamma\leq4\varepsilon_2$.

If $\Gamma>4\varepsilon_2$, the input is in the
$\mathrm{NO}$ case.
The tester is correct whenever
$\widehat\Gamma\geq2\varepsilon_2$.
By Chebyshev's inequality and
\cref{eq:lo-rate-moments},
\begin{align}
  \Pr\sbra*{\widehat\Gamma<2\varepsilon_2}
  \leq
    \Pr\sbra*{\abs*{\widehat\Gamma-\Gamma}
                 \geq\Gamma/2}
  \leq\frac{4}{\Gamma T}
  \leq\frac{1}{\varepsilon_2T}
   =\frac{1}{1000q_k^2\ceil*{C_k\eta^{-2}}}
   \leq\frac{1}{1000}.
\end{align}
Thus the ideal tester has failure probability
at most $0.011$.

Finally, compare the physical and ideal experiments.
The Taylor remainder and $\Lambda t\leq1/2$ give
\begin{align}
  \Abs*{e^{t\calL}-\Id_{\calH}-t\calL}_\diamond
  &\leq e^{\Lambda t}-1-\Lambda t\\
  &\leq\frac12\Lambda^2t^2e^{\Lambda t}\\
  &\leq\frac e2\Lambda^2t^2.
\end{align}
Since $\Pi J\rbra*{\Id_{\calH}}\Pi=0$ and
$\Pi J\rbra*{\calL}\Pi=C$, compression by $\Pi$ and
the normalized Choi representation give
\begin{align}
  \Abs*{B_t-tC}_1
  &=\Abs*{\Pi J\rbra*{
      e^{t\calL}-\Id_{\calH}-t\calL}\Pi}_1\\
  &\leq\Abs*{e^{t\calL}-\Id_{\calH}-t\calL}_\diamond\\
  &\leq\frac e2\Lambda^2t^2.
\end{align}
The direct-sum structure of the flagged outputs then gives
\begin{align}
  \frac12\Abs*{\Omega_t-\Omega_t^{\mathrm{ideal}}}_1
  &=\frac12\rbra*{
      \abs*{\Tr\rbra*{B_t-tC}}+\Abs*{B_t-tC}_1}\\
  &\leq\Abs*{B_t-tC}_1.
\end{align}
Telescoping the tensor product and using $Mt=T$ yields
\begin{align}
  \frac12\Abs*{\Omega_t^{\otimes M}
      -\rbra*{\Omega_t^{\mathrm{ideal}}}^{\otimes M}}_1
  &\leq\frac M2\Abs*{\Omega_t-\Omega_t^{\mathrm{ideal}}}_1\\
  &\leq\frac e2M\Lambda^2t^2\\
  &=\frac{e\Lambda^2T^2}{2M}\\
  &\leq\frac1{200}.
\end{align}
The last inequality follows from $M\geq100e\Lambda^2T^2$.
Both experiments apply the same estimator and decision
rule to their flagged outputs.
By contractivity of trace distance under this processing,
their probabilities of returning either answer differ
by at most $1/200$.
Thus the physical tester has failure probability at most
$0.011+1/200=0.016<1/3$ and succeeds with
probability greater than $2/3$. The total time satisfies
\[
  T
  =\frac{1000q_k^2}{\varepsilon_2}\ceil*{C_k\eta^{-2}}
  =O\rbra*{\frac{1}{\varepsilon_2\eta^2}}
  =O\rbra*{\frac{\varepsilon_2}{\delta^2}}.
\]
This proves the theorem.
\end{proof}

\subsection{Time lower bound}
\label{subsec:lo-time-lower}

The following theorem gives a time lower bound that
matches the upper bound in \cref{thm:lo-time}
for fixed $k$.
The lower bound already holds for generators with
zero Hamiltonian part and dissipators of locality
at most one.
We use the dephasing family from the learning-time lower bound
in~\cite[Theorem~4.2 and Corollary~4.3]{AradEtAl2026},
and compare full Poisson records to obtain the two-threshold testing bound.

\begin{theorem}[Time lower bound]
\label{thm:lo-time-lower}
Fix a locality bound $k\geq1$.
Let $n\geq1$ be an integer, let $\Lambda>0$ be a strength bound,
and let $0\leq\varepsilon_1<\varepsilon_2\leq\Lambda/\sqrt2$
be thresholds.
Let $\calL$ range over time-independent $n$-qubit
Lindblad generators with $\Abs{\calL}_\diamond\leq\Lambda$
whose canonical dissipators $\calD$ have locality
at most $k$.
Any tester that distinguishes
$\Abs{\calD}_F\leq\varepsilon_1$ from
$\Abs{\calD}_F\geq\varepsilon_2$
with success probability at least $2/3$
requires total time
\[
  \Omega\rbra*{
    \frac{\varepsilon_2}
         {\rbra*{\varepsilon_2-\varepsilon_1}^2}}.
\]
\end{theorem}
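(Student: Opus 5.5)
The plan is to reduce testing to distinguishing two single-qubit dephasing generators and then bound the distinguishability of the simulating Poisson records. For $\gamma\geq0$ let $\calL_\gamma=\gamma\rbra{Z_1\cdot Z_1-\Id}$, i.e. one traceless jump operator $L=\sqrt\gamma Z_1$, with zero Hamiltonian. Its locality is one, so it is admissible for every $k\geq1$. By \cref{lem:rate-state-identity}, $\Gamma=\gamma$ and $\sigma=\opket{Z_1}\opbra{Z_1}$ is pure. Its reference marginal is maximally mixed since $Z_1^2=I$, so $h\rbra{\sigma}=1$ and $\Abs{\calD_\gamma}_F=\sqrt2\,\gamma$. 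Set $\gamma_j=\varepsilon_j/\sqrt2$, so $\calL_{\gamma_1}$ is a YES instance and $\calL_{\gamma_2}$ a NO instance.

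Next check the strength bound. The channel $\calL_\gamma$ equals $2\gamma$ times the difference of two channels, so $\Abs{\calL_\gamma}_\diamond\leq 2\gamma$. A Bell-state computation gives exactly $\Abs{J\rbra{\calL_\gamma}}_1=2\gamma$, and $\Abs{\calL_\gamma}_\diamond=2\gamma=\sqrt2\varepsilon_j\leq\Lambda$ under the hypothesis $\varepsilon_2\leq\Lambda/\sqrt2$. This is exactly why the threshold range is as stated.

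For the simulation, $e^{t\calL_\gamma}\rbra{\rho}=\sum_{j\geq0}\Pr\sbra{\mathrm{Pois}\rbra{2\gamma t}=j}\,Z_1^j\rho Z_1^j$. Equivalently, $Z_1$ is applied at the arrival times of a rate-$2\gamma$ Poisson process restricted to the call's interval. Any adaptive tester with calls of durations $t_1,t_2,\dots$ summing to at most $T$ can therefore be simulated exactly as follows. Sample a single Poisson process of rate $2\gamma$ on $\sbra{0,T}$ in advance. Assign consecutive disjoint subintervals of lengths $t_j$ to the calls; adaptivity is fine because durations are chosen sequentially and the remaining interval is unused. Implement call $j$ as $Z_1^{\text{parity of arrivals in its subinterval}}$. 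Independent increments make this exact even for adaptive durations: the increments on the remaining future interval are independent of the past given any stopping rule. The whole tester then becomes a post-processing of the Poisson record on $\sbra{0,T}$. By data processing, its bias is at most the total variation distance between the two Poisson process laws. That distance is bounded by the TV distance of the sufficient statistic, the count $N_T\sim\mathrm{Pois}\rbra{2\gamma T}$, since arrival times given the count are uniform for both rates.

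The main computation is bounding this TV distance via Hellinger affinity. We have $\affh{\mathrm{Pois}\rbra{\lambda_1}}{\mathrm{Pois}\rbra{\lambda_2}}=\exp\rbra{-\tfrac12\rbra{\sqrt{\lambda_2}-\sqrt{\lambda_1}}^2}$, and $\dtv{p}{q}\leq\sqrt{1-\mathrm A^2}$. With $\lambda_j=\sqrt2\varepsilon_jT$, the exponent is $\tfrac{T}{\sqrt2}\rbra{\sqrt{\varepsilon_2}-\sqrt{\varepsilon_1}}^2=\tfrac{T}{\sqrt2}\,\delta^2/\rbra{\sqrt{\varepsilon_2}+\sqrt{\varepsilon_1}}^2\leq T\delta^2/\rbra{\sqrt2\varepsilon_2}$. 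Success probability $2/3$ on both inputs forces TV at least $1/3$, so $1-e^{-T\delta^2/(\sqrt2\varepsilon_2)}\geq 1/9$, giving $T\geq c\,\varepsilon_2/\delta^2$ for an absolute constant $c$. The step I expect to need the most care is the rigorous justification of the adaptive simulation: tester randomness, ancillas and adaptively chosen durations must all be measurable functions of the shared record. A cleaner alternative I would use is to discretize time so that the record is a countable object, and invoke the strong Markov/independent-increments property at each call.
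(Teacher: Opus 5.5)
Your proposal is correct up to a constant-factor slip, and it follows the paper's proof essentially step for step. It uses the same dephasing family $\gamma\rbra{Z_1\cdot Z_1-\Id}$ at $\gamma_j=\varepsilon_j/\sqrt2$, and the same simulation of adaptive calls by consecutive stopping-time intervals of one Poisson record. The paper additionally notes that parallel calls commute and can be placed sequentially. The reduction to the count via uniform order statistics and the Hellinger-affinity bound are also the same.

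The slip is in the simulation rate. Since $e^{t\calL_\gamma}=e^{-\gamma t}\exp\rbra{\gamma t\,Z_1\cdot Z_1}$, the phase flips arrive at rate $\gamma$, not $2\gamma$. It is the coherences that decay at rate $2\gamma$, i.e.\ $\mathbb E\sbra{(-1)^{N_t}}=e^{-2\gamma t}$. As written, your record therefore simulates $\calL_{2\gamma}$, which at $\gamma_2$ can violate the strength bound. Relatedly, $\calL_\gamma$ is $\gamma$, not $2\gamma$, times a difference of two unitary channels; that is what gives $\Abs{\calL_\gamma}_\diamond\leq2\gamma$.

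With the correct $\lambda_j=\varepsilon_jT/\sqrt2$, the squared affinity is $\exp\rbra{-\tfrac{T}{\sqrt2}\rbra{\sqrt{\varepsilon_2}-\sqrt{\varepsilon_1}}^2}$. Your final inequality $1-e^{-T\delta^2/(\sqrt2\varepsilon_2)}\geq1/9$ is then exactly right, because your intermediate step used the affinity in place of its square, which compensated. It yields the paper's bound $T\geq\sqrt2\log\rbra{9/8}\,\varepsilon_2/\delta^2$.
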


\begin{proof}
Suppose the tester uses total time at most $T$.
Write $\delta=\varepsilon_2-\varepsilon_1$.
Let $Z_1$ denote the Pauli $Z$ operator on the first qubit, tensored
with the identity on the remaining qubits. For $\Gamma\geq0$, consider
the single-qubit dephasing family
\[
  \calL_\Gamma\rbra*{\rho}=\Gamma\rbra*{Z_1\rho Z_1-\rho}.
\]
The generator has zero Hamiltonian part and the traceless
jump operator $\sqrt{\Gamma}\,Z_1$.
Thus $\calD_\Gamma=\calL_\Gamma$, and its locality
is at most one.
Its Choi operator is
\[
  J\rbra{\calL_\Gamma}
  =\Gamma\rbra*{
    \opket{Z_1}\opbra{Z_1}
    -\opket{I}\opbra{I}}.
\]
The two Pauli vectors are orthonormal, so
\cref{fact:choi-frobenius} gives
\[
  \Abs{\calD_\Gamma}_F
  =\Abs{J\rbra{\calL_\Gamma}}_2
  =\sqrt2\,\Gamma.
\]
Moreover,
\[
  2\Gamma
  =\Abs{J\rbra{\calL_\Gamma}}_1
  \leq\Abs{\calL_\Gamma}_\diamond
  \leq2\Gamma,
\]
where the last inequality follows from the triangle
inequality and the unit diamond norm of each unitary channel.
Set
\[
  \Gamma_0=\frac{\varepsilon_1}{\sqrt2},\qquad
  \Gamma_1=\frac{\varepsilon_2}{\sqrt2}.
\]
The generators $\calL_{\Gamma_0}$ and $\calL_{\Gamma_1}$
lie at the two promise thresholds and satisfy the
strength bound, since
$2\Gamma_0\leq2\Gamma_1=\sqrt2\,\varepsilon_2\leq\Lambda$.
The case $\Gamma_0=0$ is allowed.

A call of duration $t$ can be simulated by a rate-$\Gamma$ Poisson
process of phase flips, because
\[
  e^{t\calL_\Gamma}\rbra*{\rho}
    =e^{-\Gamma t}\sum_{r=0}^{\infty}
       \frac{\rbra{\Gamma t}^r}{r!}Z_1^r\rho Z_1^r.
\]
For a tester with total time at most $T$, use a Poisson record on
a cumulative time interval $\sbra{0,T}$. Each call consumes
its chosen interval and applies the flips recorded there.
Given the consumed record and the tester's classical
history, subsequent measurement outcomes can be sampled
using auxiliary randomness independent of the Poisson
process, with conditional probabilities that do not
depend on $\Gamma$.
Each call duration is chosen using the previous outcomes
and internal random choices.
The cumulative call boundaries are therefore stopping
times for the Poisson process together with this
independent randomness.
The strong Markov property gives the correct increment
for each adaptive call.
Parallel calls act on distinct
registers, so their channels commute even on entangled inputs.
Placing these calls sequentially on the cumulative time axis
preserves both their joint channel and the sum of their durations.

Consequently, a stochastic map independent of $\Gamma$
transforms the full Poisson record into the tester's output.
By data processing, the total variation distance between
the two output distributions is at most that between
the corresponding Poisson record distributions.

For $i\in\cbra{0,1}$, let $p_i$ be the probability
mass function of $\operatorname{Pois}\rbra{\Gamma_iT}$.
Conditional on an event count $r$ of positive probability,
the ordered event times have the distribution of the order
statistics of $r$ independent uniform points in $\sbra{0,T}$.
This conditional distribution is independent of $\Gamma$.
When $\Gamma_0=0$, the record is empty almost surely.
The count is a function of the record, and the record
can be generated from the count using this common
conditional distribution.
Data processing in both directions therefore shows that
the total variation distance between the record distributions
equals $\dtv{p_0}{p_1}$.
Moreover,
\begin{align}
  \affh{p_0}{p_1}
  &=\sum_{r=0}^{\infty}
       \sqrt{e^{-\Gamma_0T}\frac{\rbra{\Gamma_0T}^r}{r!}
             e^{-\Gamma_1T}\frac{\rbra{\Gamma_1T}^r}{r!}}\\
  &=e^{-\rbra{\Gamma_0+\Gamma_1}T/2}
    \sum_{r=0}^{\infty}
    \frac{\rbra{T\sqrt{\Gamma_0\Gamma_1}}^r}{r!}\\
  &=\exp\rbra*{-\frac T2
       \rbra{\sqrt{\Gamma_1}-\sqrt{\Gamma_0}}^2}.
\end{align}
Factoring the difference of the two probability mass
functions using square roots and applying Cauchy--Schwarz gives
\begin{align}
  \dtv{p_0}{p_1}
  &\leq\frac12\sqrt{
      \rbra{2-2\affh{p_0}{p_1}}
      \rbra{2+2\affh{p_0}{p_1}}}\\
  &=\sqrt{1-\affh{p_0}{p_1}^2}.
\end{align}
The probability of returning $\mathrm{YES}$ is at least
$2/3$ under $\calL_{\Gamma_0}$ and at most $1/3$ under
$\calL_{\Gamma_1}$.
The output distributions therefore differ by at least $1/3$
on this event, so data processing gives $\dtv{p_0}{p_1}\geq1/3$.
Combining the bounds gives
\[
  e^{-T\rbra{\sqrt{\Gamma_1}-\sqrt{\Gamma_0}}^2}
  =\affh{p_0}{p_1}^2\leq\frac89.
\]
Taking logarithms and substituting the two rates yields
\begin{align}
  T
  &\geq\frac{\log\rbra{9/8}}
       {\rbra{\sqrt{\Gamma_1}-\sqrt{\Gamma_0}}^2}\\
  &=\frac{\sqrt2\log\rbra{9/8}}
       {\rbra{\sqrt{\varepsilon_2}-\sqrt{\varepsilon_1}}^2}\\
  &=\frac{\sqrt2\log\rbra{9/8}
       \rbra{\sqrt{\varepsilon_2}+\sqrt{\varepsilon_1}}^2}
       {\delta^2}\\
  &\geq\sqrt2\log\rbra{9/8}
       \frac{\varepsilon_2}{\delta^2},
\end{align}
which proves the theorem.
\end{proof}

\paragraph{Acknowledgment}
J.~B. is supported by the Quantum Advantage Turbo Charger (QATCH)
programme, funded by UKRI under Grant No.~UKRI4257, and by the
Quantum Advantage Pathfinder (QAP) programme, funded by EPSRC under
Grant No.~EP/X026167/1.
Y.~C. is supported in part by the Air Force Office of Scientific Research under grant agreement FA9550-21-1-0392, by the Stanford Graduate Fellowship in Science \& Engineering, and by the National Science Foundation Graduate Research Fellowship Program under Grant No. DGE-2146755.
Z.~S. is supported by the European Research Council (ERC) grant under Grant No.~GIFNEQ 101163938.

\paragraph{AI use statement}
The core ideas of this work originated with the authors. The main technical methods were developed through iterative exchanges between the authors and OpenAI’s GPT-5.6 Sol and GPT-6 Astra, accessed through Codex. These models also assisted with language editing, manuscript organization, notation consistency, literature searches, and the review of mathematical arguments and derivations. The authors reviewed and revised the AI-assisted content and independently verified the mathematical results and references. The authors take full responsibility for the accuracy and integrity of the manuscript.

\bibliographystyle{alphaurl}
\phantomsection
\addcontentsline{toc}{section}{References}
\bibliography{main}

\end{document}